\newtheorem{remark}{Remark}
\newtheorem{theorem}{Theorem}
\newtheorem{lemma}{Lemma}
\newtheorem{conjecture}{Conjecture}
\newtheorem{definition}{Definition}
\newtheorem{corollary}{Corollary}
\newcommand{\RandLOCAL}{\mathsf{RandLOCAL}}
\newcommand{\DetLOCAL}{\mathsf{DetLOCAL}}
\newcommand{\ID}{\operatorname{ID}}
\newcommand{\LOCAL}{\mathsf{LOCAL}}
\newcommand{\SLOCAL}{\mathsf{SLOCAL}}
\newcommand{\ignore}[1]{}
\newcommand{\Prob}{\operatorname{Pr}}
\newcommand{\bydef}{\stackrel{\rm def}{=}}
\newcommand{\ceil}[1]{\left\lceil #1 \right\rceil}
\newcommand{\floor}[1]{\lfloor #1 \rfloor}
\newcommand{\f}[2]{\frac{#1}{#2}}
\newcommand{\poly}{{\operatorname{poly}}}
\newcommand{\polylog}{{\operatorname{polylog}}}
\newcommand{\bottom}{\perp}
\newcommand{\Lovasz}{Lov\'{a}sz}
\newcommand{\Goos}{G\"{o}\"{o}s}
\newcommand{\LabelIn}{\Sigma_{\operatorname{in}}}
\newcommand{\LabelOut}{\Sigma_{\operatorname{out}}}
\newcommand{\Lpump}{ {\ell_{\operatorname{pump}}} }
\newcommand{\type}{\textsf{Type}}
\newcommand{\class}{\textsf{Class}}
\newcommand{\replace}{\textsf{Replace}}
\newcommand{\extend}{\textsf{Extend}}
\newcommand{\labeling}{\textsf{Label}}
\newcommand{\pump}{\textsf{Pump}}
\newcommand{\cut}{\textsf{Duplicate}\mbox{-}\textsf{Cut}}
\newcommand{\GG}{\mathcal{G}}
\newcommand{\HH}{\mathcal{H}}
\newcommand{\QQ}{\mathcal{Q}}
\newcommand{\TT}{\mathcal{T}}
\newcommand{\LL}{\mathcal{L}}
\newcommand{\XX}{\mathcal{X}}
\newcommand{\YY}{\mathcal{Y}}
\newcommand{\ZZ}{\mathcal{Z}}
\newcommand{\Pset}{\mathscr{P}}
\newcommand{\Hset}{\mathscr{H}}
\newcommand{\Tset}{\mathscr{T}}
\newcommand{\Gd}[1]{{G}_{#1}}
\newcommand{\simm}{\overset{\star}{\sim}}
\title{A Time Hierarchy Theorem for the LOCAL Model\thanks{Supported by NSF Grants CCF-1514383 and CCF-1637546.}}
\author{Yi-Jun Chang\\ University of Michigan \\ \footnotesize \texttt{cyijun@umich.edu}\and
Seth Pettie\\ University of Michigan \\ \footnotesize \texttt{pettie@umich.edu}}
\begin{document}
\date{}
\maketitle
\thispagestyle{empty}
\setcounter{page}{0}

\begin{abstract}
The celebrated \emph{Time Hierarchy Theorem} for Turing machines states, informally,
that more problems can be solved given more time.  The extent to which a time hierarchy-type theorem
holds in the classic distributed $\LOCAL$ model has been open for many years.  In particular, it is consistent
with previous results that all natural problems in the $\LOCAL$ model can be classified according
to a small \emph{constant} number of complexities, such as $O(1),O(\log^* n), O(\log n), 2^{O(\sqrt{\log n})}$, etc.

In this paper we establish the first time hierarchy theorem for the $\LOCAL$ model and 
prove that several \emph{gaps} exist in the $\LOCAL$ time hierarchy.  Our main results are as follows:
\begin{itemize}
\item We define an infinite set of simple coloring problems called \emph{Hierarchical $2\f{1}{2}$-Coloring}.  A correctly
colored graph can be confirmed by simply checking the neighborhood of each vertex, so this problem fits
into the class of \emph{locally checkable labeling} (LCL) problems.  However, the complexity of the $k$-level
Hierarchical $2\f{1}{2}$-Coloring problem is $\Theta(n^{1/k})$, for $k\in\mathbb{Z}^+$.
The upper and lower bounds hold for both general graphs and trees, and for both randomized and deterministic algorithms.

\item Consider any LCL problem on \emph{bounded degree trees}.  We prove an automatic-speedup theorem
that states that any \emph{randomized} $n^{o(1)}$-time algorithm solving the LCL
can be transformed into a \emph{deterministic} $O(\log n)$-time algorithm.
Together with a previous result~\cite{ChangKP16},
this establishes that on trees, there are no natural deterministic complexities in the ranges
$\omega(\log^* n)$---$o(\log n)$
or $\omega(\log n)$---$n^{o(1)}$.

\item We expose a gap in the \emph{randomized} time hierarchy on general graphs.  Roughly speaking,
any randomized algorithm that solves an LCL problem in sublogarithmic time can be 
sped up to run in $O(T_{LLL})$ time, which is the complexity of the distributed \Lovasz{} local lemma problem, currently known to be $\Omega(\log\log n)$ and $O(\log n)$.
\end{itemize}

Finally, we revisit Naor and Stockmeyer's characterization of $O(1)$-time $\LOCAL$ algorithms for LCL problems 
(as \emph{order-invariant w.r.t.~vertex IDs}) and calculate the complexity gaps that are directly implied 
by their proof.   For $n$-rings we see a $\omega(1)$---$o(\log^* n)$ complexity gap,
for $(\sqrt{n}\times \sqrt{n})$-tori an $\omega(1)$---$o(\sqrt{\log^* n})$ gap, 
and for bounded degree trees and general graphs, an $\omega(1)$---$o(\log(\log^* n))$ complexity gap.
\end{abstract}

\newpage

\section{Introduction}\label{sect:intro}

The goal of this paper is to understand the spectrum of natural problem complexities that can exist in the
$\LOCAL$ model~\cite{Linial92,Peleg00}
of distributed computation, and to quantify the value of randomness in this model.
Whereas the time hierarchy of Turing machines is known\footnote{For \underline{\emph{any}} time-constructible
function $T(n)$, there is a problem solvable in $O(T(n))$ but not $o(T(n))$ time~\cite{HartmanisS65,Furer84}.}
to be very ``smooth'',
recent work~\cite{ChangKP16,Brandt+17} has exhibited strange
\emph{gaps} in the $\LOCAL$ complexity hierarchy.
Indeed, prior to this work it was not even known
if the $\LOCAL$ model could support more than a small \emph{constant} number of problem complexities.
Before surveying prior work in this area, let us formally define the deterministic and randomized variants of the $\LOCAL$ model,
and the class of \emph{locally checkable labeling} (LCL) problems, which are intuitively those graph problems
that can be computed locally in \emph{nondeterministic constant time}.

In both the $\DetLOCAL$ and $\RandLOCAL$ models the input graph $G=(V,E)$ and communications network are identical.
Each vertex hosts a processor and all vertices run the same algorithm.  Each edge supports communication in both directions.
The computation proceeds in synchronized {\em rounds}.  In a round, each processor
performs some computation and sends a message along each incident edge, which is delivered before the beginning
of the next round.  Each vertex $v$ is initially aware of its degree $\deg(v)$, a port numbering mapping its incident edges
to $\{1,\ldots,\deg(v)\}$, certain global parameters such as $n \bydef |V|$,
$\Delta  \bydef \max_{v\in V} \deg(v)$, and possibly other information. The assumption that global parameters are common knowledge can sometimes be removed; see Korman, Sereni, and Viennot~\cite{KormanSV13}.
The only measure of efficiency is the number of rounds.  All local computation is free and the size of messages is unbounded.
\emph{Henceforth ``time'' refers to the number of rounds.}  The differences between $\DetLOCAL$ and $\RandLOCAL$ are as follows.

\begin{description}
\item[$\DetLOCAL$:] In order to avoid trivial impossibilities, all vertices are assumed to hold unique $\Theta(\log n)$-bit IDs.
Except for the information about $\deg(v), \ID(v)$, and the port numbering,
the initial state of $v$ is identical to every other vertex.
The algorithm executed at each vertex is deterministic.

\item[$\RandLOCAL$:] In this model each vertex may locally generate an unbounded number of independent truly random bits.
(There are no globally shared random bits.)  Except for the information about $\deg(v)$ and its port numbering,
the initial state of $v$ is identical to
every other vertex.  Algorithms in this model operate for a specified number of rounds and have some probability of {\em failure},
the definition of which is problem specific.
We set the maximum tolerable global probability of failure to be $1/n$.
\end{description}

Clearly $\RandLOCAL$ algorithms can generate distinct IDs (w.h.p.) if desired.
Observe that the role of ``$n$'' is different in the two $\LOCAL$ models:
in $\DetLOCAL$ it affects the ID length whereas in $\RandLOCAL$ it affects the failure probability.

\paragraph{LCL Problems.} Naor and Stockmeyer~\cite{NaorS95} introduced \emph{locally checkable labelings}
to formalize a large class of natural graph problems.
Fix a class $\mathcal{G}$ of possible input graphs and let $\Delta$ be the maximum degree in any such graph.
Formally, an LCL problem $\mathcal{P}$ for $\mathcal{G}$ has a radius $r$, constant size input and output alphabets $\LabelIn,\LabelOut$,
and a set $\mathcal{C}$ of acceptable configurations. All of these parameters may depend on $\Delta$.
Each $C\in \mathcal{C}$ is a graph
centered at a specific vertex, in which each vertex has
 a degree, a port numbering,
and
two labels from $\LabelIn$ and $\LabelOut$.
Given the input graph $G(V,E,\phi_{\operatorname{in}})$ where $\phi_{\operatorname{in}} : V(G) \rightarrow \LabelIn$,
an acceptable output is any function $\phi_{\operatorname{out}}  : V(G) \rightarrow \LabelOut$ such that
for each $v\in V(G)$, the subgraph induced by $N^r(v)$
(denoting the $r$-neighborhood of $v$ together with information stored there:
vertex degrees,
port numberings,
input labels, and output labels)
is isomorphic to a member of $\mathcal{C}$.
An LCL can be described explicitly by enumerating a finite number of acceptable configurations.
LCLs can be generalized to graph classes with unbounded degrees.

Many natural symmetry breaking problems can be expressed as LCLs, such as
MIS, maximal matching, $(\alpha,\beta)$-ruling sets, $(\Delta+1)$-vertex coloring, and sinkless orientation.

\subsection{The Complexity Landscape of $\LOCAL$}

The complexity landscape for LCL problems is defined by ``natural'' complexities (sharp lower and upper bounds
for specific LCL problems) and provably empty \emph{gaps} in the complexity spectrum.  We now have an almost perfect understanding
of the complexity landscape for two simple topologies: $n$-rings~\cite{ColeV86,Linial92,Naor91,NaorS95,ChangKP16}
and $(\sqrt{n}\times \sqrt{n})$-tori~\cite{NaorS95,ChangKP16,Brandt+17}.  See Figure~\ref{fig:complexity}, Top and Middle.
On the $n$-ring, the only possible problem complexities are
$O(1)$, $\Theta(\log^* n)$ (e.g., 3-coloring), and $\Theta(n)$ (e.g., 2-coloring, if bipartite).
The gaps between these three complexities are obtained by \emph{automatic speedup theorems}.
Naor and Stockmeyer's~\cite{NaorS95} characterization of $O(1)$-time LCL algorithms actually implies
that any $o(\log^* n)$-time algorithm on the $n$-ring can be transformed to run in $O(1)$ time; see Appendix~\ref{sect:NaorStockmeyer}.
Chang, Kopelowitz, and Pettie~\cite{ChangKP16}
showed that any $o(n)$-time $\RandLOCAL$
algorithm can be made to run in $O(\log^* n)$ time in $\DetLOCAL$.

The situation with $(\sqrt{n}\times \sqrt{n})$-tori is almost identical~\cite{Brandt+17}:
every known LCL has complexity $O(1)$, $\Theta(\log^* n)$ (e.g., 4-coloring),
or $\Theta(\sqrt{n})$ (e.g., 3-coloring).
Whereas the gap implied by~\cite{NaorS95} is $\omega(1)$---$o(\log^* n)$
on the $n$-ring, it is $\omega(1)$---$o(\sqrt{\log^* n})$ on the $(\sqrt{n}\times \sqrt{n})$-torus; see Appendix~\ref{sect:NaorStockmeyer}.\footnote{J. Suomela (personal communication, 2017) has a proof that
there is an $\omega(1)$---$o(\log^\ast n)$ complexity gap for tori, at least for LCLs that do not use port
numberings or input labels.  The issues that arise with port numbering and input labels can be very subtle.}
Whereas randomness is known not to help in $n$-rings~\cite{NaorS95,ChangKP16},
it is an open question on tori~\cite{Brandt+17}.
Whereas the classification question is decidable on $n$-rings
(whether an LCL is $O(\log^* n)$ or $\Omega(n)$, for example)
this question is \emph{undecidable} on $(\sqrt{n}\times \sqrt{n})$-tori~\cite{NaorS95,Brandt+17}.

The gap theorems of Chang et al.~\cite{ChangKP16} show that no LCL problem on general graphs has
$\DetLOCAL$ complexity in the range $\omega(\log^* n)$---$o(\log_\Delta n)$, nor $\RandLOCAL$ complexity
in the range $\omega(\log^* n)$---$o(\log_\Delta\log n)$.
Some problems exhibit an exponential separation ($O(\log_\Delta\log n)$ vs. $\Omega(\log_\Delta n)$)
between their $\RandLOCAL$ and $\DetLOCAL$ complexities, such as $\Delta$-coloring degree-$\Delta$ trees~\cite{BrandtEtal16,ChangKP16} and sinkless orientation~\cite{BrandtEtal16,GhaffariS17}.
More generally, Chang et al.~\cite{ChangKP16} proved that the $\RandLOCAL$ complexity
of \emph{any} LCL problem on graphs of size $n$ is, holding $\Delta$ fixed,
at least its deterministic complexity on instances of size $\sqrt{\log n}$.
Thus, on the class of degree $\Delta=O(1)$ graphs there were only five known natural complexities:
$O(1)$, $\Theta(\log^* n)$, randomized $\Theta(\log\log n)$,
$\Theta(\log n)$, and $\Theta(n)$.
For non-constant $\Delta$, the $\RandLOCAL$ lower bounds of
Kuhn, Moscibroda, and Wattenhofer~\cite{KuhnMW16} imply
$\Omega(\min\{\f{\log\Delta}{\log\log\Delta}, \sqrt{\f{\log n}{\log\log n}}\})$ lower bounds on $O(1)$-approximate
vertex cover, MIS, and maximal matching.  This $\Omega(\log\Delta/\log\log\Delta)$
lower bound is only known to be tight for $O(1)$-approximate vertex cover~\cite{Bar-YehudaCS16};
the best maximal matching~\cite{BEPS16} and
MIS~\cite{Ghaffari16} algorithms' dependence on $\Delta$ is $\Omega(\log \Delta)$.
The $\Omega(\sqrt{\f{\log n}{\log\log n}})$ lower bound is not known to be tight for any problem, but is almost
tight for maximal matching on bounded arboricity graphs~\cite{BEPS16}, e.g., trees or planar graphs.

\begin{figure}
\centering
\scalebox{.4}{\includegraphics{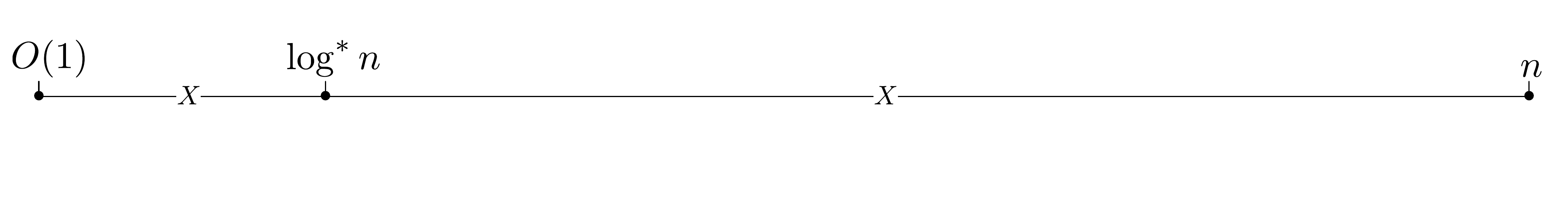}}

\scalebox{.4}{\includegraphics{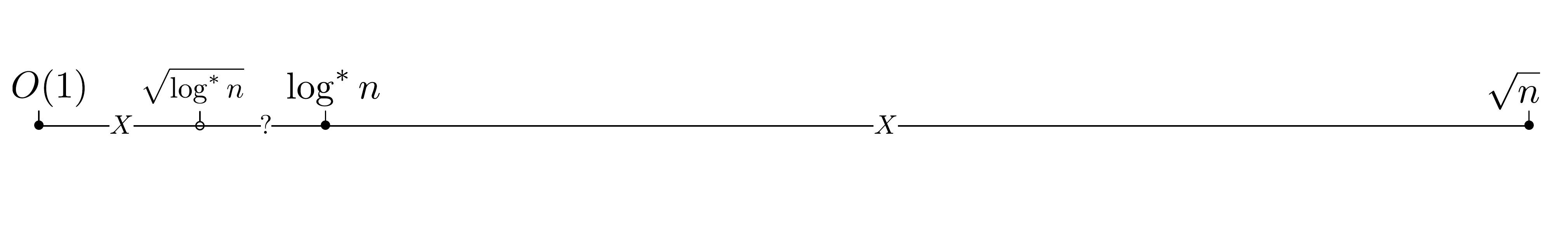}}

\scalebox{.4}{\includegraphics{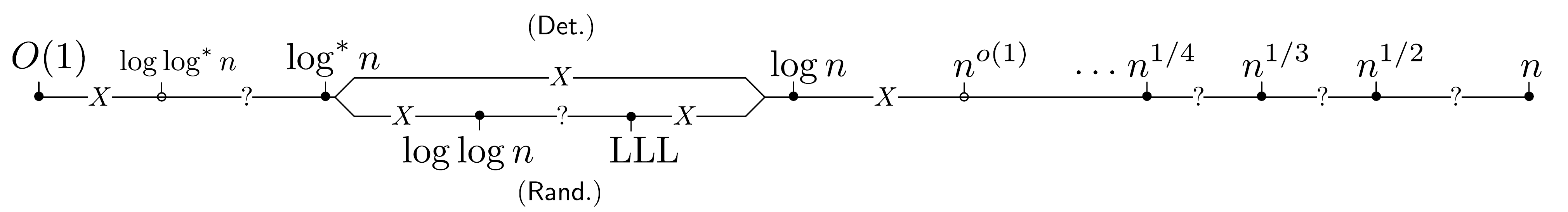}}
\caption{\label{fig:complexity}%
{\bf Top:} the complexity landscape for LCL problems on the $n$-ring.
{\bf Middle:} the complexity landscape for LCL problems on the $(\sqrt{n}\times \sqrt{n})$-torus.
Refer to \cite{NaorS95,ChangKP16,Brandt+17} and Appendix~\ref{sect:NaorStockmeyer}
for proofs of the complexity gaps (`$X$') on rings and tori.
{\bf Bottom:} the complexity landscape for LCL problems on bounded degree trees.
The $\omega(\log^* n)$---$o(\log n)$ $\DetLOCAL$ gap and $\omega(\log^* n)$---$o(\log\log n)$ $\RandLOCAL$ gap
are due to~\cite{ChangKP16}.
The $\omega(LLL)$---$o(\log n)$ and $\omega(\log n)$---$n^{o(1)}$ gaps are new.
Refer to Appendix~\ref{sect:NaorStockmeyer} for the $\omega(1)$---$o(\log(\log^* n))$ gap.
It is unknown whether
there are $\omega(n^{1/(k+1)})$---$o(n^{1/k})$ gaps.
With the exception of the $\omega(\log n)$---$n^{o(1)}$ gap, this is exactly the
known complexity landscape for general bounded degree graphs as well.}
\end{figure}

\paragraph{New Results.}
In this paper we study the $\LOCAL$ complexity landscape on more general topologies:
bounded degree trees and general graphs; see Figure~\ref{fig:complexity}, Bottom.
We establish a new complexity gap for trees, a \emph{potential} complexity gap for general graphs
(which depends on the distributed complexity of the constructive \Lovasz{} local lemma),
and a new infinite hierarchy of coloring problems with polynomial time complexities.
In more detail,

\begin{itemize}
\item We prove that on the class of degree bounded trees, no LCL has complexity in the range
$\omega(\log n)$---$n^{o(1)}$.  Specifically, any $n^{o(1)}$-time $\RandLOCAL$ algorithm can
be converted to an $O(\log n)$-time $\DetLOCAL$ algorithm.  Moreover, given a description of an LCL problem $\mathcal{P}$,
it is {\em decidable} whether the $\RandLOCAL$ complexity of $\mathcal{P}$ is
$n^{\Omega(1)}$ or the $\DetLOCAL$ complexity of $\mathcal{P}$ is $O(\log n)$.

It turns out that this gap is maximal:
we cannot extend it lower than $\omega(\log n)$~\cite{Linial92,ChangKP16}, nor higher than $n^{o(1)}$,
as we show below.

\item We define an infinite class of LCL problems called \emph{Hierarchical $2\f{1}{2}$-Coloring}.
We prove that $k$-level Hierarchical $2\f{1}{2}$-Coloring has complexity $\Theta(n^{1/k})$.
The upper bound holds in $\DetLOCAL$ on general graphs, and the lower bound holds
even on degree-3 trees in $\RandLOCAL$.  Thus, in contrast to rings and tori,
trees and general graphs support an \emph{infinite} number of natural problem complexities.

\item Suppose we have a $\RandLOCAL$ algorithm for general graphs running in
$C(\Delta) + o(\log_\Delta n)$ time.  We can transform this algorithm to run in
$O(C(\Delta)\cdot T_{LLL})$ time, where $T_{LLL}$ is the complexity
of a weak (i.e., ``easy'') version of the constructive \Lovasz{} local lemma.  At present,
$T_{LLL}$ is known to be $\Omega(\log\log n)$ and $O(\log n)$ on bounded degree graphs.
If it turns out that $T_{LLL}$ is sublogarithmic, this establishes a new $\RandLOCAL$ complexity
gap between $\omega(T_{LLL})$ and $o(\log n)$ on bounded degree graphs.
\end{itemize}

Finally, it seems to be folklore that Naor and Stockmeyer's work~\cite{NaorS95} implies \emph{some kind}
of complexity gap, which has been cited as $\omega(1)$---$o(\log^* n)$~\cite[p. 2]{Brandt+17}.
However, to our knowledge, no proof of this complexity gap has been published.
We show how Naor and Stockmeyer's approach implies complexity gaps that depend on the graph topology:
\begin{itemize}
\item[---] $\omega(1)$---$o(\log^* n)$ on rings.
\item[---] $\omega(1)$---$o(\sqrt{\log^* n})$ on tori.
\item[---] $\omega(1)$---$o(\log(\log^* n))$ on bounded degree trees and general graphs.
\end{itemize}
These gaps apply to the general class of LCL problems defined in this paper, in which vertices
initially hold an input label and possible port numbering. Port numberings are needed to
represent ``edge labeling'' problems (like maximal matching, edge coloring, and sinkless orientation)
unambiguously as vertex labelings.  They are not needed for native ``vertex labeling'' problems
like $(\Delta+1)$-coloring or MIS.  J. Suomela (personal communication) gave a proof that the
$\omega(1)$---$o(\log^\ast n)$ gap exists in tori as well, for the class of LCL problems without input labels
or port numbering; see Appendix~\ref{sect:NaorStockmeyer}.

\paragraph{Commentary.}
Our $\omega(\log n)$---$n^{o(1)}$ complexity gap for trees is interesting from both a technical
and greater philosophical perspective, due to the fact that many natural problems have been
``stuck'' at $n^{o(1)}$ complexities for decades.  Any $\DetLOCAL$ algorithm that relies
on network decompositions~\cite{PanconesiS96} currently takes $2^{O(\sqrt{\log n})}$ time.
If our automatic speedup theorem could be extended to the class of all graphs,
this would immediately yield $O(\log n)$-time algorithms for MIS, $(\Delta+1)$-coloring, and many other LCLs.

All the existing automatic speedup theorems are quite different in terms of proof techniques.
Naor and Stockmeyer's approach is based on Ramsey theory.
The speedup theorems of~\cite{ChangKP16,Brandt+17}
use the fact that $o(\log_\Delta n)$ algorithms on general graphs (and $o(n)$ algorithms
on $n$-rings and $o(\sqrt{n})$ algorithms on $(\sqrt{n}\times \sqrt{n})$-tori)
cannot ``see'' the whole graph, and can therefore
be efficiently tricked into thinking the graph has constant size.
Our $n^{o(1)} \rightarrow O(\log n)$ speedup theorem
introduces an entirely new set of techniques based on classic automata theory.
We show that any LCL
problem gives rise to a regular language that represents partial labelings of the tree that
can be consistently extended to total lablelings.
By applying the pumping lemma for regular languages,
we can ``pump'' the input tree into a much larger tree that behaves similar to the original tree.
The advantage of creating a larger \underline{imaginary} tree is that each vertex can
(mentally) simulate the behavior of an $n^{o(1)}$-time algorithm on the \emph{imaginary} tree,
merely by inspecting its $O(\log n)$-neighborhood in the \emph{actual} tree.
Moreover, because the pumping operation preserves properties of the original tree, a labeling
of the imaginary tree can be efficiently converted to a labeling of the original tree.

\subsection{Related Results}

There are several $\LOCAL$ lower bounds for natural problems that do not quite fit in the LCL framework.
\Goos, Hirvonen, and Suomela~\cite{GoosHS15} proved a sharp $\Omega(\Delta)$
lower bound for fractional maximal matching and \Goos{} and Suomela proved $\Omega(\log n)$ lower bounds on $(1+\delta)$-approximating the minimum vertex cover, $\delta>0$, even on degree-3 graphs.
See~\cite{KuhnW06,HefetzKMS16} for lower bounds on coloring problems
that apply to constrained algorithms or a constrained version of the $\LOCAL$ model.

In recent years there have been efforts to develop a `complexity
theory' of locality.  The gap theorems of~\cite{NaorS95,ChangKP16,Brandt+17} have already been discussed.
Suomela surveys~\cite{Suomela13} the class of problems that can be computed with $O(1)$ time.
Fraigniaud et al.~\cite{FraigniaudKP13} defined a distributed model for locally
deciding graph properties; see~\cite{FeuilloleyF16} for a survey of variants of the local distributed decision model.
\Goos{} and Suomela~\cite{GoosS16} considered the \emph{proof} complexity (measured in terms of bits-per-vertex label) of locally verifying graph properties.  Very recently, Ghaffari, Kuhn, and Maus~\cite{GhaffariKM16} defined
the $\SLOCAL$ model (sequential $\LOCAL$) and exhibited several \emph{complete} problems for this model,
inasmuch as a $\polylog(n)$-time $\DetLOCAL$ algorithm for any complete problem implies a $\polylog(n)$
$\DetLOCAL$ algorithm for every $\polylog(n)$-time problem in $\SLOCAL$.\footnote{The class of $O(1)$-time $\SLOCAL$ algorithms is, roughly speaking, those graph labelings
that can be computed sequentially, by a truly local algorithm.
This class is a \emph{strict} subset of LCLs.}

\subsection{Organization}

In Section~\ref{sec.poly} we introduce Hierarchical $2\f{1}{2}$-Coloring and prove that the $k$-level variant
of this problem has complexity $\Theta(n^{1/k})$.
In Section~\ref{sec.tree-gap} we prove the $n^{o(1)}\rightarrow O(\log n)$ speedup theorem for bounded degree trees.
In Section~\ref{sect:LLL} we discuss the constructive \Lovasz{} local lemma and prove
the $o(\log_\Delta n) \rightarrow T_{LLL}$ randomized speedup theorem.
In Section~\ref{sect:conclusion} we discuss open problems and outstanding conjectures.
Appendix~\ref{sect:NaorStockmeyer} reviews Naor and Stockmeyer's characterization
of $O(1)$-time LCL algorithms, using Ramsey theory, and explains how it implies gaps
in the complexity hierarchy that depend on graph topology.

\section{An Infinitude of Complexities: Hierarchical $2\f{1}{2}$-Coloring\label{sec.poly}}

\newcommand{\bl}{\Venus}
\newcommand{\wh}{\Mars}
\newcommand{\none}{\Mercury}
\newcommand{\exempt}{\Saturn}

In this section we give an infinite sequence $(\mathcal{P}_k)_{k\in \mathbb{Z}^+}$ of LCL problems, where the complexity
of $\mathcal{P}_k$ is precisely $\Theta(n^{1/k})$.\footnote{Brandt et al.~\cite[Appendix A.3]{Brandt+17} described an LCL that has
complexity $\Theta(\sqrt{n})$ on general graphs, but not trees.
It may be possible to generalize their LCL to any complexity of the form $\Theta(n^{1/k})$.}
The upper bound holds on general graphs in $\DetLOCAL$
and the lower bound holds in $\RandLOCAL$, even on degree-3 trees.
Informally, the task of $\mathcal{P}_k$ is to 2-color (with $\{\bl,\wh\}$) certain specific subgraphs of the input graph.
Some vertices are \emph{exempt} from being colored (in which case they are labeled $\exempt$), and in addition,
it is possible to decline to 2-color certain subgraphs, by labeling them $\none$.

There are no input labels.  The output label set is $\LabelOut = \{\bl,\wh,\none,\exempt\}$.\footnote{Venus, Mars, Mercury, Saturn.}
The problem $\mathcal{P}_k$ is an LCL defined by the following rules.
\begin{description}
\item[Levels.] Subsequent rules depend on the {\em levels} of vertices.  Let $V_i$, $i\in\{1,\ldots,k+1\}$, be the
set of vertices on level $i$, defined as follows.
\begin{align*}
G_1 &= G\\
G_i &= G_{i-1} - V_{i-1}, 					&& \mbox{ for } i\in [2,k+1]\\
V_i &= \{v \in V(G_i) \;|\; \deg_{G_i}(v) \le 2\},  && \mbox{ for } i\in [1,k]\\
V_{k+1} &= V(G_{k+1})					 && \mbox{ (the remaining vertices)}
\end{align*}
Remember that vertices know their degrees, so a vertex in $V_1$ deduces this with 0 rounds of communication.  In general
the level of $v$ can be calculated from information in $N^k(v)$.

\item[Exemption.] A vertex labeled $\exempt$ is called \emph{exempt}.
No $V_1$ vertex is labeled $\exempt$; all $V_{k+1}$ vertices are labeled $\exempt$.
Any $V_i$ vertex is labeled $\exempt$ iff it is adjacent to a lower level vertex labeled $\bl,\wh,$ or $\exempt$.
Define $D_i \subseteq V_i$ to be the set of level $i$ exempt vertices.

\item[Two-Coloring.] Vertices not covered by the exemption rule are labeled one of $\bl,\wh,\none$.
\begin{itemize}
\item[---] Any vertex in $V_i$, $i\in[1,k]$, labeled $\bl$ has no neighbor in $V_i$ labeled $\bl$ or $\none$.
\item[---] Any vertex in $V_i$, $i\in[1,k]$, labeled $\wh$ has no neighbor in $V_i$ labeled $\wh$ or $\none$.
\item[---] Any vertex in $V_k-D_k$ with exactly 0 or 1 neighbors in $V_k-D_k$ must be labeled $\bl$ or $\wh$.
\end{itemize}
\end{description}

\paragraph{Commentary.}
The Level rule states that the graph induced by $V_i$ consists of paths and cycles.
The Two-Coloring rule implies that each component of non-exempt vertices in the graph induced by $V_i - D_i$
must either
(a) be labeled uniformly by $\none$ or
(b) be properly 2-colored by $\{\bl,\wh\}$.
Every \emph{path} in  $V_k - D_k$ must be properly 2-colored, but \emph{cycles} in $V_k - D_k$ are allowed to be labeled
uniformly by $\none$.  This last provision is necessary to ensure that \emph{every} graph can be labeled according
to $\mathcal{P}_k$ since there is no guarantee that cycles in $V_k - D_k$ are bipartite.

\begin{remark}
As stated $\mathcal{P}_k$ is an LCL with an alphabet size of 4 and a radius $k$, since
the coloring rules refer to levels, which can be deduced by looking up to radius $k$.
On the other hand, we can also represent $\mathcal{P}_k$ as an LCL with radius 1
and alphabet size $4k$ by including a vertex's level in its output label.  A correct level assignment
can be verified within radius 1.  For example, level 1 vertices are those with degree at most 2, and
a vertex is labeled $i\in[2,k]$ iff all but at most 2 neighbors have levels less than $i$.
\end{remark}

\begin{theorem}
The $\DetLOCAL$ complexity of $\mathcal{P}_k$ on general graphs is $O(n^{1/k})$.
\end{theorem}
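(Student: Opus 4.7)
The plan is to design a $k$-phase deterministic algorithm in which each phase costs $O(n^{1/k})$ rounds (with $k$ treated as a constant), for a total of $O(n^{1/k})$ rounds. A preprocessing step computes every vertex's level in $O(k)$ rounds by iteratively peeling degree-$\leq 2$ vertices.

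In phase $i\in[1,k]$, every level-$i$ vertex $v$ first inspects its lower-level neighbors: if any has been labeled $\bl$, $\wh$, or $\exempt$ in a previous phase, $v$ labels itself $\exempt$. The remaining non-exempt vertices partition into paths and cycles in $G[V_i-D_i]$. Each such component $C$ is handled according to its length: if $|C|\leq n^{1/k}$, then $C$ is properly 2-colored (or labeled $\none$ if it is an odd cycle) in $O(n^{1/k})$ rounds; if $|C|>n^{1/k}$, then $C$ is labeled uniformly $\none$. The latter option is valid for $i<k$ and for cycle components when $i=k$, so the sole remaining worry is that a long path component might exist in $V_k-D_k$, where the LCL demands a proper 2-coloring. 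The structural lemma below rules this out. Finally, vertices in $V_{k+1}$ label themselves $\exempt$.

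The key technical claim is $|V_i-D_i|\leq 2^{i-1}\cdot n^{(k-i+1)/k}$, proved by induction on $i$. The base case holds since $D_1=\emptyset$ and $|V_1|\leq n$. For the inductive step, suppose $|V_i-D_i|\leq 2^{i-1}n^{(k-i+1)/k}$. The algorithm has split $V_i-D_i$ into short (2-colored) and long ($\none$) components, with at most $|V_i-D_i|/n^{1/k}\leq 2^{i-1}n^{(k-i)/k}$ long components. A vertex $u\in V_{i+1}$ has $\geq 1$ level-$i$ neighbor (because peeling $V_i$ from $G_i$ reduced its degree from $\geq 3$ to $\leq 2$), and $u$ escapes $D_{i+1}$ only if every level-$\leq i$ neighbor is labeled $\none$. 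This in turn forces some level-$i$ neighbor $w$ of $u$ to lie in a long, $\none$-labeled component; moreover, since $w$ has $G_i$-degree $\leq 2$ with $u\notin V_i$ using up one of its incident edges, $w$ must be an endpoint of its $V_i-D_i$-component. Each long path component has at most two endpoints and each such endpoint can ``own'' at most one attachment vertex in $V_{\geq i+1}$, so $|V_{i+1}-D_{i+1}|\leq 2\cdot 2^{i-1}n^{(k-i)/k}=2^{i}\cdot n^{(k-(i+1)+1)/k}$. Setting $i=k$ gives $|V_k-D_k|=O(n^{1/k})$, so every component at level $k$ has size $O(n^{1/k})$ and is properly 2-colored within $O(n^{1/k})$ rounds in phase $k$.

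The main obstacle will be making the ``endpoint counting'' in the structural lemma fully airtight: an endpoint of $G[V_i-D_i]$ may arise either because the underlying path of $G[V_i]$ literally terminates in $G_i$, or because its $V_i$-neighbor on one side has been exempted into $D_i$, and a careful case analysis is needed to show that both possibilities yield at most one attachment into $V_{\geq i+1}$. A second item to verify is that the two-coloring constraints hold across component boundaries: this reduces to the observation that two $V_i$-adjacent non-exempt vertices must belong to the same connected component of $G[V_i-D_i]$, so a $\bl$-labeled vertex in a short component never sees a $\none$-labeled or $\bl$-labeled neighbor in $V_i$ (only its alternating $\wh$-partners or $\exempt$-labeled vertices in $D_i$).
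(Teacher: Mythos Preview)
Your algorithm is the same as the paper's, and your structural analysis is correct in spirit but \emph{dual} to the paper's: you prove an upper bound $|V_i-D_i|\le 2^{i-1}n^{(k-i+1)/k}$ on how many non-exempt vertices survive at each level, whereas the paper proves a lower bound $\ge 2n^{i/k}$ on the size of any $\none$-labeled component whose $V_i$-vertices form a path. Both arguments hinge on the same fact (each non-exempt $V_{i+1}$-vertex attaches to an endpoint of a $\none$-path at level $i$, and paths have only two endpoints), so in that sense the approaches coincide.

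There is one genuine inconsistency to fix. With your threshold $n^{1/k}$, the lemma yields only $|V_k-D_k|\le 2^{k-1}n^{1/k}$, so a path component in $V_k-D_k$ may well have size strictly greater than $n^{1/k}$; as written, your algorithm would label such a path $\none$, violating the LCL. Your concluding sentence (``every component at level $k$ has size $O(n^{1/k})$ and is properly 2-colored within $O(n^{1/k})$ rounds'') silently switches to a different rule for phase $k$ than the one you stated. Either make this explicit---in phase $k$ do not apply the threshold, simply 2-color every path, which your lemma guarantees is $O(n^{1/k})$ work---or, cleaner, raise the threshold to $2n^{1/k}$ throughout (this is exactly what the paper does). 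With threshold $T$, the recursion gives $|V_k-D_k|\le (2/T)^{k-1}n$, and requiring this to be at most $T$ forces $T\ge 2^{(k-1)/k}n^{1/k}$, so $T=2n^{1/k}$ works and no special-casing is needed.

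Your ``main obstacle'' paragraph is on target, and the injectivity argument (each path endpoint $w$ has at most one $G_i$-edge leaving $V_i-D_i$, hence at most one attachment into $V_{\ge i+1}$) is correct once you note that long components have at least two vertices, so an endpoint $w$ spends exactly one of its two $G_i$-edges on its path neighbor. The boundary check is also fine: two adjacent non-exempt $V_i$-vertices are by definition in the same induced component, so $\bl/\wh$ never meets $\none$ across a component boundary.
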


\begin{proof}
The algorithm fixes the labeling of $V_1,\ldots,V_k,V_{k+1}$ in order, according to the following steps.
Assume that all vertices in $V_1,\ldots,V_{i-1}$ have already been labeled.
\begin{itemize}
\item Compute $D_i$ according to the Exemption rule.  (E.g., $D_1 = \emptyset$, $D_{k+1} = V_{k+1}$.)

\item Each path in the subgraph induced by $V_i - D_i$ calculates its length.
If it contains at most $\ceil{2n^{1/k}}$ vertices, it
properly 2-colors itself with $\{\bl,\wh\}$; longer paths and cycles in $V_i - D_i$ label themselves uniformly by $\none$.
\end{itemize}

This algorithm correctly solves $\mathcal{P}_k$ provided that it never labels a path in $V_k - D_k$ with $\none$.
Let $U_i$ be the subgraph induced by those vertices in $V_1\cup\cdots\cup V_i$ labeled $\none$.
Consider a connected component $C$ in $U_i$ whose $V_i$-vertices are arranged in a path (not a cycle).
We argue by induction that $C$ has at least $2n^{i/k}$ vertices.
This is clearly true in the base case $i=1$: if a path component of $U_1$ were colored $\none$, it must
have more than $\ceil{2n^{1/k}}$ vertices.  Now assume the claim is true for $i-1$ and consider a component
$C$ of $U_i$.  If the $V_i$-vertices in $C$ form a path, it must have length greater than $2n^{1/k}$.
Each vertex in that path must be adjacent to an endpoint of a $V_{i-1}$ path.  Since $V_{i-1}$ paths have two endpoints,
the $V_i$ path is adjacent to at least $\ceil{2n^{1/k}}/2 \ge n^{1/k}$ components in $U_{i-1}$, each of which has size
at least $2n^{(i-1)/k}$, by the inductive hypothesis.  Thus, the size of $C$ is at least $n^{1/k}\cdot 2n^{(i-1)/k} + 2n^{1/k} > 2n^{i/k}$.
Because there are at most $n$ vertices in the graph, it is impossible for $V_k$ vertices arranged in a path
to be colored $\none$.
\end{proof}

\begin{theorem}\label{thm:LCL-lb}
The $\RandLOCAL$ complexity of $\mathcal{P}_k$ on trees with maximum degree $\Delta = 3$ is $\Omega(n^{1/k})$.
\end{theorem}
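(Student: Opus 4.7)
The plan is to establish the $\Omega(n^{1/k})$ lower bound in three steps: (i) construct a specific hard degree-3 tree $T_k$ with $n \asymp L^k$ vertices, so $L = \Theta(n^{1/k})$; (ii) show that any valid $\mathcal{P}_k$-labeling of $T_k$ must properly 2-color some length-$L$ path at some level $i \in [1,k]$; and (iii) argue that 2-coloring such a path requires $\Omega(L)$ rounds in $\RandLOCAL$. For (i), define $T_k$ recursively: $T_1$ is a simple path of $L$ vertices, and for $i \ge 2$, $T_i$ is a ``spine'' path of $L$ vertices where each interior spine vertex has one attached copy of $T_{i-1}$ and each endpoint spine vertex has two, attached at a spine endpoint of the $T_{i-1}$-copy which forfeits one of its own subtree slots to keep max degree 3. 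An easy induction shows that $V_i(T_k)$ is exactly the union of spine vertices of the $T_i$-copies appearing in the construction, and that the subgraph induced by $V_i(T_k)$ decomposes into disjoint length-$L$ paths, one per $T_i$-copy.

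For step (ii) I would walk down the exemption cascade. If some $V_k$-spine vertex $v$ is non-exempt, then the connected component of $v$ in $V_k - D_k$ is a subpath of the $V_k$-spine whose endpoints (with at most one $V_k - D_k$-neighbor) must be labeled $\bl$ or $\wh$, forcing proper 2-coloring of that component. If every $V_k$-spine vertex is exempt, then each has a $V_{k-1}$-neighbor---the root of its attached $T_{k-1}$-copy---labeled $\bl$, $\wh$, or $\exempt$. A $\bl$ or $\wh$ label at level $k-1$ forces 2-coloring of the containing length-$L$ $V_{k-1}$-path, whereas an $\exempt$ label cascades the obligation down to level $k-2$. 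Since $V_1$-vertices cannot be labeled $\exempt$, the cascade must terminate at some level $i$ with a $\bl/\wh$ label, forcing 2-coloring of a length-$L$ $V_i$-path. The scenario where $V_k - D_k$ breaks into short components via scattered exempts reduces to the same analysis: each such exempt spawns its own cascade, and the only way to avoid a long 2-coloring at every level is to label every $V_j$-path all-$\none$ for $j < k$, which then forbids any $V_k$-exempt and returns to the first case.

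For step (iii) I would adapt the classical indistinguishability argument for 2-coloring a path to the embedded setting. Fix a 2-colored length-$L$ $V_i$-path and let $v$ be a middle vertex at distance greater than $T$ from either endpoint. By the recursive self-similarity of the construction, the $T$-neighborhood of $v$ in $T_k$---including degrees, port numberings, and the attached lower-level subtrees---is isomorphic (as a rooted graph) to the $T$-neighborhood of $v$'s spine-neighbor, so their output distributions coincide; yet in any proper 2-coloring their colors must differ, producing an $\Omega(1)$ local failure probability incompatible with the $\RandLOCAL$ success guarantee $\ge 1 - 1/n$. The main obstacle will be step (iii). Two subtleties arise: the algorithm knows $n$, and port numberings can in principle break symmetry. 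I would address the first by pairing $T_k$ with a twin of the same size that offsets the parity of $v$'s position on the relevant path (achievable by lengthening one spine and shortening another elsewhere by a unit), and the second by choosing a port numbering that is symmetric along each spine. The familiar reduction to the $n$-cycle 2-coloring lower bound then yields $T = \Omega(L) = \Omega(n^{1/k})$.
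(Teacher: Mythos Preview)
Your construction and step (ii) are essentially the paper's: their hard tree $H_k$ is built the same way, and your cascade argument is correct---in any legal labeling of $T_k$, some full length-$L$ spine at some level is properly 2-colored (indeed, if no $V_j$-spine with $j<k$ is fully 2-colored then inductively $D_j=\emptyset$ for all $j\le k$, forcing the unique $V_k$-spine to be 2-colored).

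The genuine gap is in step (iii), and it has two parts. First, comparing $v$ to its \emph{adjacent} spine-neighbor does not give a failure bound: their $T$-balls overlap, so their outputs are identically distributed but \emph{correlated}, and correlated $\{\bl,\wh\}$-valued variables with the same marginal can differ with probability $1$ (take $Y=1-X$). Your twin-graph and symmetric-port-numbering fixes address different issues and do not repair this. The paper instead picks two vertices $u,u'$ on the relevant path at \emph{odd} distance with \emph{disjoint} isomorphic $t$-balls, so their labels are genuinely i.i.d.\ and $\Pr[u,u'\text{ receive different colors in }\{\bl,\wh\}]=2p_{\bl}p_{\wh}\le 1/2$. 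Second, even with that fix, step (ii) only tells you that \emph{some} length-$L$ spine is 2-colored, and which one is a function of the random bits; a union bound over the $\Theta(L^{k-1})$ candidate spines is far too weak. The paper handles this by an induction on $p_i=\Pr[D_i\neq\emptyset]$: it chooses $u,u'$ so that $N^t(u),N^t(u'),N^t(v)$ are pairwise disjoint (where $v$ is the candidate exempt vertex at level $i$), making the event $\{v\in D_i\}$ independent of the labels of $u,u'$. Conditioning on $\mathcal{E}_i\setminus\mathcal{E}_{i-1}$ then yields $p_i\le 2(i-1)\,p_{\operatorname{fail}}$, and combining with the $\overline{\mathcal{E}_k}$ case gives $p_{\operatorname{fail}}\ge(2k)^{-1}$.
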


\begin{proof}
Fix an integer parameter $x$ and define a sequence of graphs $(H_i)_{1\le i\le k}$ as follows.
Each $H_i$ has a \emph{head} and a \emph{tail}.
\begin{itemize}
\item $H_1$ is a path (or \emph{backbone}) of length $x$.  One end of the path is the head and the other end the tail.

\item To construct $H_i$, $i\in [2,k-1]$, begin with a \emph{backbone} path $(v_1,v_2,\ldots,v_x)$,
with head $v_1$ and tail $v_x$.
Form $x+1$ copies $(H_{i-1}^{(j)})_{1\le j\le x+1}$ of $H_{i-1}$, where $v^{(j)}$ is the head of $H_{i-1}^{(j)}$.
Connect $v^{(j)}$ to $v_j$ by an edge, for $j\in[1,x]$, and also connect $v^{(x+1)}$ to $v_x$ by an edge.

\item $H_k$ is constructed exactly as above, except that we generate $x+2$ copies of $H_{k-1}$ and connect the heads of
two copies of $H_{k-1}$ to both $v_1$ and $v_x$.   See Figure~\ref{fig:lower-bound-graph} for an example with $k=3$.
\end{itemize}

\begin{figure}
\centering
\scalebox{.45}{\includegraphics{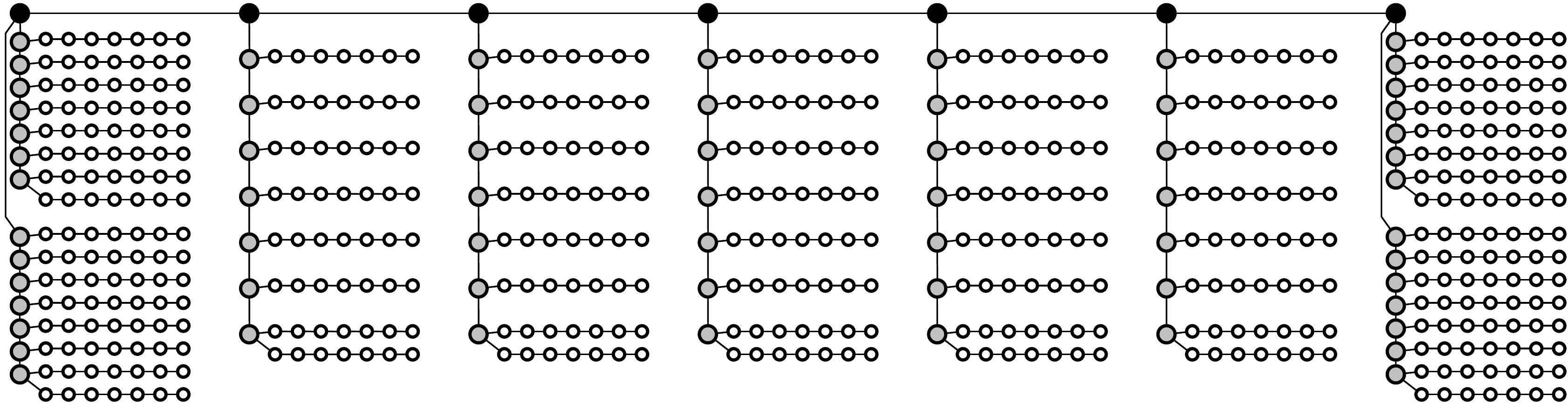}}
\caption{\label{fig:lower-bound-graph}The graph $H_k$ with parameters $k=3, x=7$.  White vertices are in $V_1$, gray in $V_2$,
and black in $V_3$.  $V_4 = V_{k+1}$ is empty.}
\end{figure}

Let us make several observations about the construction of $H_k$.
First, it is a tree with maximum degree 3.
Second, when decomposing $V(H_k)$ into levels $(V_1,\ldots,V_k,V_{k+1})$,
$V_i$ is precisely the union of the backbones in all copies of $H_i$, and $V_{k+1}=\emptyset$.
Third, the number of vertices in
$H_k$ is $\Theta(x^k)$, so a $o(n^{1/k})$ algorithm for $\mathcal{P}_k$ must run in $o(x)$ time on $H_k$.

Consider a $\RandLOCAL$ algorithm $\mathcal{A}$ solving $\mathcal{P}_k$ on $H_k$ within $t < x/5 - O(1)$ time,
that fails with probability $p_{\operatorname{fail}}$.  If $\mathcal{A}$ is a good algorithm then
$p_{\operatorname{fail}} \le 1/|V(H_k)|$.  However, we will now show that $p_{\operatorname{fail}}$ is constant,
independent of $|V(H_k)|$.

Define $\mathcal{E}_i$ to be the event that $D_i\neq \emptyset$ and $p_i = \Pr(\mathcal{E}_i)$.
By an induction from $i=2$ to $k$, we prove that $p_i \leq 2(i-1)\cdot p_{\operatorname{fail}}$.

\paragraph{Base case.}
We first prove that
\[
\Prob\left(\mbox{$H_k$ is not correctly colored according to $\mathcal{P}_k$ }  \;|\; \mathcal{E}_2\right) \geq 1/2.
\]
Conditioning on $\mathcal{E}_2$ means that $D_2\neq\emptyset$.
Fix any $v\in D_2$ and let $P$ be a copy of $H_1$ (a path) adjacent to $v$.
In order for $v\in D_2$, it must be that $P$ is properly 2-colored with $\{\bl,\wh\}$.
Since $t < x/5 - O(1)$, there exist two vertices $u$ and $u'$ in $P$ such that
\begin{enumerate}
\item $N^{t}(u)$, $N^{t}(u')$, and $N^{t}(v)$ are disjoint sets,
\item the subgraphs induced by $N^{t}(u)$ and $N^{t}(u')$ are isomorphic, and
\item the distance between $u$ and $u'$ is odd.
\end{enumerate}
Let $p_{\bl}$ and $p_{\wh}$ be the probabilities that $u$/$u'$ is labeled $\bl$ and $\wh$, respectively.
A proper 2-coloring of $P$ assigns $u$ and $u'$ different colors, and that occurs with probability
$2p_{\bl}p_{\wh} \le 2p_{\bl}(1-p_{\bl}) \le 1/2$.
Moreover, this holds independent of the random bits generated by vertices in $N^t(v)$.
The algorithm fails unless $u,u'$ have different colors, thus
$p_{\operatorname{fail}} \ge p_2/2$, and hence
$p_2 \leq 2\cdot p_{\operatorname{fail}}$.

\paragraph{Inductive Step.} Let $3 \leq i \leq k$.
The inductive hypothesis states that $p_{i-1} \leq 2(i-2)\cdot p_{\operatorname{fail}}$.
By a proof similar to the base case, we have that:
\[
\Prob\left(\mbox{$H_k$ is not correctly colored according to $\mathcal{P}_k$ }  \;|\; \mathcal{E}_i\backslash\mathcal{E}_{i-1}\right) \geq 1/2.
\]
We are conditioning on $\mathcal{E}_i\backslash\mathcal{E}_{i-1}$.  If this event is empty,
then $p_i \le p_{i-1} \leq 2(i-2)\cdot p_{\operatorname{fail}}$ and the induction is complete.
On the other hand, if $\mathcal{E}_i\backslash\mathcal{E}_{i-1}$ holds then
there is some $v\in D_i$ adjacent to a copy of $H_{i-1}$ with backbone path $P$, where
$P\cap D_{i-1} = \emptyset$.  In other words, if $H_k$ is colored according to $\mathcal{P}_k$
then $P$ must be properly 2-colored with $\{\bl,\wh\}$.
The argument above shows this occurs with probability at least 1/2.
Thus,
\[
p_{\operatorname{fail}} = \Pr(\mbox{$H_k$ is incorrectly colored}) \ge \Pr(\mathcal{E}_i\backslash\mathcal{E}_{i-1}) / 2 \ge (p_i-p_{i-1})/2,
\]
or $p_i \le 2p_{\operatorname{fail}} + p_{i-1} \le 2(i-1)p_{\operatorname{fail}}$, completing the induction.

Finally, let $P$ be the path induced by vertices in $V_k$.
The probability that $\mathcal{E}_k$ holds ($P\cap D_k \neq \emptyset$) is $p_k \le 2(k-1)\cdot p_{\operatorname{fail}}$.
On the other hand, $\Pr(\mbox{$H_k$ not colored correctly} \;|\; \overline{\mathcal{E}_k}) \ge 1/2$ by the argument above,
hence $p_{\operatorname{fail}} \ge (1-p_k)/2$, or $p_k \ge 1 - 2p_{\operatorname{fail}}$.
Combining the upper and lower bounds on $p_k$ we conclude that $p_{\operatorname{fail}} \ge (2k)^{-1}$ is constant,
independent of $|V(H_k)|$.
Thus, algorithm $\mathcal{A}$ cannot succeed with high probability.
\end{proof}

\section{A Complexity Gap on Bounded Degree Trees\label{sec.tree-gap}}

In this section we prove an $n^{o(1)} \rightarrow O(\log n)$ speedup theorem for LCL
problems on bounded degree trees.  The progression of definitions and lemmas in
Sections~\ref{sect:partial-labeling}--\ref{sect:feasible-functions} is \emph{logical}, but 
obscures the high level structure of the proof.  Section~\ref{sect:tour} gives an informal
tour of the proof and its key ideas.  Throughout, $\mathcal{P}$ is a radius-$r$ LCL
and $\mathcal{A}$ is an $n^{o(1)}$-time algorithm for $\mathcal{P}$ on bounded degree trees.

\subsection{A Tour of the Proof}\label{sect:tour}

Consider this simple way to decompose a tree in $O(\log n)$ time, inspired by Miller and Reif~\cite{MillerR89}.
Iteratively remove paths of degree-2 vertices (\emph{compress})
and vertices with degree 0 or 1 (\emph{rake}).  Vertices removed in iteration $i$ are at \emph{level $i$}.
If $O(\log n)$ \emph{rakes} alone suffice to decompose a tree then it has $O(\log n)$ diameter and any
LCL can be solved in $O(\log n)$ time on such a graph.
Thus, we mainly have to worry about the situation where \emph{compress} removes very
long ($\omega(1)$-length) paths.

The first observation is that it is easy to split up long degree-2 paths of level-$i$ vertices
into constant length paths, by artificially promoting a well-spaced subset of level-$i$ vertices to level $i+1$.
Thus, we have a situation that looks like this: level-$i$ vertices are arranged in an $O(1)$-length path,
each the root of a (colored) subtree of level-$(< i)$ vertices that were removed in previous rake/compress steps,
and bookended by level-$(>i)$ (black) vertices.
Call the subgraph between the bookends $H$.

\centerline{\scalebox{.4}{\includegraphics{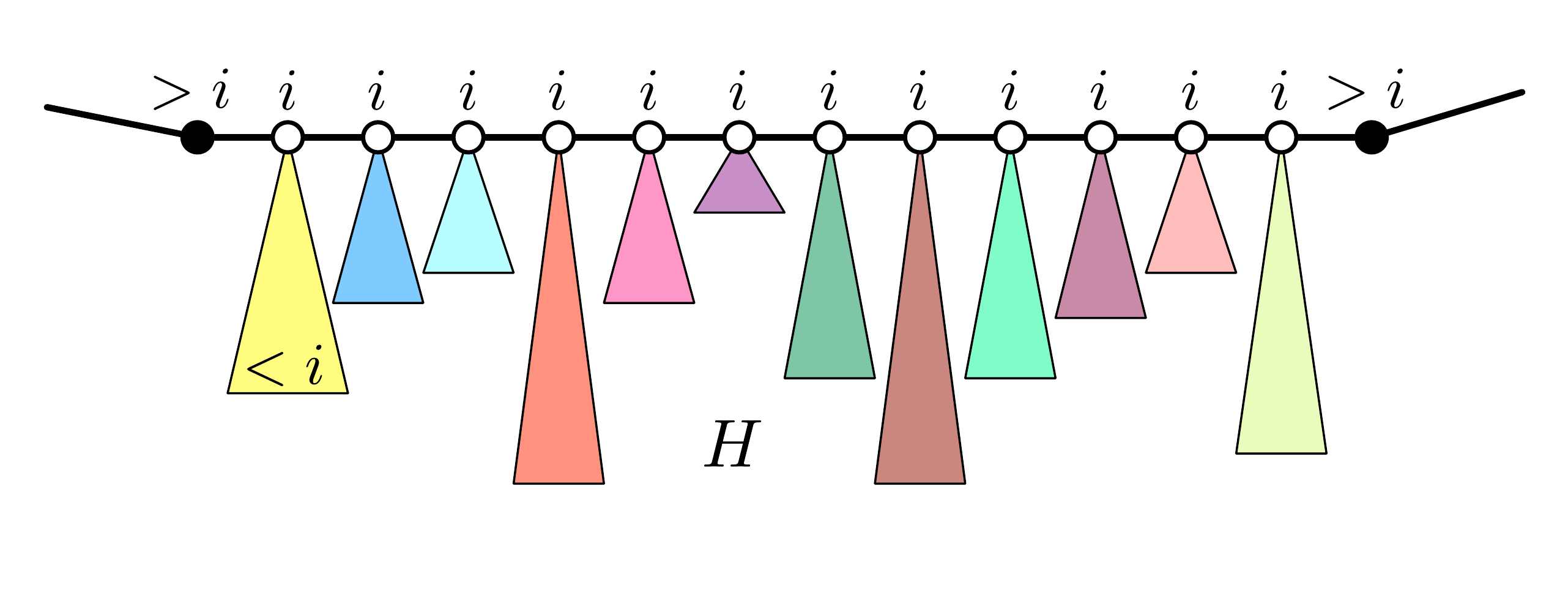}}}

In our approach it is the level-$(>i)$ vertices that are in charge of coordinating the labeling
of level-$(\le i)$ vertices in their purview.  In this diagram, $H$ is in the purview
of both black bookends.   We only have one tool available for computing a labeling of this subgraph:
an $n^{o(1)}$-time $\RandLOCAL$ algorithm $\mathcal{A}$ that works w.h.p.
What would happen if we \emph{simulated} $\mathcal{A}$ on $H$?
The simulation would fail catastrophically of course, 
since it needs to look up to an $n^{o(1)}$ radius, to parts of the graph far outside of $H$.

Note that the colored subtrees are unbounded in terms of size and depth.
Nonetheless, they fall into a \emph{constant} number of equivalence classes in the following sense.
The \emph{class} of a rooted tree is the set of all labelings of the $r$-neighborhood of its root
that can be extended to total labelings of the tree that are consistent with $\mathcal{P}$.

\centerline{\scalebox{.35}{\includegraphics{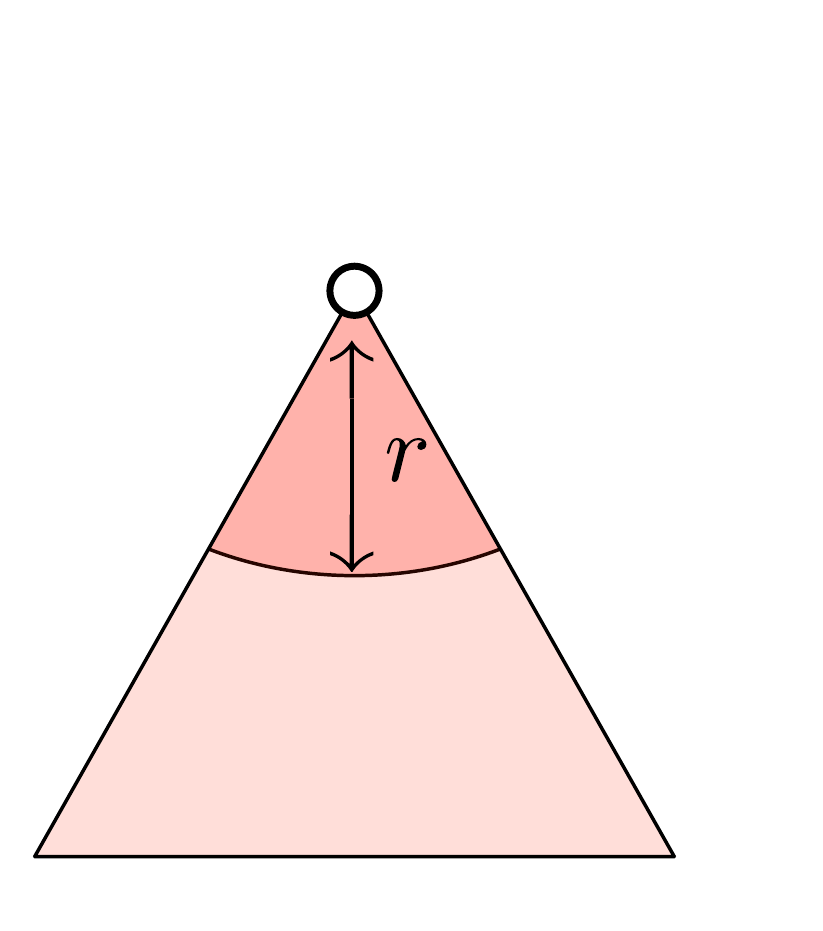}}}

In other words, the large and complex graph $H$ can be
succinctly encoded as a simple class vector $(c_1,c_2,\ldots,c_\ell)$, where $c_j$ is the class of the $j$th colored tree.
Consider the set of all labelings of $H$ that are consistent with $\mathcal{P}$.
This set can also be succinctly represented by listing the labelings of the $r$-neighborhoods of the bookends
that can be extended to all of $H$, while respecting $\mathcal{P}$.  The set of these partial labelings
defines the \emph{type} of $H$.
We show that $H$'s type
can be computed by a finite automaton that reads the class vector $(c_1,\ldots,c_\ell)$ one character at a time.
By the pigeonhole principle, if $\ell$ is sufficiently large then the automaton loops, meaning
that $(c_1,\ldots,c_\ell)$ can be written as $x\circ y\circ z$, which has the same type as every $x\circ y^j\circ z$,
for all $j\ge 1$.  This \emph{pumping lemma for trees} lets us dramatically expand the size of $H$
without affecting its type, i.e., how it interacts with the outside world beyond the bookends.

\centerline{\scalebox{.4}{\includegraphics{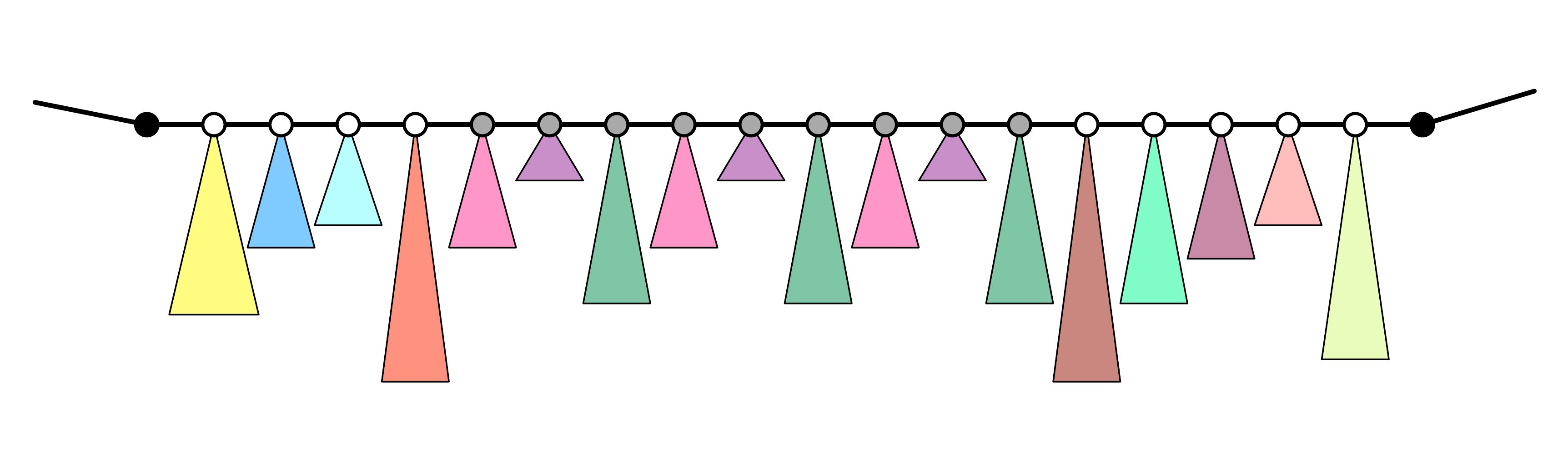}}}

This diagram illustrates the pumping lemma with a substring of 
$|y|=3$ trees (rooted at gray vertices) repeated $j=3$ times.
Now let us reconsider the simulation of $\mathcal{A}$.  
If we first pump $H$ to be long enough, 
and then simulate $\mathcal{A}$ on the middle section of pumped-$H$, 
$\mathcal{A}$ must, according to its $n^{o(1)}$ time bound,
compute a labeling \emph{without needing any information outside of pumped-$H$}, i.e., beyond the bookends.  
Thus, we can use
$\mathcal{A}$ to \emph{pre-commit} to a labeling of a small (radius-$r$) subgraph of 
pumped-$H$.  
Given this pre-commitment, the left and right bookends no longer need to coordinate their activities:
everything left (right) of the pre-committed zone is now in the purview of the left (right) bookend.
Interestingly, these manipulations (tree surgery and pre-commitments) can be repeated for each $i$,
yielding a hierarchy of \emph{imaginary} trees such that a proper 
labeling at one level of the hierarchy implies a proper labeling at the previous level.

\paragraph{Roadmap.}
This short proof sketch has been simplified to the point that it is riddled with small inaccuracies.
Nonetheless, it does accurately capture the difficulties, ideas, and techniques used in the actual proof.
In Section~\ref{sect:partial-labeling} we formally define the notion of a partially labeled graph,
i.e., one with certain vertices pre-commited to their output labels.
Section~\ref{sec.rel} defines a surgical ``cut-and-paste'' operation on graphs.
Section~\ref{sect:xi} defines a partition of the vertices of a subgraph $H$, 
which differentiates between vertices that ``see'' the outside graph, and those that see only $H$.
Section~\ref{sect:equiv} defines an equivalence relation on graphs that, intuitively,
justifies surgically replacing a subgraph with an equivalent graph.
Sections~\ref{sect:equiv-props} and~\ref{sect:equiv-classes}
explore properties of the equivalence relation.
Section~\ref{sect:aux} introduces the pumping lemma for trees,
and Section~\ref{sec.decomp} defines a specialized
Rake/Compress-style graph decomposition.
Section~\ref{sec.op} presents the operations $\extend$ (which pumps a subtree)
and $\labeling$ (which pre-commits a small partial labeling) in terms
of a black-box \emph{labeling function $f$}.
Section~\ref{sec.tree-construct} defines the set of all (partially labeled) trees that can be encountered,
by considering the interplay between the graph decomposition, $\extend$, and $\labeling$.
It is important that for each tree encountered, its partial labeling can be extended to a complete labeling consistent with $\mathcal{P}$;  whether this actually holds depends on the choice of black-box $f$.
Section~\ref{sect:feasible-functions}
shows how a \emph{feasible labeling function $f$} can be extracted from any $n^{o(1)}$-time algorithm 
$\mathcal{A}$ and Section~\ref{sec.tree} shows that $\mathcal{P}$ can be solved in $O(\log n)$ time,
given a feasible labeling function.

\subsection{Partial Labeled Graphs}\label{sect:partial-labeling}

A \emph{partially labeled graph} $\GG = (G,\LL)$ is a graph $G$ together with a function
$\LL : V(G) \rightarrow \LabelOut \cup \{\bottom\}$.  The vertices in $\LL^{-1}(\bottom)$ are \emph{unlabeled}.
A \emph{complete labeling} $\LL' : V(G)\rightarrow \LabelOut$ for $\GG$
is one that labels all vertices and is consistent with $\GG$'s partial labeling, i.e., $\LL'(v) = \LL(v)$ whenever $\LL(v)\neq\: \bottom$.
A \emph{legal} labeling is a complete labeling that is \emph{locally consistent} for all $v\in V(G)$,
i.e., the labeled subgraph induced by $N^r(v)$ is consistent with the LCL $\mathcal{P}$.
Here $N^r(v)$ is the set of all vertices within distance $r$ of $v$.

All graph operations can be extended naturally to partially labeled graphs.
For instance, a subgraph of a partially labeled graph $\GG=(G,\LL)$ is a pair $\HH=(H,\LL')$
such that $H$ is a subgraph of $G$, and $\LL'$ is $\LL$ restricted to the domain $V(H)$.
With slight abuse of notation, we usually write $\HH=(H,\LL)$.

\subsection{Graph Surgery\label{sec.rel}}

Let $\GG=(G,\LL)$ be a partially labeled graph, and let $\HH=(H,\LL)$ be a subgraph of $\GG$.
The \emph{poles} of $\HH$ are those vertices in $V(H)$ that are adjacent to some vertex in the
outside graph $V(G) - V(H)$.
We define an operation $\replace$ that surgically removes $\HH$ and replaces it with some $\HH'$.
\begin{description}
\item[$\replace$] Let $S=(v_1,\ldots,v_p)$ be a list of the poles of $\HH$
and let $S=(v_1',\ldots,v_p')$ be a designated set of poles in some partially labeled graph $\HH'$.
The partially labeled graph $\GG' = \replace(\GG,(\HH,S),(\HH',S'))$
is constructed as follows.
Beginning with $\GG$, replace $\HH$ with $\HH'$, and replace any edge $\{u,v_i\}$, $u\in V(G) -  V(H)$,
with $\{u,v_i'\}$.  If the poles $S,S'$ are clear from context,
we may also simply write $\GG' = \replace(\GG,\HH,\HH')$.
Writing $\GG'=(G',\LL')$ and $\HH' =(H',\LL')$, there is a natural 1-1 correspondence between the
vertices in $V(G) - V(H)$ and $V(G') - V(H')$.
\end{description}

In the proof of our $n^{o(1)}\rightarrow O(\log n)$ speedup thereom we only consider unipolar and bipolar graphs
($p \in \{1,2\}$) but for maximum generality we define everything w.r.t.~graphs having $p\ge 1$ poles.\\

\centerline{\scalebox{.35}{\includegraphics{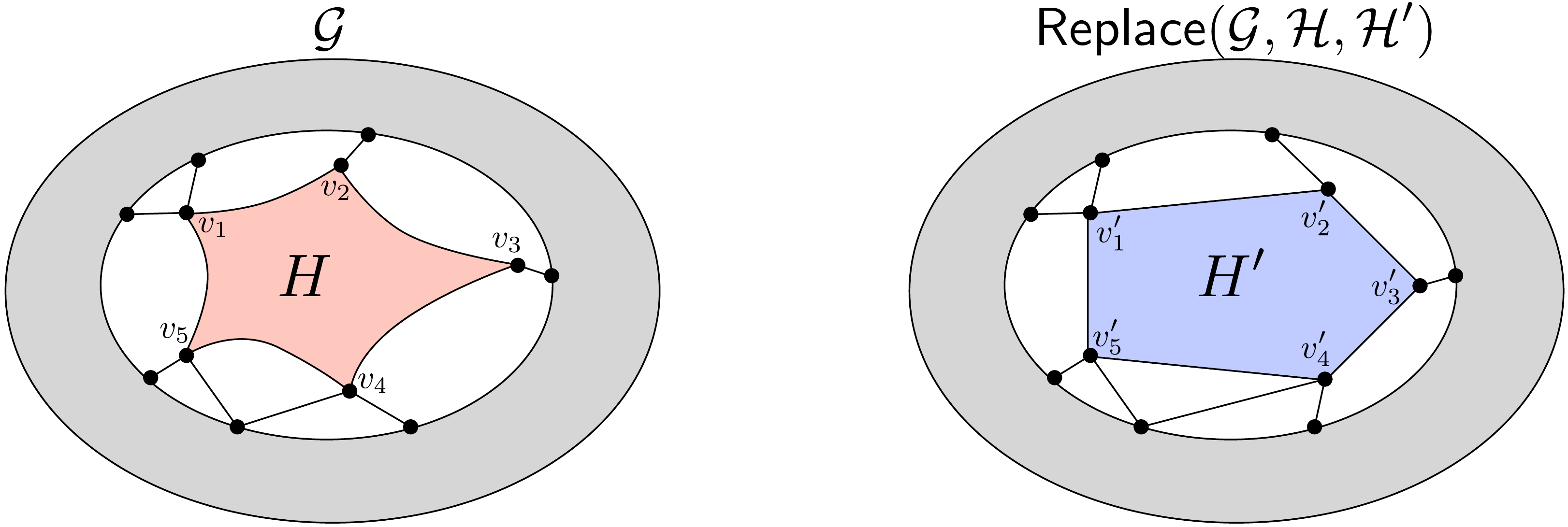}}}
\vspace{.3cm}

Given a legal labeling $\LL_\diamond$ of $\GG$, we would like to know whether there is a legal labeling $\LL_\diamond'$ of $\GG'$
that agrees with $\LL_\diamond$, i.e., $\LL_\diamond(v) = \LL_\diamond'(v')$ for each $v\in V(G) - V(H)$ and the corresponding
$v' \in V(G') - V(H')$.  Our goal is to define an equivalence relation $\simm$ on partially labeled graphs (with designated poles)
so that the following is true:
if $(\HH,S) \simm (\HH',S')$, then such a legal labeling $\LL_\diamond'$ must exist, regardless of the choice of $\GG$ and $\LL_\diamond$.
Observe that since $\mathcal{P}$ has radius $r$, the interface between $V(H)$ (or $V(H')$) and the rest of the graph only
occurs around the $O(r)$-neighborhoods of the poles of $\HH$ (or $\HH'$).
This motivates us to define a certain partition of $\HH$'s vertices that depends on its poles and $r$.

\subsection{A Tripartition of the Vertices}\label{sect:xi}

Let $\HH=(H,\LL)$ be a partially labeled graph with poles $S = (v_1, \ldots, v_p)$.
Define $\xi(\HH,S)=(D_1,D_2,D_3)$ to be a tripartition of $V(H)$, where
$D_1 = \bigcup_{v \in S} N^{r-1}(v)$,
$D_2 = \bigcup_{v \in D_1} N^r(v) - D_1$,
and $D_3 = V(H) - (D_1 \cup D_2)$.
See Figure~\ref{fig:D-sets} for an illustration.

\begin{figure}[h]
\centerline{\scalebox{.5}{\includegraphics{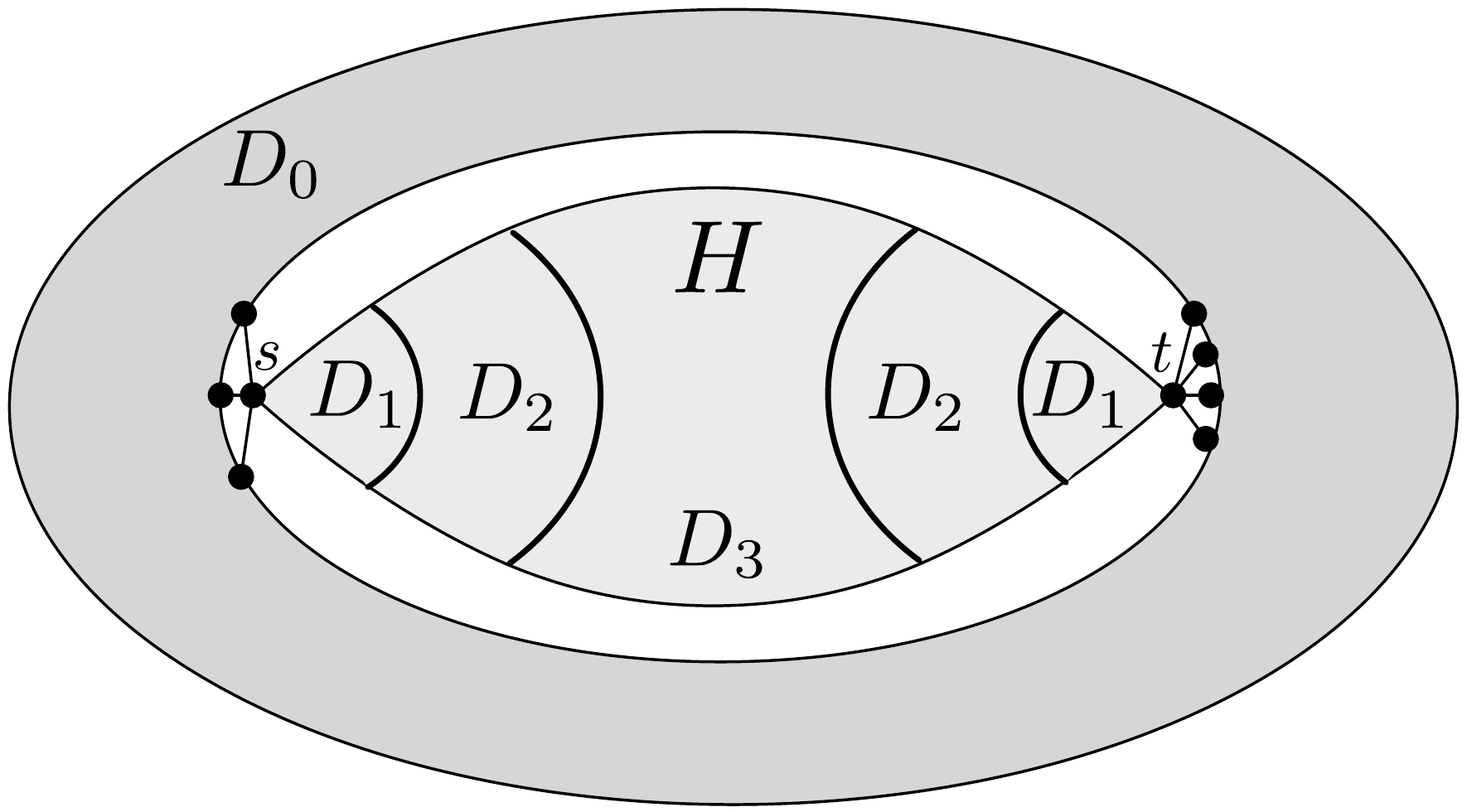}}}
\caption{\label{fig:D-sets}A partially labeled subgraph $\HH$ with poles $S=(s,t)$,
embedded in a larger graph $\GG$.
In the partition $\xi(\HH,S) = (D_1,D_2,D_3)$, $D_1$ is the set of vertices in $V(H)$ within radius $r-1$
of $S$, $D_2$ are those within radius $2r-1$ of $S$, excluding $D_1$, and $D_3$ is the rest of $V(H)$.
When $\HH$ is embedded in some larger graph $\GG$,
$D_0$ denotes the remaining vertices in $V(G) - V(H)$.}
\end{figure}

Consider the partition $\xi(\HH,S)=(D_1,D_2,D_3)$ of a partially labeled graph $\HH = (H,\LL)$.
Let $\LL_\ast : D_1 \cup D_2 \rightarrow \LabelOut$ assign output labels to $D_1\cup D_2$.
We say that $\LL_\ast$ is {\em extendible} (to all of $V(H)$) if there exists a complete labeling
$\LL_\diamond$ of $H$ such that $\LL_\diamond$ agrees with $\LL$ where it is defined,
agrees with $\LL_\ast$ on $D_1\cup D_2$, and is locally consistent with $\mathcal{P}$ on all vertices
in $D_2\cup D_3$.\footnote{We are not concerned whether $\LL_\diamond$ is consistent with $\mathcal{P}$
for vertices in $D_1$.  Ultimately, $\HH$ will be a subgraph of a larger graph $\GG$.  Since
the $r$-neighborhoods of vertices in $D_1$ will intersect $V(G) - V(H)$, the labeling of $H$
does not provide enough information to tell if these vertices' $r$-neighborhoods will be consistent with $\mathcal{P}$.  See Figure~\ref{fig:D-sets}.}

\subsection{An Equivalence Relation on Graphs}\label{sect:equiv}

Consider two partially labeled graphs $\HH$ and $\HH'$
with poles $S=(v_1, \ldots, v_p)$ and $S'=(v_1', \ldots, v_p')$, respectively.
Let $\xi(\HH,S)=(D_1,D_2,D_3)$ and $\xi(\HH',S')=(D_1',D_2',D_3')$.
Define $\QQ=(Q,\LL)$ and $\QQ'=(Q',\LL')$ as the subgraphs of $\HH$ and $\HH'$
induced by the vertices in $D_1 \cup D_2$ and $D_1'\cup D_2'$, respectively.

The relation $(\HH,S) \simm (\HH',S')$ holds if and only if there is a 1-1 correspondence
$\phi : (D_1 \cup D_2)\rightarrow (D_1'\cup D_2')$ meeting the following conditions.

\begin{description}
\item[Isomorphism.]
The two graphs $Q$  and $Q'$ are isomorphic under $\phi$.
Moreover,
for each $v \in D_1 \cup D_2$ and its corresponding vertex $v' = \phi(v) \in D_1' \cup D_2'$,
(i) $\LL(v) = \LL'(v')$,
(ii) if the underlying LCL problem has input labels, then the input labels of $v$ and $v'$ are the same,
and
(iii) $v$ is the $i$th pole in $S$ iff $v'$ is the $i$th pole in $S'$.
\item[Extendibility.]
Let $\LL_\ast$ be {\em any} assignment of output labels to vertices in $D_1 \cup D_2$
and let $\LL_\ast'$ be the corresponding labeling of $D_1' \cup D_2'$ under $\phi$.
Then $\LL_\ast$ is extendible to $V(H)$ if and only if $\LL_\ast'$ is extendible to $V(H')$.
\end{description}
Notice that there could be many 1-1 correspondences between $D_1 \cup D_2$ and $D_1' \cup D_2'$ that satisfy the isomorphism
requirement, though only some subset may satisfy the
extendibility requirement due to differences in the topology and partial labeling of $D_3$ and $D_3'$.
Any $\phi$ meeting both requirements is a {\em witness} of the relation $(\HH,S) \simm (\HH',S')$.

\subsection{Properties of the Equivalence Relation}\label{sect:equiv-props}

Let $\GG' = \replace(\GG,(\HH,S),(\HH',S'))$.
Consider the partitions
$\xi(\HH,S)=(D_1,D_2,D_3)$ and $\xi(\HH',S')=(D_1',D_2',D_3')$
and let $D_0 = V(G) - V(H)$ and $D_0' = V(G') - V(H')$
be the remaining vertices in $G$ and $G'$, respectively.

If $(\HH,S) \simm (\HH',S')$
then there exists a 1-1 correspondence 
$\phi \;:\; (D_0 \cup D_1 \cup D_2) \rightarrow (D_0' \cup D_1' \cup D_2')$
such that
(i) $\phi$ restricted to $D_0$ is the natural 1-1 correspondence between $D_0$ and $D_0'$
and
(ii) $\phi$ restricted to $D_1 \cup D_2$ witnesses the relation $(\HH,S) \simm (\HH',S')$.
Such a 1-1 correspondence $\phi$ is called {\em good}.
We have the following lemma.

\begin{lemma}\label{lem:base}
Let $\GG' = \replace(\GG,(\HH,S),(\HH',S'))$.
Consider the partitions
$\xi(\HH,S)=(D_1,D_2,D_3)$ and $\xi(\HH',S')=(D_1',D_2',D_3')$
and let $D_0 = V(G) - V(H)$ and $D_0' = V(G') - V(H')$.
Suppose that $(\HH,S) \simm (\HH',S')$, so there is a \emph{good} 1-1 correspondence
$\phi : (D_0 \cup D_1 \cup D_2)\rightarrow (D_0' \cup D_1' \cup D_2')$.
Let $\LL_\diamond$ be a complete labeling of $\GG$ that is locally consistent for all vertices in $D_2 \cup D_3$.
Then there exists a complete labeling $\LL_\diamond'$ of $\GG'$ such that the following conditions are met.
\begin{description}
\item[Condition 1.] $\LL_\diamond(v) = \LL_\diamond'(v')$ for each $v \in D_0 \cup D_1 \cup D_2$ and its corresponding vertex $v'=\phi(v) \in D_0' \cup D_1' \cup D_2'$.  Moreover, if $\LL_\diamond$ is locally consistent for $v$, then $\LL_\diamond'$ is locally consistent for $v'$.
\item[Condition 2.] $\LL_\diamond'$ is locally consistent for all vertices in $D_2' \cup D_3'$.
\end{description}
\end{lemma}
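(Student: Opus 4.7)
The plan is to build $\LL_\diamond'$ in two pieces---labels copied from $\GG$ onto $D_0'$ via $\phi$, and labels produced by invoking the extendibility clause of $\simm$ on $V(H')$---and then verify Conditions 1 and 2 by tracking which $r$-balls stay inside which region. Let $\LL_\ast$ denote the restriction of $\LL_\diamond$ to $D_1 \cup D_2$. Since $\LL_\diamond$ is a complete labeling of $\GG$ that is locally consistent on $D_2 \cup D_3$, its restriction to $V(H)$ is a complete labeling of $H$ that agrees with the partial labeling of $\HH$ and with $\LL_\ast$, and---because for vertices in $D_2 \cup D_3$ the $r$-neighborhood in $G$ coincides with the $r$-neighborhood in $H$ (the distance check below)---it is locally consistent on $D_2 \cup D_3$. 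Hence $\LL_\ast$ is extendible. Transport it through $\phi$ to obtain $\LL_\ast'$ on $D_1' \cup D_2'$; the extendibility clause of $(\HH,S) \simm (\HH',S')$ then furnishes a complete labeling $\LL_\diamond^H$ of $V(H')$ that extends $\LL_\ast'$, agrees with the partial labeling of $\HH'$, and is locally consistent on $D_2' \cup D_3'$. Define $\LL_\diamond'(v') := \LL_\diamond(\phi^{-1}(v'))$ for $v' \in D_0'$ and $\LL_\diamond'(v') := \LL_\diamond^H(v')$ for $v' \in V(H')$; the two formulas agree on the poles because $\LL_\diamond^H$ extends $\LL_\ast'$.

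For Condition 1, label agreement on $D_0 \cup D_1 \cup D_2$ is immediate. The local-consistency claim for $v \in D_0 \cup D_1$ rests on the geometric observation that $N^r(v) \subseteq D_0 \cup D_1 \cup D_2$: a walk of length $\le r$ from a vertex outside $V(H)$ can enter $V(H)$ only through a pole and then travel at most $r-1$ further, landing in $N^{r-1}(S) = D_1$; a walk of length $\le r$ from a vertex in $D_1$ that remains in $V(H)$ reaches at most $N^{2r-1}(S) \cap V(H) \subseteq D_1 \cup D_2$. Since $\phi$, extended by the natural identification of $D_0$ with $D_0'$, preserves degrees, port numberings, input labels, output labels, and pole indices, the labeled ball $N^r(v)$ in $\GG$ is isomorphic to $N^r(v')$ in $\GG'$, and local consistency transfers verbatim. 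For $v \in D_2$, local consistency of $\LL_\diamond'$ at $\phi(v) \in D_2'$ is subsumed by Condition 2 below.

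To establish Condition 2, fix $v' \in D_2' \cup D_3'$. When $v' \in D_3'$ we have $d(v', S') > 2r-1$, so $N^r(v')$ contains no pole and therefore cannot leave $V(H')$. When $v' \in D_2'$ we have $d(v', S') \ge r$, so any walk of length $\le r$ from $v'$ that leaves $V(H')$ would need to first reach a pole and then step out, requiring at least $r+1$ edges---impossible. Hence $\LL_\diamond'$ restricted to $N^r(v')$ coincides with $\LL_\diamond^H$, and local consistency at $v'$ follows directly from the corresponding property of $\LL_\diamond^H$. The main obstacle throughout is precisely this distance bookkeeping inside the tripartition $(D_1, D_2, D_3)$ and its primed counterpart: both conditions demand that the radius-$r$ influence of $\mathcal{P}$ never spill from a region in which $\phi$ is a genuine isomorphism into a region in which it is not. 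Once those containment facts are verified, the remainder of the argument is a direct translation between the definitions of ``extendible'' and ``locally consistent''.
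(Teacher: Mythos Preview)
Your proof is correct and follows essentially the same approach as the paper: transport $\LL_\diamond$ through $\phi$ on $D_0\cup D_1\cup D_2$, invoke extendibility to fill in $D_3'$, and verify Conditions~1 and~2 via the containments $N^r(v)\subseteq D_0\cup D_1\cup D_2$ for $v\in D_0\cup D_1$ and $N^r(v')\subseteq V(H')$ for $v'\in D_2'\cup D_3'$. You are simply more explicit than the paper about the distance bookkeeping inside the tripartition, which the paper states tersely (e.g., ``for $v\in D_0\cup D_1$, $N^r(v)\subseteq D_0\cup D_1\cup D_2$'') and otherwise leaves to the reader.
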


\begin{proof}
We construct $\LL_\diamond'$  as follows.
First of all, for each $v \in D_0 \cup D_1 \cup D_2$, fix
$\LL_\diamond'(\phi(v)) = \LL_\diamond(v)$.
It remains to show how to assign output labels to vertices in $D_3'$  to meet Conditions 1 and 2.

Let $\LL_\ast$ be $\LL_\diamond$ restricted to the domain $D_1 \cup D_2$.
Similarly, let $\LL_\ast'$ be $\LL_\diamond'$ restricted to $D_1' \cup D_2'$.
Due to the fact that $\LL_\diamond$ is locally consistent for all vertices in $D_2 \cup D_3$, the labeling $\LL_\ast$ is extendible to all of $\HH$.
Since $(\HH,S) \simm (\HH',S')$, the labeling $\LL_\ast'$ must also be extendible to all of $\HH'$.
Thus, we can set $\LL_\diamond'(v')$ for all $v' \in D_3'$ in such a way that $\LL_\diamond'$ is locally consistent for all vertices in $D_2' \cup D_3'$. Therefore, Condition 2 is met.

To see that (the second part of) Condition 1 is also met,
observe that for $v\in D_0\cup D_1$, $N^r(v) \subseteq D_0 \cup D_1 \cup D_2$.
Therefore, if $\LL_\diamond$ is locally consistent for $v\in D_0 \cup D_1$,
then $\LL_\diamond'$ is locally consistent for $\phi(v)$ since they have the same
radius-$r$ neighborhood view.
Condition 2 already guarantees that $\LL_\diamond'$ is locally consistent for all 
$v' \in D_2'$.\footnote{It is this lemma that motivates our definition of the \emph{tri}partition $\xi(\HH,S)$.  It is not clear how an
analogue of Lemma~\ref{lem:base} could be proved using the seemingly more natural \emph{bi}partition,
i.e., by collapsing $D_1,D_2$ into one set.}
\end{proof}

Theorem~\ref{thm:rel-1} provides a user-friendly corollary of Lemma~\ref{lem:base},
which does not mention the tripartition $\xi$.

\begin{theorem}\label{thm:rel-1}
Let $\GG = (G,\LL)$ and $\HH=(H,\LL)$ be a subgraph $\GG$.
Suppose $\HH'$ is a graph for which $(\HH,S)\simm (\HH',S')$
and let $\GG' = \replace(\GG,(\HH,S),(\HH',S'))$.
We write $\GG' = (G',\LL')$ and $\HH'=(H',\LL')$.
Let $\LL_\diamond$ be a complete labeling of $\GG$ that is locally consistent for all vertices in $H$.
Then there exists a complete labeling $\LL_\diamond'$ of $\GG'$ such that the following conditions are met.
\begin{itemize}
\item  For each $v \in V(G) - V(H)$ and its corresponding $v' \in V(G') - V(H')$, we have $\LL_\diamond(v) = \LL_\diamond'(v')$. Moreover, if $\LL_\diamond$ is locally consistent for $v$, then $\LL_\diamond'$ is locally consistent for $v'$.
\item  $\LL_\diamond'$ is locally consistent for all vertices in $H'$.
\end{itemize}
\end{theorem}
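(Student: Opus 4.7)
The plan is to derive Theorem~\ref{thm:rel-1} as a direct repackaging of Lemma~\ref{lem:base}, essentially by ``forgetting'' the fine-grained tripartition and translating between $(D_0, D_1, D_2, D_3)$-language and the coarser $(V(G)-V(H), V(H))$-language used in the theorem. No new graph surgery or combinatorial argument is needed beyond what is already built into $\simm$.

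First I would set up the tripartitions $\xi(\HH,S) = (D_1, D_2, D_3)$ and $\xi(\HH',S') = (D_1', D_2', D_3')$, together with $D_0 = V(G) - V(H)$ and $D_0' = V(G') - V(H')$. Since $\HH$ is a subgraph of $\GG$, we have $V(H) = D_1 \cup D_2 \cup D_3$; symmetrically $V(H') = D_1' \cup D_2' \cup D_3'$. The hypothesis of Theorem~\ref{thm:rel-1} says $\LL_\diamond$ is locally consistent on all of $V(H)$, so in particular it is locally consistent on $D_2 \cup D_3$, which is exactly what Lemma~\ref{lem:base} requires as its input. Since $(\HH,S) \simm (\HH',S')$, a good correspondence $\phi$ exists, so Lemma~\ref{lem:base} applies and produces a complete labeling $\LL_\diamond'$ of $\GG'$ satisfying Conditions 1 and 2.

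Next I would verify that $\LL_\diamond'$ meets the two bullets of the theorem. For the first bullet, restrict $\phi$ to $D_0$; by the definition of ``good'', this restriction is precisely the natural bijection $V(G)-V(H) \to V(G')-V(H')$, and Condition 1 of Lemma~\ref{lem:base} says both that labels are preserved and that local consistency on $v \in D_0$ transfers to $\phi(v) \in D_0'$. For the second bullet I need $\LL_\diamond'$ locally consistent on all of $V(H') = D_1' \cup D_2' \cup D_3'$. Condition 2 already handles $D_2' \cup D_3'$. The remaining vertices in $D_1'$ are covered by the ``moreover'' half of Condition 1: each $v' \in D_1'$ corresponds under $\phi$ to some $v \in D_1 \subseteq V(H)$, on which $\LL_\diamond$ is locally consistent by hypothesis, so $\LL_\diamond'$ is locally consistent on $v'$.

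The only place where something could go wrong is the bookkeeping at the interface between $D_1$ and $D_0$: one must be sure that ``locally consistent at $v \in D_1$'' really does translate to ``locally consistent at $\phi(v) \in D_1'$'', even though the $r$-neighborhood of such a vertex spans both $V(H)$ and $V(G)-V(H)$. This is exactly what the ``good'' property of $\phi$ plus the label agreement across $D_0 \cup D_1 \cup D_2$ in Condition 1 guarantees, because $N^r(v) \subseteq D_0 \cup D_1 \cup D_2$ for $v \in D_0 \cup D_1$, so $v$ and $\phi(v)$ see isomorphic labeled neighborhoods. This is the crux that makes the translation go through, and it is the main (indeed only) obstacle worth double-checking.
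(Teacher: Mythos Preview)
Your proposal is correct and matches the paper's approach exactly: the paper presents Theorem~\ref{thm:rel-1} as a ``user-friendly corollary of Lemma~\ref{lem:base}'' without a separate proof, and your argument spells out precisely the translation the paper leaves implicit. In particular, your handling of the $D_1'$ vertices via the ``moreover'' clause of Condition~1 (using that the hypothesis gives local consistency on all of $V(H) \supseteq D_1$, not just $D_2 \cup D_3$) is the key observation that upgrades Lemma~\ref{lem:base} to the cleaner statement.
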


Theorem~\ref{thm:rel-1} has several useful consequences.
If $\LL_\diamond$  is a legal labeling of $\GG$, then the output labeling $\LL_\diamond'$ of $\GG'$ guaranteed by Theorem~\ref{thm:rel-1} is also legal.
Observe that setting $\GG=\HH$ in Theorem~\ref{thm:rel-1} implies $\GG'=\HH'$.
Suppose that $\HH$ admits a legal labeling.
For {\em any} $(\HH',S')$ such that $(\HH',S') \simm (\HH,S)$,
the partially labeled graph $\HH'$ also admits a legal labeling.
Thus, whether $\HH$ admits a legal labeling is determined by the equivalence class of $(\HH,S)$ (for any choice of $S$).

Roughly speaking, Theorem~\ref{thm:rel-2} shows that the equivalence class of $(\GG,X)$ is preserved
after replacing a subgraph $\HH$ of $\GG$ by another partially labeled graph $\HH'$
such that $(\HH,S) \simm (\HH',S')$.

\begin{theorem}\label{thm:rel-2}
Let $\GG = (G,\LL)$ and let $\HH=(H,\LL)$ be a subgraph $\GG$.
Suppose $\HH'$ is such that $(\HH,S) \simm (\HH',S')$ for some pole lists $S,S'$.
Let $\GG' = \replace(\GG,(\HH,S),(\HH',S'))$ be a partially labeled graph.
Designate a set $X \subseteq (V(G) - V(H)) \cup S$ as the poles of $\GG$, listed in some order,
and let $X'$ be the corresponding list of vertices in $\GG'$.
It follows that $(\GG,X) \simm (\GG',X')$.
\end{theorem}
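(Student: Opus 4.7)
\medskip

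\noindent\textbf{Proof plan for Theorem~\ref{thm:rel-2}.}
The plan is to produce a witness $\psi$ for $(\GG,X)\simm(\GG',X')$ directly from the good $1$-$1$ correspondence $\phi$ supplied by the assumption $(\HH,S)\simm(\HH',S')$, and then verify the Isomorphism and Extendibility clauses of $\simm$ using Lemma~\ref{lem:base}. Let $\xi(\GG,X)=(E_1,E_2,E_3)$ and $\xi(\GG',X')=(E_1',E_2',E_3')$, and retain the notation $(D_0,D_1,D_2,D_3)$ and $(D_0',D_1',D_2',D_3')$ from Lemma~\ref{lem:base}. I will define $\psi$ to be the restriction of $\phi$ to $E_1\cup E_2$.

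The first step is a locality check: I will show $E_1\cup E_2\subseteq D_0\cup D_1\cup D_2$ (and symmetrically for primes), so that $\psi$ makes sense and lands in $D_0'\cup D_1'\cup D_2'$. Since $X\subseteq D_0\cup S\subseteq D_0\cup D_1$, any vertex within distance $2r-1$ of $X$ that lies inside $V(H)$ is reached either from a pole in $S$ (staying within distance $2r-1$ of $S$ in $H$, hence in $D_1\cup D_2$) or from an outside vertex $u\in D_0$ through a pole (again within distance $2r-2$ of $S$ in $H$, hence in $D_1\cup D_2$). Combined with the observation that the induced subgraphs on $D_0\cup D_1\cup D_2$ and $D_0'\cup D_1'\cup D_2'$ are isomorphic via $\phi$ (pole-edges between $D_0$ and $S$ are preserved by $\replace$, edges within $D_0$ are identical, and edges within $D_1\cup D_2$ are preserved because $\phi|_{D_1\cup D_2}$ witnesses $(\HH,S)\simm(\HH',S')$), distances from $X$ are preserved, so $\psi$ bijects $E_1$ onto $E_1'$ and $E_2$ onto $E_2'$.

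Next I verify the Isomorphism clause of $\simm$ for $\psi$. The subgraph isomorphism of the $E_i$-induced parts follows from the subgraph isomorphism of the $D_i$-induced parts established above. Input labels and already-fixed output labels agree pointwise because $\phi$ is a witness; and $v$ is the $i$th pole of $X$ iff $\phi(v)$ is the $i$th pole of $X'$ by construction of $X'$.

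The Extendibility clause is the main obstacle, and this is where Lemma~\ref{lem:base} does the work. Given any assignment $\LL_\ast$ of output labels to $E_1\cup E_2$ extendible to a complete labeling $\LL_\diamond$ of $\GG$ that is locally consistent on $E_2\cup E_3$, I would like to produce an extension of the induced $\LL_\ast'$ on $E_1'\cup E_2'$ that is locally consistent on $E_2'\cup E_3'$. Here the subtlety is that $\LL_\diamond$ may fail to be locally consistent on vertices of $D_2\cup D_3$ that lie inside $E_1$, so the hypothesis of Lemma~\ref{lem:base} (local consistency on $D_2\cup D_3$) is not met verbatim; however, Lemma~\ref{lem:base} is proved by using extendibility inside $\HH$ and so only needs local consistency for enough of $D_2\cup D_3$ to supply the extension on $D_3$ — which is guaranteed because the labels on $D_1\cup D_2$ are extendible to $V(H)$ by the assumption that $\LL_\diamond$ extends $\LL_\ast$. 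I will first redo the complete labeling inside $V(H)$ using the extendibility of $\LL_\ast|_{D_1\cup D_2}$, which leaves labels on $D_0\cup D_1\cup D_2$ unchanged, and then apply Lemma~\ref{lem:base} to get $\LL_\diamond'$ satisfying its Conditions 1 and 2. Local consistency of $\LL_\diamond'$ on vertices of $E_2'\cup E_3'$ inside $D_2'\cup D_3'$ is Condition 2; for vertices of $E_2'\cup E_3'$ inside $D_0'$, their $\phi^{-1}$-images lie in $D_0$ and, by the distance-preserving property above, outside $E_1$, so they sit in $E_2\cup E_3$ where $\LL_\diamond$ is locally consistent, and Condition 1 transfers this to $\LL_\diamond'$. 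The reverse direction is identical after swapping the roles of primed and unprimed graphs (using that $\simm$ is symmetric, witnessed by $\phi^{-1}$). This yields the Extendibility clause and completes the construction of $\psi$.
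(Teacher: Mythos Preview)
Your overall approach---restrict a good correspondence $\phi$ to $E_1\cup E_2$ and verify the two clauses of $\simm$ via Lemma~\ref{lem:base}---is exactly the paper's. But there is a real gap in your Extendibility argument.

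You worry that $\LL_\diamond$ might fail to be locally consistent on $D_2\cup D_3$ (because some of those vertices could lie in $E_1$), and propose to ``redo'' the labeling on $D_3$ using extendibility of $\LL_\diamond|_{D_1\cup D_2}$ in $\HH$. But your justification for that extendibility---``by the assumption that $\LL_\diamond$ extends $\LL_\ast$''---is circular: extendibility of $\LL_\diamond|_{D_1\cup D_2}$ means there is a complete labeling of $H$ agreeing with it on $D_1\cup D_2$ and locally consistent on $D_2\cup D_3$; the only candidate you have in hand is $\LL_\diamond|_{V(H)}$ itself, and its consistency on $D_2\cup D_3$ is precisely what you were doubting.

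The fix is that your concern is unfounded. The same locality argument you gave at radius $2r-1$, rerun at radius $r-1$, gives $E_1\cap V(H)\subseteq D_1$: any shortest $G$-path of length $\le r-1$ from $v\in V(H)$ to some $x\in X\subseteq D_0\cup S$ either ends at a pole or first exits $H$ through a pole, so $\operatorname{dist}_H(v,S)\le r-1$. Hence $(D_2\cup D_3)\cap E_1=\emptyset$, i.e.\ $D_2\cup D_3\subseteq E_2\cup E_3$, and $\LL_\diamond$ already meets the hypothesis of Lemma~\ref{lem:base} verbatim; the ``redo'' step can be deleted. This is exactly the paper's route: in its notation (your $E_i$ are its $D_i$, your $D_i$ are its $B_i$) it simply records $D_2\cup D_3\supseteq B_2\cup B_3$ and applies Lemma~\ref{lem:base} directly. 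One minor addition: when you check local consistency on $E_2'\cup E_3'$, you handle $D_2'\cup D_3'$ (Condition~2) and $D_0'$ (Condition~1) but should also mention $D_1'$; the same distance-preservation argument you use for $D_0'$ covers it.
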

\begin{proof}
Consider the four partitions $\xi(\HH,S)=(B_1,B_2,B_3)$,  $\xi(\HH',S')=(B_1',B_2',B_3')$,
$\xi(\GG,X)=(D_1,D_2,D_3)$, and  $\xi(\GG',X')=(D_1',D_2',D_3')$.
We write $B_0 = V(G) - V(H)$ and $B_0' =  V(G') - V(H')$.
Let $\phi$ be any good 1-1 correspondence from  $B_0 \cup B_1 \cup B_2$ to $B_0' \cup B_1' \cup B_2'$.
Because $X \subseteq B_0 \cup S$, we have $D_1 \cup D_2 \subseteq B_0 \cup B_1 \cup B_2$ and $D_1' \cup D_2' \subseteq B_0' \cup B_1' \cup B_2'$.
To show that $(\GG,X) \simm (\GG',X')$, it suffices to prove that $\phi$ (restricted to the domain $D_1 \cup D_2$)
is a witness to the relation $(\GG,X) \simm (\GG',X')$.

Let $\LL_\ast : (D_1 \cup D_2)\rightarrow \LabelOut$ and $\LL_\ast'$ 
be the corresponding labeling of $D_1'\cup D_2'$.
All we need to do is show that  $\LL_\ast$ is extendible to all of $V(G)$ if and only if $\LL_\ast'$ is extendible to all of $V(G')$.
Since we can also write $\GG = \replace(\GG',(\HH',S'),(\HH,S))$,
it suffices to show just one direction, i.e., if $\LL_\ast$ is extendible then $\LL_\ast'$ is extendible.

Suppose that $\LL_\ast$ is extendible. Then there exists an output labeling $\LL_\diamond$ of $\GG$ such that (i) for each $v \in D_1 \cup D_2$, we have $\LL_\ast(v) = \LL_\diamond(v)$, and (ii) $\LL_\diamond$  is locally consistent for all vertices in $D_2 \cup D_3$.
Observe that  $D_2 \cup D_3 \supseteq B_2 \cup B_3$.
By Lemma~\ref{lem:base}, there exists a complete labeling $\LL_\diamond'$ of $\GG'$ such that the two conditions in Lemma~\ref{lem:base} are met. We show that this implies that $\LL_\ast'$ is extendible.

Lemma~\ref{lem:base} guarantees that $\LL_\diamond(v) = \LL_\diamond'(\phi(v))$ for each $v \in B_0 \cup B_1 \cup B_2$ and its corresponding vertex $\phi(v) \in B_0' \cup B_1' \cup B_2'$. 
Since $D_1' \cup D_2' \subseteq B_0' \cup B_1' \cup B_2'$, we have $\LL_\ast'(v') = \LL_\diamond'(v')$ 
for each $v' \in D_1' \cup D_2'$.

Since $\LL_\diamond$  is locally consistent for all vertices in $D_2 \cup D_3$, Lemma~\ref{lem:base} guarantees that $\LL_\diamond'$ is locally consistent for all vertices in $D_2' \cup D_3'$. More precisely, due to Condition 1, $\LL_\diamond'$ is locally consistent for all vertices in $(D_2' \cup D_3') - B_3'$; due to Condition 2, $\LL_\diamond'$ is locally consistent for all vertices in $B_2' \cup B_3'$.

Thus, $\LL_\ast'$ is extendible, as the complete labeling $\LL_\diamond'$ of $\GG'$ satisfies: (i)  for each $v' \in D_1' \cup D_2'$, we have $\LL_\ast'(v') = \LL_\diamond'(v')$, and (ii) $\LL_\diamond'$  is locally consistent for all vertices in $D_2' \cup D_3'$.
\end{proof}

\subsection{The Number of Equivalence Classes}\label{sect:equiv-classes}

An important feature of $\simm$ is that it has a \emph{constant} number of equivalence classes,
for any fixed number $p$ of poles.
Which constant is not important, but we shall work out an upper bound nonetheless.\footnote{For the sake of simplicity, in the calculation we assume that the underlying LCL problem does not refer to port-numbering. It is straightforward to see that even if port-numbering is taken into consideration, the number of equivalence classes (for any fixed $p$) is still a constant.}

Consider a partially labeled graph $\HH$ with poles $S=(v_1, \ldots, v_p)$.
Let $\xi(\HH,S)=(D_1,D_2,D_3)$ and define $\QQ=(Q,\LL)$ to be the subgraph of $\HH$ induced by $D_1 \cup D_2$.
Observe that the equivalence class of $(\HH,S)$ is determined by
(i) the topology of $Q$ (including its input labels from $\LabelIn$, if $\mathcal{P}$ has input labels),
(ii) the locations of the poles $S \subseteq V(Q)$ in $Q$,
and
(iii) the subset of all output labelings of $V(Q) = D_1 \cup D_2$ that are extendible.

The number of vertices in $D_1 \cup D_2$ is at most $p \Delta^{2r}$.
The total number of distinct graphs of at most $p \Delta^{2r}$ vertices (with input labels from $\LabelIn$ and a set of $p$ designated poles)
is at most $2^{p \Delta^{2r} \choose 2} \left|\LabelIn \right|^{p \Delta^{2r}}$.
The total number of output labelings of $D_1 \cup D_2$ is at most $\left|\LabelOut \right|^{p \Delta^{2r}}$.
Therefore, the total number of equivalence classes of graphs with $p$ poles is at most
$2^{p \Delta^{2r} \choose 2} \left|\LabelIn \right|^{p \Delta^{2r}} 2^{\left|\LabelOut \right|^{p \Delta^{2r}}}$,
which is constant whenever $\Delta, r, \left|\LabelIn \right|, \left|\LabelOut \right|$, and $p$ are.

\newcommand{\classSet}{\mathscr{C}}
\newcommand{\RR}{\mathcal{R}}

\subsection{A Pumping Lemma for Trees}\label{sect:aux}

In this section we consider partially labeled trees with one and two poles; they are called \emph{unipolar} (or \emph{rooted}) and \emph{bipolar}, respectively.
Let $\TT=(T,\LL)$ be a unipolar tree with pole list $S=(z)$, $z \in V(T)$.
Define $\class(\TT)$ to be the equivalence class of $(\TT, S)$ w.r.t.~$\simm$.
Notice that whether a partially labeled rooted tree $\TT$ admits a legal labeling is determined by $\class(\TT)$ (Theorem~\ref{thm:rel-1}).
We say that a class is {\em good} if each partially labeled rooted tree in the class admits a legal labeling;
otherwise the class is {\em bad}.
We write $\classSet$ to denote the set of all classes. Notice that $|\classSet|$ is constant.
The following lemma is a specialization of Theorem~\ref{thm:rel-2}.

\begin{lemma}\label{lem:replace-rootedtree}
Let $\TT$ be a partially labeled rooted (unipolar) tree, and let $\TT'$ be a rooted subtree of $\TT$,
whose leaves are also leaves of $\TT$.
Let $\TT''$ be another partially labeled rooted tree such that $\class(\TT') = \class(\TT'')$.
Then replacing $\TT'$ with $\TT''$ does not alter the class of $\TT$.
\end{lemma}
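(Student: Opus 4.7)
The plan is to invoke Theorem~\ref{thm:rel-2} directly, treating $\TT'$ as the subgraph $\HH$ and $\TT''$ as the replacement $\HH'$. The bulk of the work lies in identifying the correct pole lists so that the hypotheses of Theorem~\ref{thm:rel-2} are met and so that the conclusion specializes to a statement about $\class(\cdot)$.

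First I would determine the poles of $\TT'$ viewed as a subgraph of $\TT$. Because $\TT'$ is a rooted subtree of a tree and every leaf of $\TT'$ is also a leaf of $\TT$, the only vertex of $\TT'$ that can be adjacent to a vertex outside of $V(\TT')$ is the root $z'$ of $\TT'$ (any non-root, non-leaf vertex of $\TT'$ has all its neighbors in $V(\TT')$ by the tree structure). Hence the appropriate pole list is $S=(z')$, and by symmetry $S'=(z'')$ where $z''$ is the root of $\TT''$. The assumption $\class(\TT')=\class(\TT'')$ is, by definition, exactly $(\TT',(z')) \simm (\TT'',(z''))$, so the main hypothesis of Theorem~\ref{thm:rel-2} is in hand. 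Let $\TT^{*} = \replace(\TT,(\TT',S),(\TT'',S'))$.

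Next I would choose the pole list $X$ for $\TT$ required by Theorem~\ref{thm:rel-2}. Since $\TT$ is unipolar with root $z$, set $X=(z)$. There are two cases to verify the containment $X\subseteq (V(\TT)-V(\TT'))\cup S$. If $z\notin V(\TT')$, the containment is immediate and the corresponding vertex in $\TT^{*}$ is again $z$. If $z\in V(\TT')$, then because $z$ is the root of the whole tree $\TT$ and its parent (if any) in $\TT'$ would also be a parent in $\TT$, we must have $z=z'$; thus $X\subseteq S$ and the corresponding vertex in $\TT^{*}$ is $z''$. In either case, Theorem~\ref{thm:rel-2} yields $(\TT,X)\simm(\TT^{*},X')$, which is exactly the statement that $\class(\TT)=\class(\TT^{*})$.

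The only potential obstacle is bookkeeping: confirming that $\TT'$ really has just one interface vertex with the rest of $\TT$ (so it is a \emph{unipolar} subgraph, matching the definition of $\class(\cdot)$), and that the root $z$ of $\TT$ either lies outside $\TT'$ or coincides with the root of $\TT'$. Both facts follow from elementary properties of trees together with the assumption that the leaves of $\TT'$ are leaves of $\TT$. Once these observations are made, the lemma is a direct corollary of Theorem~\ref{thm:rel-2} with no additional argument required.
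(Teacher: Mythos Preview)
Your proposal is correct and follows exactly the approach the paper intends: the paper states the lemma as ``a specialization of Theorem~\ref{thm:rel-2}'' without further proof, and you have correctly filled in the bookkeeping (identifying the root $z'$ as the unique pole of $\TT'$ inside $\TT$, verifying $X=(z)\subseteq (V(\TT)-V(\TT'))\cup S$, and handling the degenerate case $z=z'$). Nothing more is needed.
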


Let $\HH=(H,\LL)$ be a bipolar tree with poles $S=(s,t)$.
The unique oriented path in $H$ from $s$ to $t$ is called the {\em core path} of $\HH$.
It is more convenient to express a bipolar tree as a \emph{sequence of rooted/unipolar trees}, as follows.
The partially labeled bipolar tree $\HH = (\TT_i)_{i \in [k]}$
is formed by arranging the roots of unipolar trees $(\TT_i)$ into a path $P=(v_1,\ldots,v_k)$, where $v_i$ is the root/pole of $\TT_i$.
The two poles of $\HH$ are $s=v_1$ and $t = v_k$, so $P$ is the core path of $\HH$.
Define $\type(\HH)$ as the equivalence class of $(\HH,S=(s,t))$ w.r.t. $\simm$.
The following lemma follows from Theorem~\ref{thm:rel-2}.

\begin{lemma}\label{lem:type2class}
Let $\HH$ be a partially labeled bipolar tree with poles $(s,t)$. Let $\TT$ be $\HH$, but regarded as a unipolar tree rooted at $s$.
Then $\class(\TT)$ is determined by $\type(\HH)$.
If we write $\HH = (\TT_i)_{i \in [k]}$, then $\type(\HH)$ is determined by $\class(\TT_1), \ldots, \class(\TT_{k})$.
\end{lemma}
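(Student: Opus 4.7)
Both statements are essentially specializations of Theorem~\ref{thm:rel-2} with carefully chosen pole lists. The plan is to verify, in each case, that the hypotheses of Theorem~\ref{thm:rel-2} are met, and in particular that the designated poles lie in the allowed set $(V(G)-V(H))\cup S$.

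\paragraph{First statement: $\class(\TT)$ is determined by $\type(\HH)$.}
Take any bipolar tree $\HH'$ with poles $(s',t')$ satisfying $\type(\HH)=\type(\HH')$, i.e.\ $(\HH,(s,t))\simm(\HH',(s',t'))$. I would apply Theorem~\ref{thm:rel-2} in the degenerate setting $\GG=\HH$ and replace $\HH$ by $\HH'$ (so $\GG'=\HH'$), choosing new pole lists $X=(s)$ and $X'=(s')$. Then $X\subseteq (V(G)-V(H))\cup S = \emptyset\cup\{s,t\}$, so the hypothesis of Theorem~\ref{thm:rel-2} is satisfied, and it yields $(\HH,(s))\simm(\HH',(s'))$, which is precisely $\class(\TT)=\class(\TT')$. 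Since this holds for every representative $\HH'$ of $\type(\HH)$, the class $\class(\TT)$ depends only on $\type(\HH)$.

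\paragraph{Second statement: $\type(\HH)$ is determined by the class sequence.}
Suppose $\HH=(\TT_i)_{i\in[k]}$ with core path $P=(v_1,\ldots,v_k)$ and $\HH'=(\TT'_i)_{i\in[k]}$ with core path $P'=(v'_1,\ldots,v'_k)$, and assume $\class(\TT_i)=\class(\TT'_i)$ for every $i$. The plan is to transform $\HH$ into $\HH'$ by a sequence of $k$ single-tree substitutions, showing that $\type$ is preserved at each step. Let $\HH^{(0)}=\HH$ and, for $i=1,\ldots,k$, let $\HH^{(i)}$ be obtained from $\HH^{(i-1)}$ by replacing the rooted subtree $\TT_i$ (viewed as a unipolar subgraph with single pole $v_i$, its only vertex adjacent to the rest of $\HH^{(i-1)}$) by $\TT'_i$ (with pole $v'_i$). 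Since $\class(\TT_i)=\class(\TT'_i)$, we have $(\TT_i,(v_i))\simm(\TT'_i,(v'_i))$, so Theorem~\ref{thm:rel-2} applies with $X=(s,t)$ (where $s=v_1$, $t=v_k$). The containment condition $X\subseteq(V(\HH^{(i-1)})-V(\TT_i))\cup\{v_i\}$ is easy to verify: for every $i$, each of $s$ and $t$ either equals $v_i$ or is a vertex on the core path disjoint from the interior of $\TT_i$. Hence $\type(\HH^{(i-1)})=\type(\HH^{(i)})$ for every $i$, and $\HH^{(k)}=\HH'$, so by transitivity $\type(\HH)=\type(\HH')$.

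\paragraph{Main difficulty.}
The proof is essentially bookkeeping once Theorem~\ref{thm:rel-2} is available; the only real thing to check is that in the decomposition $\HH=(\TT_i)_{i\in[k]}$, each $\TT_i$ really does sit inside $\HH$ as a subgraph whose unique pole (in the sense of ``vertex adjacent to the complement'') is $v_i$. This is where the tree structure is used: removing $v_i$ from $\HH$ disconnects $\TT_i\setminus\{v_i\}$ from the rest of the graph, so no other vertex of $\TT_i$ has a neighbor outside $\TT_i$. Once this is in hand, the repeated application of Theorem~\ref{thm:rel-2} is immediate.
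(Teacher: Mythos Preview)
Your proposal is correct and matches the paper's approach exactly: the paper simply states that the lemma follows from Theorem~\ref{thm:rel-2}, and your two applications of that theorem (with $X=(s)$ for the first claim, and iterated single-subtree replacements with $X$ the current pole pair for the second) are precisely the intended argument. One minor notational point: in the inductive replacement, after step~$1$ the first pole of $\HH^{(i-1)}$ is $v_1'$ rather than $v_1$, so $X$ should be updated accordingly at each step, but your containment check already handles this correctly.
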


Let $\GG=(G,\LL)$ be a partially labeled graph, and let $\HH=(H, \LL)$ be a
bipolar subtree of $\GG$ with poles $(s,t)$.
Let $\HH'$ be another partially labeled bipolar tree.
Recall that $\GG' = \replace(\GG, \HH, \HH')$ is defined as the partially labeled graph resulting from replacing the subgraph $\HH$ with $\HH'$ in $\GG$.
We write $\GG' = (G', \LL')$ and  $\HH'=(H', \LL')$.
The following lemmas follow from Theorems~\ref{thm:rel-1} and~\ref{thm:rel-2}.

\begin{lemma}\label{lem:replace}
Consider $\GG' = \replace(\GG,\HH,\HH')$.
If $\type(\HH') = \type(\HH)$ and $\GG$ admits a legal labeling $\LL_\diamond$,
then $\GG'$ admits a legal labeling $\LL_\diamond'$ such that
${\LL_\diamond}(v) = \LL_\diamond'(v')$ for each vertex $v \in V(G) - V(H)$ and its corresponding $v' \in V(G') - V(H')$.
\end{lemma}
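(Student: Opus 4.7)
The plan is to derive Lemma~\ref{lem:replace} as an essentially direct consequence of Theorem~\ref{thm:rel-1}, after unpacking what $\type(\HH) = \type(\HH')$ means. By definition, $\type(\HH)$ is the $\simm$-equivalence class of $(\HH, S)$ with $S = (s,t)$, and similarly $\type(\HH')$ is the class of $(\HH', S')$ with $S' = (s', t')$. So the hypothesis $\type(\HH) = \type(\HH')$ is exactly the statement $(\HH, S) \simm (\HH', S')$, which is precisely the input condition required to invoke Theorem~\ref{thm:rel-1}.

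First I would verify that the hypotheses of Theorem~\ref{thm:rel-1} are satisfied: $\HH$ is a subgraph of $\GG$ (given), $(\HH, S) \simm (\HH', S')$ (from $\type(\HH)=\type(\HH')$), and $\GG' = \replace(\GG,(\HH,S),(\HH',S'))$ (given). Moreover, $\LL_\diamond$ is a \emph{legal} labeling of $\GG$, so it is in particular a complete labeling that is locally consistent for \emph{every} vertex of $G$, and hence in particular for all vertices in $V(H)$. This is exactly the remaining precondition of Theorem~\ref{thm:rel-1}.

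Applying Theorem~\ref{thm:rel-1} then produces a complete labeling $\LL_\diamond'$ of $\GG'$ satisfying: (i) $\LL_\diamond(v) = \LL_\diamond'(v')$ for every $v \in V(G) - V(H)$ and its corresponding $v' \in V(G') - V(H')$, and (ii) $\LL_\diamond'$ is locally consistent at each such $v'$ whenever $\LL_\diamond$ is locally consistent at $v$, and (iii) $\LL_\diamond'$ is locally consistent at every vertex of $H'$. Property (i) is precisely the agreement condition in the lemma. To upgrade $\LL_\diamond'$ from merely a complete labeling to a \emph{legal} labeling, I would observe that since $\LL_\diamond$ is legal, it is locally consistent at every $v \in V(G) - V(H)$; so by (ii), $\LL_\diamond'$ is locally consistent at every $v' \in V(G') - V(H')$. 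Combined with (iii), $\LL_\diamond'$ is locally consistent at every vertex of $G'$, which is exactly the definition of legality.

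Since the only real work is translating between the $\type$ notation of this subsection and the $\simm$ notation of Theorem~\ref{thm:rel-1}, there is no substantive obstacle; the lemma is a specialization of the general replacement theorem to the bipolar tree setting, packaged so that it can be used as a black box in the subsequent $\extend$ and $\labeling$ constructions.
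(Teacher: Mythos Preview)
Your proposal is correct and matches the paper's approach exactly: the paper states that Lemma~\ref{lem:replace} follows from Theorem~\ref{thm:rel-1}, and the passage immediately following Theorem~\ref{thm:rel-1} already spells out the key consequence you use, namely that if $\LL_\diamond$ is a legal labeling of $\GG$ then the $\LL_\diamond'$ produced by Theorem~\ref{thm:rel-1} is also legal. Your unpacking of $\type(\HH)=\type(\HH')$ as $(\HH,S)\simm(\HH',S')$ and the verification of local consistency on $V(G')-V(H')$ and on $V(H')$ separately is precisely the intended argument.
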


\begin{lemma}\label{lem:replace-2}
Suppose that $\GG = (\TT_i)_{i \in [k]}$ is a partially labeled bipolar tree,
$\HH = (\TT_i, \ldots, \TT_j)$ is a bipolar subtree of $\GG$,
and $\HH'$ is some other partially labeled bipolar tree with $\type(\HH') = \type(\HH)$.
Then $\GG' = \replace(\GG,\HH,\HH')$ is a partially labeled bipolar tree
and $\type(\GG') = \type(\GG)$.
\end{lemma}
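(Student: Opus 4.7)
The plan is to deduce Lemma~\ref{lem:replace-2} from Theorem~\ref{thm:rel-2} after a short structural check that $\GG'$ is again a bipolar tree.

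First I would verify the structural claim. The bipolar tree $\GG = (\TT_i)_{i\in[k]}$ is formed by joining the roots $v_1, \ldots, v_k$ of the unipolar trees $\TT_i$ along the core path $P = (v_1, \ldots, v_k)$, and $\HH = (\TT_i, \ldots, \TT_j)$ is attached to the remainder of $\GG$ only through the edges $\{v_{i-1}, v_i\}$ and $\{v_j, v_{j+1}\}$ that happen to exist, so the poles of $\HH$ in the sense of Section~\ref{sec.rel} are exactly $v_i$ and $v_j$. When $\replace$ substitutes $\HH'$ for $\HH$, these at-most-two boundary edges are reattached to the two poles $s', t'$ of $\HH'$. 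Since $\HH'$ is itself a tree glued to the rest of $\GG$ at precisely two cut vertices, $\GG'$ remains a tree; moreover its new core path traverses the core path of $\HH'$, so $\GG'$ is a bipolar tree whose poles are naturally identified with the poles of $\GG$.

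Next I would invoke Theorem~\ref{thm:rel-2}. By the definition of $\type$, the hypothesis $\type(\HH') = \type(\HH)$ is exactly the relation $(\HH, S) \simm (\HH', S')$ with $S = (v_i, v_j)$ and $S'$ the pole list of $\HH'$. As the pole list $X$ of $\GG$ I would take $(v_1, v_k)$. Each coordinate of $X$ is either strictly outside $V(H)$ (when $i > 1$ resp.\ $j < k$) or coincides with a pole of $\HH$ (when $i = 1$ resp.\ $j = k$), so $X \subseteq (V(G) \setminus V(H)) \cup S$, which is precisely the containment required by the hypotheses of Theorem~\ref{thm:rel-2}. The corresponding list $X'$ in $\GG'$ is obtained by copying the outside vertices through the natural bijection $V(G)\setminus V(H) \to V(G')\setminus V(H')$ and replacing any pole of $\HH$ by the corresponding pole of $\HH'$, which is exactly the pole list of $\GG'$ as a bipolar tree. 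Theorem~\ref{thm:rel-2} then gives $(\GG, X) \simm (\GG', X')$, i.e.\ $\type(\GG) = \type(\GG')$.

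I do not expect any serious obstacle: the argument is almost entirely bookkeeping on top of Theorem~\ref{thm:rel-2}. The one point that demands attention is the case analysis at the endpoints $i = 1$ and $j = k$, where the poles of $\GG$ need not lie strictly outside $V(H)$; the hypothesis of Theorem~\ref{thm:rel-2} is phrased precisely to permit $X$-vertices that coincide with poles of $\HH$, so the degenerate cases cause no trouble.
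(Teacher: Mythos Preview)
Your proposal is correct and follows exactly the route the paper indicates: the paper simply states that Lemma~\ref{lem:replace-2} ``follows from Theorem~\ref{thm:rel-2}'' without further detail, and your write-up is a faithful unpacking of that claim, including the necessary endpoint bookkeeping for the cases $i=1$ and $j=k$. One minor quibble: when $i=1$ (or $j=k$), the vertex $v_i$ (resp.\ $v_j$) is not a pole of $\HH$ ``in the sense of Section~\ref{sec.rel}'' since it has no neighbor outside $V(H)$; but this does not affect the argument, because $\replace$ and Theorem~\ref{thm:rel-2} work verbatim when $S$ is taken to be the designated bipolar pole list $(v_i,v_j)$, which merely contains the embedding poles as a subset.
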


\begin{lemma} \label{lem:type}
Let $\HH = (\TT_i)_{i \in [k]}$ and
$\HH' = (\TT_i)_{i \in [k+1]}$ be identical to $\HH$ in its first $k$ trees.
Then $\type(\HH')$ is a function of $\type(\HH)$ and $\class(\TT_{k+1})$.
\end{lemma}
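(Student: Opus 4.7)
The plan is to prove the lemma by expressing the extended tree as the result of two successive graph-surgery operations applied to the shorter one, each controlled by one of the two invariants, and then combining them. Specifically, I will fix bipolar trees $\HH_1 = (\TT_i)_{i\in[k]}$ and $\HH_2 = (\TT_i')_{i\in[k]}$ with $\type(\HH_1) = \type(\HH_2)$, together with unipolar trees $\TT_{k+1}, \TT_{k+1}'$ satisfying $\class(\TT_{k+1}) = \class(\TT_{k+1}')$, and form the extensions $\HH_1^+ = (\TT_1,\ldots,\TT_k,\TT_{k+1})$ and $\HH_2^+ = (\TT_1',\ldots,\TT_k',\TT_{k+1}')$. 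The goal then becomes to prove $\type(\HH_1^+) = \type(\HH_2^+)$, which is exactly the functional dependence asserted by the lemma.

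First, I would view $\HH_1$ as a bipolar subtree of $\HH_1^+$, with poles at the root of $\TT_1$ and the root of $\TT_k$ (the attachment point for $\TT_{k+1}$). Since $\type(\HH_1) = \type(\HH_2)$, Lemma~\ref{lem:replace-2} allows replacing $\HH_1$ with $\HH_2$ inside $\HH_1^+$, producing an intermediate bipolar tree $\HH^\star = (\TT_1',\ldots,\TT_k',\TT_{k+1})$ with $\type(\HH^\star) = \type(\HH_1^+)$. Next, I would regard the single tree $\TT_{k+1}$ as a \emph{unipolar} rooted subgraph of $\HH^\star$, whose sole pole is its root $z$. From $\class(\TT_{k+1}) = \class(\TT_{k+1}')$ we obtain $(\TT_{k+1},(z)) \simm (\TT_{k+1}',(z'))$, where $z'$ is the root of $\TT_{k+1}'$. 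The two poles of $\HH^\star$ are the root of $\TT_1'$ and $z$ itself; both lie in $(V(\HH^\star) - V(\TT_{k+1})) \cup \{z\}$, which matches the admissibility hypothesis $X \subseteq (V(G) - V(H)) \cup S$ of Theorem~\ref{thm:rel-2}. Applying that theorem yields $\type(\HH^\star) = \type(\HH_2^+)$, and chaining with the previous step gives $\type(\HH_1^+) = \type(\HH_2^+)$.

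The subtle point — essentially bookkeeping rather than a genuine obstacle — is the transition from a bipolar replacement to a unipolar one between the two steps: Lemma~\ref{lem:replace-2} is stated only for bipolar subtrees, so to swap out the trailing unipolar tree $\TT_{k+1}$ one must fall back on the more general Theorem~\ref{thm:rel-2}. This is legitimate precisely because the root of $\TT_{k+1}$ is simultaneously its unique pole and the second pole of the ambient bipolar tree $\HH^\star$, so the pole list of $\HH^\star$ automatically fits inside the admissible set $(V(\HH^\star) - V(\TT_{k+1})) \cup S$ that Theorem~\ref{thm:rel-2} requires. Once that check is made explicit, the two-step replacement argument goes through cleanly and yields the claimed functional dependence.
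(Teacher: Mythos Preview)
Your proposal is correct and is essentially the argument the paper has in mind: the paper does not spell out a proof of this lemma but simply remarks that it (together with Lemmas~\ref{lem:replace} and~\ref{lem:replace-2}) follows from Theorems~\ref{thm:rel-1} and~\ref{thm:rel-2}, and your two-step replacement---first swapping the bipolar prefix via Lemma~\ref{lem:replace-2}, then the trailing unipolar tree via Theorem~\ref{thm:rel-2}---is precisely the natural way to unpack that remark. Your explicit check that the pole list of $\HH^\star$ lies in $(V(\HH^\star)-V(\TT_{k+1}))\cup\{z\}$ is exactly the bookkeeping needed to make the second application of Theorem~\ref{thm:rel-2} legitimate.
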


Lemma~\ref{lem:type} is what allows us to bring classical automata theory into play.
Suppose that we somehow computed and stored $c_i = \class(\TT_i)$ at the root of $\TT_i$.
Lemma~\ref{lem:type} implies that a finite automaton walking along the core path of $\HH' = (\TT_i)_{i\in [k+1]}$,
can compute $\type(\HH')$, by reading the vector $(c_1,\ldots,c_{k+1})$ one character at a time.
The number of states in the finite automaton depends only on the number of types (which is constant)
and is independent of $k+1$ and the size of the individual trees $(\TT_i)$.
Define $\Lpump=O(1)$ as the number of states in this finite automaton.
The following {\em pumping lemma} for bipolar trees is analogous to the pumping lemma
for regular languages.

\begin{lemma}\label{thm:pump}
Let $\HH = (\TT_1, \ldots, \TT_k)$, with $k \geq \Lpump$.
We regard each $\TT_i$ in the string notation $\HH = (\TT_1, \ldots, \TT_k)$ as a character.
 Then $\HH$ can be decomposed into three substrings $\HH = x \circ y \circ z$ such that (i) $|xy| \leq \Lpump$, (ii) $|y|\geq 1$, and (iii) $\type(x \circ y^j \circ z) = \type(\HH)$ for each non-negative integer $j$.
\end{lemma}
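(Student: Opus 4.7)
The plan is to mimic the classical pumping lemma for regular languages, using Lemma~\ref{lem:type} as the transition function of a deterministic finite automaton $M$. I would take the states of $M$ to be the $\Lpump$ possible types of bipolar trees (together with a distinguished initial state $t_0$, if needed to make the pigeonhole bound match exactly), the alphabet to be the finite set $\classSet$ of classes, and the transitions to be the map supplied by Lemma~\ref{lem:type}. Feeding the word $(\class(\TT_1), \ldots, \class(\TT_k))$ into $M$ produces a state sequence $t_0, t_1, \ldots, t_k$ where $t_i = \type((\TT_1, \ldots, \TT_i))$ for $i \geq 1$. Since $M$ has $\Lpump$ states and $k \geq \Lpump$, pigeonhole applied to the first $\Lpump + 1$ positions yields indices $0 \leq a < b \leq \Lpump$ with $t_a = t_b$.

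Set $x = (\TT_1, \ldots, \TT_a)$, $y = (\TT_{a+1}, \ldots, \TT_b)$, and $z = (\TT_{b+1}, \ldots, \TT_k)$, so that $|xy| = b \leq \Lpump$ and $|y| = b - a \geq 1$. To establish the pumped equality, I would argue as follows. Because $\type(x) = t_a = t_b = \type(x \circ y)$, and both $x$ and $x \circ y$ are bipolar subtrees of $\HH$ sharing their two poles (namely $s$ and the root of $\TT_a$ or $\TT_b$, respectively), Lemma~\ref{lem:replace-2} permits swapping one for the other inside any larger bipolar tree without altering the type of the whole. Starting from $\HH = x \circ y \circ z$, a single such swap (replacing the prefix $x \circ y$ with $x$) gives $\type(x \circ z) = \type(\HH)$, which settles the case $j = 0$. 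For $j \geq 1$, I iterate the opposite swap: inside $x \circ y^{j-1} \circ z$, replacing the prefix $x$ by $x \circ y$ yields $x \circ y^j \circ z$ with the same type, so induction on $j$ delivers $\type(x \circ y^j \circ z) = \type(\HH)$ for every $j \geq 0$.

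The only genuine subtlety I foresee is the corner case $a = 0$, where $x$ is the empty string and the state $t_0$ does not correspond to any actual bipolar tree. I expect to handle this either by treating $t_0$ as a distinguished formal initial state of $M$ and deducing $\type(z) = \type(\HH)$ directly from the automaton equality $t_0 = t_b$ (essentially a ``delete the prefix $y$'' operation justified by a minor reformulation of Lemma~\ref{lem:replace-2}), or by recalibrating the constant $\Lpump$ so that pigeonhole is applied to positions $1, \ldots, \Lpump + 1$ and $a \geq 1$ holds automatically. Either route is pure bookkeeping; the substantive content is recognizing that Lemma~\ref{lem:type} casts the sequence-of-classes-to-type map as a DFA, after which the Myhill--Nerode pumping argument transfers verbatim to bipolar trees and no real obstacle remains.
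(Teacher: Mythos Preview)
Your proposal is correct and matches the paper's approach: the paragraph preceding the lemma sets up the finite automaton via Lemma~\ref{lem:type}, defines $\Lpump$ as its number of states, and declares the result ``analogous to the pumping lemma for regular languages,'' which is exactly what you carry out. One minor simplification: once you have $t_a = t_b$, iterating Lemma~\ref{lem:type} directly already gives $\type(x \circ y^j \circ z) = t_k$ for every $j \ge 0$, so routing the argument through Lemma~\ref{lem:replace-2} (and worrying about the prefix-replacement edge case) is unnecessary.
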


We will use Lemma~\ref{thm:pump} to expand the length of the core path of a bipolar tree to be close to a
desired \emph{target length} $w$.  The specification for the function $\pump$ is as follows.
\begin{description}
\item[$\pump$]
Let $\HH = (\TT_i)_{i \in [k]}$ be a partially labeled bipolar tree with $k \geq \Lpump$.
$\pump(\HH,w)$ produces a partially labeled bipolar tree $\HH' = (\TT_i')_{i \in [k']}$
such that
(i) $\type(\HH) = \type(\HH')$,
(ii) $k' \in [w,w+\Lpump]$,
and
(iii) if we let $Z=\{\TT_i\}_{i\in [k]}$ (resp., $Z' = \{\TT_i'\}_{i\in [k']}$) be the \underline{\emph{set}}
of rooted trees appearing in the tree list of
$\HH$ (resp., $\HH'$), then $Z' = Z$.
\end{description}
By Lemma~\ref{thm:pump}, such a function $\pump$ exists.

\subsection{Rake \& Compress Graph Decomposition\label{sec.decomp}}

In this section we describe an $O(\log n)$-round $\DetLOCAL$ algorithm to
decompose the vertex set $V(G)$ of a tree into the disjoint union $V_1\cup\cdots \cup V_L$,  $L=O(\log n)$.
Our algorithm is inspired by Miller and Reif's {\em parallel tree contraction}~\cite{MillerR89}.
We first describe the decomposition algorithm then analyze its properties.

Fix the constant $\ell = 2(r + \Lpump)$, where $r,\Lpump$
depend on the LCL problem $\mathcal{P}$.
In the \emph{postprocessing} step of the decomposition algorithm we compute an
\emph{$(\ell,2\ell)$-independent set}, in $O(\log^* n)$ time~\cite{Linial92}, defined as follows.

\begin{definition}
Let $P$ be a path.  A set $I\subset V(P)$ is called an \emph{$(\alpha,\beta)$-independent set}
if the following conditions are met:
(i) $I$ is an independent set, and $I$ does not contain either endpoint of $P$, and
(ii) each connected component induced by $V(P)- I$ has at least $\alpha$ vertices and at most $\beta$ vertices,
unless $|V(P)|<\alpha$, in which case $I=\emptyset$.
\end{definition}

\paragraph{The Decomposition Algorithm.}
The algorithm begins with $U=V(G)$ and $i=1$,
repeats Steps 1--3 until $U=\emptyset$, then executes the \emph{Postprocessing} step.
\begin{enumerate}
\item For each $v \in U$:
 \begin{enumerate}
  \item {\sf Compress.} If $v$ belongs to a path $P$ such that $|V(P)|\geq \ell$ and $\deg_U(u) = 2$ for each $u \in V(P)$, then tag $v$ with $i_C$.
 \item {\sf Rake.} If $\deg_U(v)  = 0$, then tag $v$ with $i_R$. If $\deg_U(v)  = 1$ and the unique neighbor $u$ of $v$ in $U$ satisfies either (i) $\deg_U(u) > 1$ or (ii) $\deg_U(u) =1 $ and $\ID(v)  > \ID(u)$, then tag $v$ with $i_R$.
 \end{enumerate}
\item Remove from $U$ all vertices tagged $i_C$ or $i_R$.
\item $i \leftarrow i+1$.
\end{enumerate}

\paragraph{Postprocessing Step.}
Initialize $V_i$ as the set of all vertices tagged $i_C$ or $i_R$.  At this point the graph induced by
$V_i$ consists of \emph{unbounded length} paths, but we prefer constant length paths.
For each edge $\{u,v\}$ such that $v$ is tagged $i_R$ and $u$ is tagged $i_C$,
promote $v$ from $V_i$ to $V_{i+1}$.
For each path $P$ that is a connected component induced by vertices tagged
$i_C$, compute an $(\ell,2\ell)$-independent set $I_P$ of $P$,
and then promote every vertex in $I_P$
from $V_i$ to $V_{i+1}$.\footnote{The set $V_i$ in the graph decomposition is analogous to (but clearly different from) the set $V_i$ defined in the Hierarchical $2\f{1}{2}$-coloring problem from Section~\ref{sec.poly}.}

\paragraph{Properties of the Decomposition.}
As we show below, $L=O(\log n)$ iterations suffice, i.e.,
$V(G) = V_1\cup \cdots\cup V_L$.
The following properties are easily verified.
 \begin{itemize}
 \item Define $\Gd{i}$ as the graph induced by vertices at level $i$ or above: $\bigcup_{j=i}^L V_j$.
For each $v\in V_i$, $\deg_{\Gd{i}}(v) \leq 2$.

 \item Define $\Pset_i$ as the set of connected components (paths) induced by vertices in $V_i$ that contain more than one vertex.
 For each $P \in \Pset_i$,  $\ell \leq |V(P)| \leq 2\ell$
 and $\deg_{\Gd{i}}(v) = 2$ for each vertex $v \in V(P)$.

 \item The graph $\Gd{L}$ contains only isolated vertices, i.e., $\Pset_L = \emptyset$.
 \end{itemize}

As a consequence, each vertex $v \in V_i$ falls into exactly one of two cases:
(i) $v$ has $\deg_{\Gd{i}}(v) \leq 1$ and has no neighbor in $V_i$,
or
(ii) $v$ has $\deg_{\Gd{i}}(v) = 2$ and is in some path $P \in \Pset_i$.

\paragraph{Analysis.} We prove that for $L = O(\log_{1+1/\ell} n) = O(\log n)$,
$L$ iterations of the graph decomposition routine suffices to decompose any $n$-vertex tree.
Each iteration of the routine takes $O(1)$ time, and the $(\ell,2\ell)$-independent set
computation at the end takes $O(\log^\ast n)$ time, so $O(\log n)$ time suffices in $\DetLOCAL$.

Let $W$ be the vertices of a connected component induced by $U$ at the beginning of the $i$th iteration.\footnote{In general, the graph induced by $U$ is a forest. It is simpler to analyze a single connected component $W$.}
We claim that at least a constant $\Omega(1/\ell)$ fraction of vertices in
$W$ are eliminated (i.e., tagged $i_C$ or $i_R$) in the $i$th iteration.
The proof of the claim is easy for the special case of $\ell = 1$, as follows. If $W$ is not a single edge, then all
$v \in W$ with $\deg_U(v) \leq 2$ are eliminated. Since the degree of at least half of the vertices in a tree is at most 2, the claim follows.   In general, degree-2 paths of length less than $\ell$ are not eliminated quickly.  If one endpoint of such a path is a leaf, vertices in the path are peeled off by successive \emph{Rake} steps.

Assume w.l.o.g.~that $|W| > 2 (\ell + 1)$.
Define $W_1 = \{v \in W \;|\; \deg_U(v) = 1\}$, $W_2 = \{v \in W \;|\; \deg_U(v) = 2\}$, and $W_3= \{v \in W \;|\; \deg_U(v) \geq 3\}$.
\begin{description}
\item[Case 1:] $|W_2| \geq \frac{\ell  |W|}{\ell + 1}$. The number of connected components induced by vertices in $W_2$ is at most $|W_1|+|W_3|-1 < \frac{|W|}{\ell + 1}$. The number of vertices in $W_2$ that are not tagged $i_C$ during {\sf Compress} is less than $\frac{(\ell - 1)|W|}{\ell + 1}$. Therefore, at least $\frac{\ell  |W|}{\ell + 1} - \frac{(\ell - 1)|W|}{\ell + 1} = \frac{|W|}{\ell + 1}$ vertices are tagged $i_C$ by {\sf Compress}.
\item[Case 2:] $|W_2| < \frac{\ell  |W|}{\ell + 1}$. In any tree $|W_1| > |W_3|$, so
$|W_1| > \frac{|W_1|+|W_3|}{2} = \frac{|W|-|W_2|}{2} \geq \frac{|W|}{2(\ell + 1)}$.
Therefore, at least $\frac{|W|}{2(\ell + 1)}$ vertices are tagged $i_R$ by {\sf Rake}.
\end{description}
Hence the claim follows.

\subsection{$\extend$ and $\labeling$ Operations \label{sec.op}}

In this section we define two operations $\extend$ and $\labeling$ which are used extensively
in Sections~\ref{sec.tree-construct}---\ref{sec.tree}.
The operation $\extend$ is parameterized by a target length $w \geq \ell = 2(r+\Lpump)$.
The operation $\labeling$ is parameterized by a function $f$ which takes a partially labeled bipolar tree
$\HH$ as input, and assigns output labels to the vertices in $v \in N^{r-1}(e)$,
where $e$ is the middle edge in the core path of $\HH$.\footnote{By definition,
if $e=\{x,y\}$ then $N^{r-1}(e) = N^{r-1}(x)\cup N^{r-1}(y)$.}
\begin{description}
\item[\labeling.]
Let $\HH=(\TT_1, \ldots, \TT_x)$ be a partially labeled bipolar tree with $x \geq \ell$.
Let $(v_1, \ldots, v_x)$ be the core path of $\HH$
and $e = \{v_{\lfloor x/2 \rfloor}, v_{\lfloor x/2 \rfloor+1}\}$ be the middle edge of the core path.
It is guaranteed that all vertices in $N^{r-1}(e)$ in $\HH$ are not already assigned output labels.
The partially labeled bipolar tree $\HH' = \labeling(\HH)$ is defined as the result of assigning output labels to vertices in $N^{r-1}(e)$ by the function $f$.\footnote{Note that the neighborhood function is evaluated w.r.t.~$H$.  In particular, the set $N^{r-1}(e)$ contains the vertices $v_{\lfloor x/2 \rfloor-r+1}, \ldots, v_{\floor{x/2} + r}$ of the core path,
and also contains parts of the trees $\TT_{\floor{x/2}-r+1}, \ldots, \TT_{\floor{x/2} + r}$.}
\item[\extend.]
Let $\HH=(\TT_1, \ldots, \TT_x)$ be a partially labeled bipolar tree with $x \in [\ell, 2w]$.
The partially labeled bipolar tree  $\HH' = \extend(\HH)$ is defined as follows.
Consider the decomposition $\HH = \XX \circ \YY \circ \ZZ$, where
$\YY = (\TT_{\lfloor x/2 \rfloor-r+1}, \ldots, \TT_{\lfloor x/2 \rfloor + r})$.
Then $\HH' = \pump(\XX,w) \circ \YY \circ \pump(\ZZ,w)$.
\end{description}

Intuitively, the goal of the operation $\extend$ is to extend the length of the core path of $\HH$ while preserving the type of $\HH$, due to Lemma~\ref{lem:replace-2}. Suppose that the number of vertices in the core path of $\HH$ is in the range $[\ell, 2\ell]$.  The prefix $\XX$ and suffix $\ZZ$ are stretched to lengths in the range $[w,w+\Lpump]$,
and the middle part $\YY$ has length $2r$, so the core path of $\HH'$ has length in the range
$[2(w+r), 2(w+r+\Lpump)]$.

The reason that the $\extend$ operation does not modify the middle part $\YY$ is to ensure that (given any labeling function $f$) the type of $\HH' =\extend(\labeling(\HH))$ is invariant over all choices of the parameter $w$.\footnote{Notice that $\extend$ is applied {\em after} $\labeling$. Thus, the vertices that are assigned output labels during $\labeling$ must be within the middle part $\YY$, no part of which is modified during $\extend$.}
We have the following lemma.

\begin{lemma}\label{lem:recover-1}
Let $\GG=(G,\LL)$ be a partially labeled graph and $\HH=(H,\LL)$ be a bipolar subtree of $\GG$ with poles $(s,t)$.
Let $\tilde{\HH}$ be another partially labeled bipolar tree with $\type(\tilde{\HH}) = \type(\HH)$
and $\HH' = \extend(\labeling(\tilde{\HH}))$.
If $\GG' = \replace(\GG,\HH,\HH')$ admits a legal labeling $\LL_\diamond'$,
then $\GG$ admits a legal labeling $\LL_\diamond$ such that ${\LL_\diamond}(v) = \LL_\diamond'(v')$ for each vertex $v \in V(G) - V(H)$ and its corresponding vertex $v' \in V(G') - V(H')$.
\end{lemma}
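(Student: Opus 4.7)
The plan is to decompose the transformation from $\GG'$ back to $\GG$ into three stages by interposing two intermediate graphs. Set $\tilde{\HH}_1 = \labeling(\tilde{\HH})$, so that $\HH' = \extend(\tilde{\HH}_1)$, and define $\GG^{(1)} = \replace(\GG, \HH, \tilde{\HH}_1)$ and $\GG^{(2)} = \replace(\GG, \HH, \tilde{\HH})$. I will show that each link in the chain $\GG' \to \GG^{(1)} \to \GG^{(2)} \to \GG$ transfers a legal labeling to the next graph, with the labelings agreeing on the common exterior $V(G)-V(H)$.

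For the first link, I observe that $\type(\HH') = \type(\tilde{\HH}_1)$. Writing $\tilde{\HH}_1 = \XX \circ \YY \circ \ZZ$, the definition of $\extend$ gives $\HH' = \pump(\XX,w) \circ \YY \circ \pump(\ZZ,w)$; by the specification of $\pump$ each of the two substitutions preserves the type of the substituted substring, and by two applications of Lemma~\ref{lem:replace-2} the overall substitution preserves the type. Since $\GG^{(1)} = \replace(\GG', \HH', \tilde{\HH}_1)$ (the two replacements commute, as both act only inside $V(H)$ whose exterior is identical in both constructions), Lemma~\ref{lem:replace} applied with the roles of the ``old'' and ``new'' tree swapped---legitimate because type equivalence is symmetric---converts the given legal labeling $\LL_\diamond'$ of $\GG'$ into a legal labeling $\LL_\diamond^{(1)}$ of $\GG^{(1)}$ that agrees with $\LL_\diamond'$ on $V(G') - V(H')$.

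The second link is essentially trivial. The graphs $\GG^{(1)}$ and $\GG^{(2)}$ have identical underlying graphs and identical input labels, and differ only in that $\GG^{(1)}$ commits the $f$-assigned output labels on the vertices in $N^{r-1}(e)$ while $\GG^{(2)}$ leaves those vertices as $\bottom$. Thus $\GG^{(2)}$'s partial labeling is a strict relaxation of $\GG^{(1)}$'s, and the very same label assignment $\LL_\diamond^{(1)}$ is a legal labeling of $\GG^{(2)}$ without any modification: local consistency is a property of the complete labeling and the underlying graph (both unchanged), and compatibility with a weaker partial labeling is automatic.

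For the third link, the hypothesis $\type(\tilde{\HH}) = \type(\HH)$ permits one more application of Lemma~\ref{lem:replace} with swapped roles, taking $\LL_\diamond^{(2)}$ (just established) and $\GG = \replace(\GG^{(2)}, \tilde{\HH}, \HH)$ as inputs, to produce a legal labeling $\LL_\diamond$ of $\GG$ agreeing with $\LL_\diamond^{(2)}$ on $V(G^{(2)}) - V(\tilde{H}) = V(G) - V(H)$. Composing the three agreements yields $\LL_\diamond(v) = \LL_\diamond'(v')$ for every $v \in V(G) - V(H)$ and its counterpart $v' \in V(G')-V(H')$, as required. The main pitfall to avoid is the temptation to compare $\type(\tilde{\HH}_1)$ with $\type(\tilde{\HH})$ directly: because $\labeling$ can strictly shrink the set of extendible near-pole labelings, those two types need not coincide. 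The three-step route is designed precisely so that the ``forgetting'' of the $f$-committed labels happens through the trivial relaxation step rather than through a $\type$-equivalence.
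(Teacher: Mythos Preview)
Your proof is correct and uses the same two ingredients as the paper: that $\extend$ is type-preserving (via $\pump$ and Lemma~\ref{lem:replace-2}) and that dropping the $f$-committed labels is a harmless relaxation. The paper's version is slightly shorter: it uses a single intermediate $\HH'' = \extend(\tilde{\HH})$ and $\GG'' = \replace(\GG,\HH,\HH'')$, observing that $\HH' = \extend(\labeling(\tilde{\HH}))$ and $\HH''$ have the \emph{same underlying graph} (the labeled vertices lie entirely in the middle block $\YY$, which $\extend$ leaves untouched), so $\LL_\diamond'$ is already a legal labeling of $\GG''$; then one application of Lemma~\ref{lem:replace} using $\type(\HH'') = \type(\tilde{\HH}) = \type(\HH)$ finishes. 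Your three-step route avoids having to argue that $\extend$ and $\labeling$ ``commute'' on the underlying graph, at the cost of one extra invocation of Lemma~\ref{lem:replace}; both arguments are equally valid.
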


\begin{proof}
Recall that the operation $\extend$ guarantees that $\type(\extend(\tilde{\HH})) = \type(\tilde{\HH}) = \type(\HH)$.
Define $\HH'' = \extend(\tilde{\HH})$ and $\GG'' = \replace(\GG,\HH,\HH'')$.
Observe that $\HH' = \extend(\labeling(\tilde{\HH}))$ can be seen as the result of fixing the output labels of some unlabeled vertices in $\HH'' = \extend(\tilde{\HH})$. Therefore, $\LL_\diamond'$ is also a legal labeling of $\GG''$.
By Lemma~\ref{lem:replace}, the desired legal labeling  $\LL_\diamond$ of $\GG = \replace(\GG'',\HH'',\HH)$ can be obtained from the legal labeling  $\LL_\diamond'$ of $\GG''$.
\end{proof}

In addition to $\extend$ and $\labeling$, we also modify trees using the $\cut$ operation, defined below.
\begin{description}
\item[\cut.]
Let $\GG=(G,\LL)$ be a partially labeled graph and $\HH=(H, \LL)$ be a bipolar subtree with poles $(s,t)$.
Suppose that $\HH$ is connected to the rest of $\GG$ via two edges $\{u,s\}$ and $\{v,t\}$.
The partially labeled graph $\GG' = \cut(\GG,\HH)$
is formed by
(i) duplicating $\HH$ and the edges $\{u,s\},\{v,t\}$ so that $u$ and $v$ are attached to both copies of $\HH$,
(ii) removing the edge that connects $u$ to one copy of $\HH$, and removing the edge from $v$ to the other copy of $\HH$.
\end{description}

\begin{figure}
\centerline{\scalebox{.35}{\includegraphics{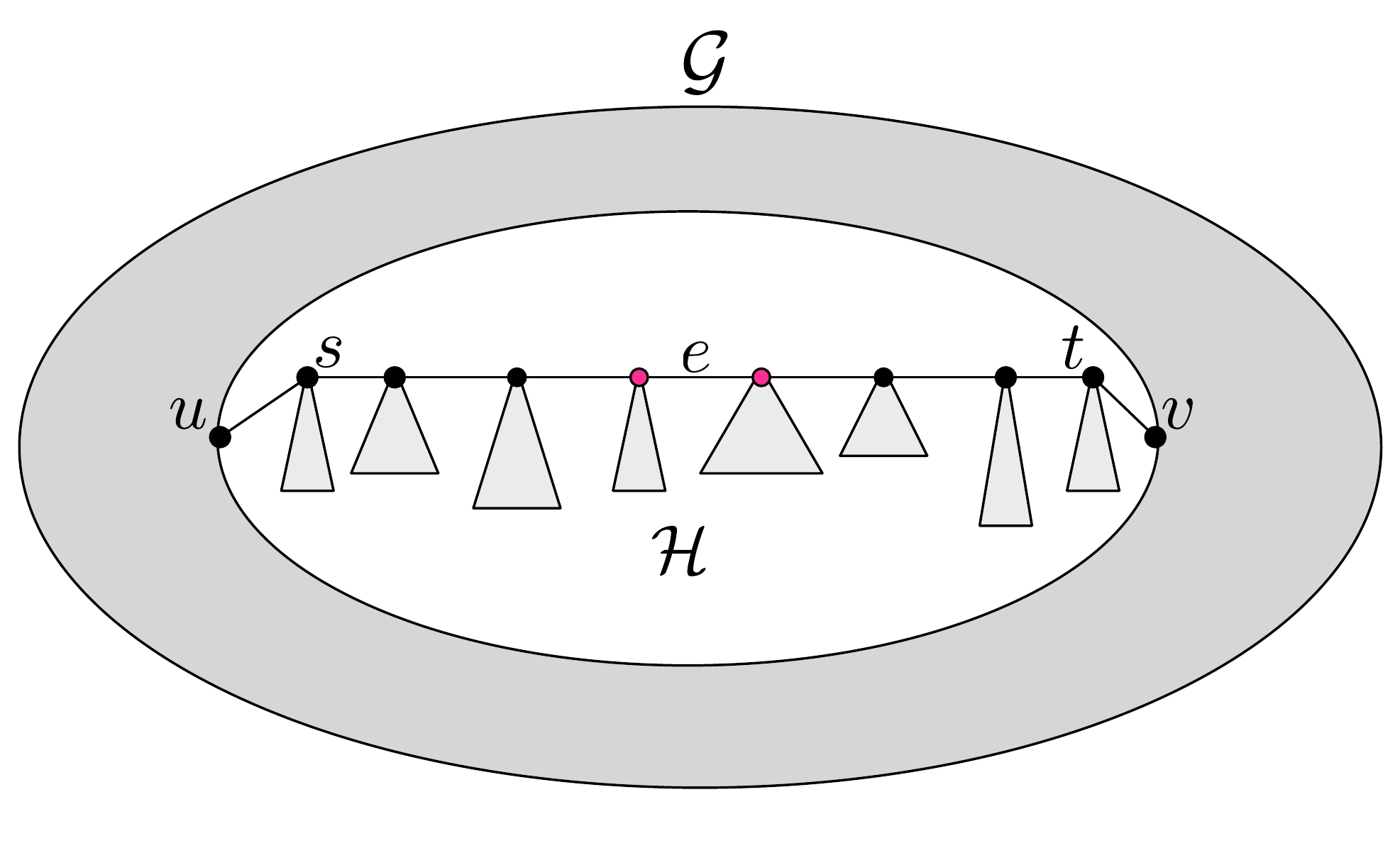}}\hspace{1cm}\scalebox{.35}{\includegraphics{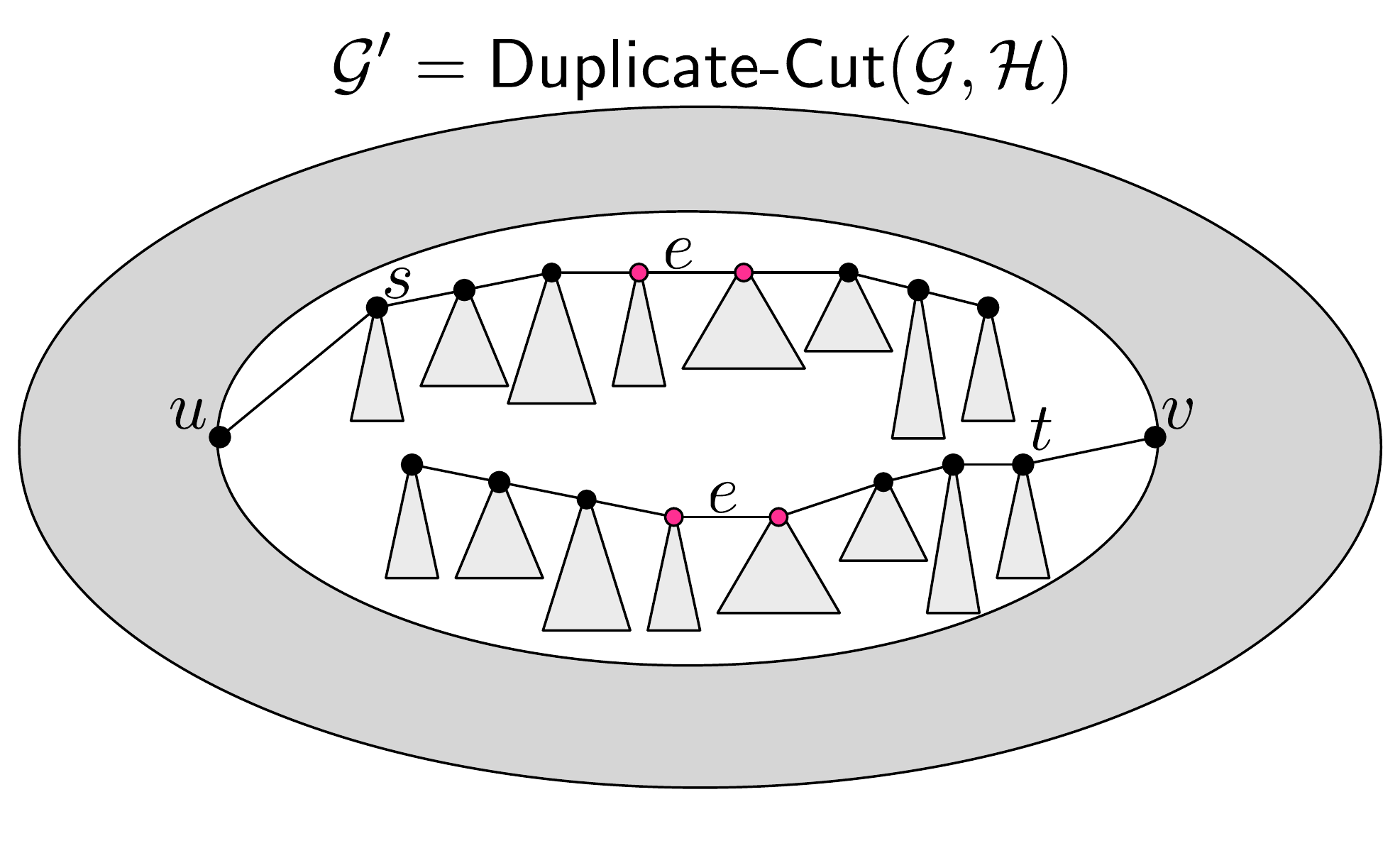}}}
\caption{\label{fig:cut}{\bf Left:} A bipolar subtree $\HH$ is attached to the rest of the graph $\GG$
via edges $\{u,s\},\{v,t\}$.  The pink nodes have been pre-committed to output labels by $\labeling$ ($r=1$).
{\bf Right:} The $\cut$ operation duplicates $\HH$ and attaches one copy to
$u$ and the other to $v$.}
\end{figure}

Later on we will see that both poles of a bipolar tree are responsible for computing the labeling of the tree.
On the other hand, we do not want the poles to have to communicate too much.
As Lemma~\ref{lem:recover-2} shows, the $\cut$ operation (in conjunction with $\extend$ and $\labeling$)
allows both poles to work independently and cleanly integrate their labelings afterward.

\begin{lemma}\label{lem:recover-2}
Let $\HH = \extend(\labeling(\tilde{\HH}))$ for some partially labeled bipolar tree  $\tilde{\HH}$.
If $\GG' = \cut(\GG,\HH)$ admits a legal labeling $\LL_\diamond'$, then $\GG$ admits a legal labeling $\LL_\diamond$ such that ${\LL_\diamond}(v) = \LL_\diamond'(v')$ for each vertex $v \in V(G) - V(H)$ and a particular
corresponding vertex $v'$ in $\GG'$.
\end{lemma}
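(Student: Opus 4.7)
The plan is to stitch together the two labeled copies of $\HH$ inside $\GG'$ at the middle edge $e$ of $\HH$'s core path. The key property to exploit is that $\labeling$ assigns output labels to every vertex in $N^{r-1}(e)$ by applying the deterministic function $f$ to $\tilde{\HH}$, and the subsequent $\extend$ step leaves the middle segment $\YY$ (which contains $N^{r-1}(e)$) untouched. Consequently, both copies of $\HH$ inside $\GG'$ carry \emph{identical} pre-committed labels on $N^{r-1}(e)$, and every legal labeling $\LL_\diamond'$ must respect these labels. This forced agreement across the prospective seam is what makes the stitching consistent.

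Concretely, let $\HH_L$ denote the copy of $\HH$ whose $s$-pole is attached to $u$ (with the $t$-pole of $\HH_L$ dangling) and $\HH_R$ denote the copy whose $t$-pole is attached to $v$ (with the $s$-pole of $\HH_R$ dangling). Partition $V(H)$ into $B_s$ and $B_t$, the two sides of the bridge $e$ in the tree $H$. Define $\LL_\diamond$ on $\GG$ by: $\LL_\diamond(w) = \LL_\diamond'(w)$ for $w \in V(G) - V(H)$, using the natural one-to-one correspondence (since $\cut$ does not duplicate outside vertices); $\LL_\diamond(w) = \LL_\diamond'(w_L)$ for $w \in B_s$; and $\LL_\diamond(w) = \LL_\diamond'(w_R)$ for $w \in B_t$, where $w_L$ and $w_R$ denote the copies of $w$ in $\HH_L$ and $\HH_R$, respectively. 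By construction, $\LL_\diamond$ agrees with $\LL_\diamond'$ on outside vertices, satisfying the conclusion of the lemma.

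The verification that $\LL_\diamond$ is a legal labeling of $\GG$ rests on two geometric observations. First, after $\extend$, the core path of $\HH$ has length at least $2w + 2r$, which is long enough that no vertex's $r$-neighborhood in $\GG$ can reach outside $V(H)$ on both the $u$-side and the $v$-side at once. Second, for any $x$ lying on the $s$-side of $e$, the portion of $N^r_G(x)$ that crosses $e$ into the $t$-side is entirely contained in $N^{r-1}(e)$, i.e., consists of pre-committed vertices (and symmetrically for $x$ on the $t$-side). Matching $x$ to its counterpart $x^\ast$ in the appropriate copy ($\HH_L$ if $x \in B_s$ or $x$ is outside $V(H)$ on the $u$-side, and $\HH_R$ otherwise), the labeled subgraph $N^r_G(x)$ under $\LL_\diamond$ is isomorphic to $N^r_{G'}(x^\ast)$ under $\LL_\diamond'$, with the pre-committed vertices providing a seamless match across $e$. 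Legality of $\LL_\diamond'$ at $x^\ast$ then transfers to legality of $\LL_\diamond$ at $x$. The main obstacle is this neighborhood-matching argument for vertices near $e$, which crucially requires both the length provided by $\extend$ and the coincidence of the pre-committed labels in $\HH_L$ and $\HH_R$ (guaranteed by the determinism of $\labeling$).
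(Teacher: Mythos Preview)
Your proposal is correct and follows essentially the same approach as the paper's proof: define a map $\phi:V(G)\to V(G')$ that sends outside vertices to themselves, sends the $s$-side of $e$ into the $u$-attached copy of $\HH$, and sends the $t$-side into the $v$-attached copy, then pull back $\LL_\diamond'$ along $\phi$ and verify that every $r$-neighborhood in $G$ matches the $r$-neighborhood of its image in $G'$. You supply more detail than the paper does on the verification step---in particular, you make explicit why the seam at $e$ is harmless (the part of $N^r(x)$ that crosses $e$ lies in $N^{r-1}(e)$, whose labels are pre-committed identically in both copies)---whereas the paper simply declares this ``straightforward to verify'' and notes afterward that the only property of $\HH$ used is that $N^{r-1}(e)$ was assigned output labels by $\labeling$.
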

\begin{proof}
Let $\GG'=(G', \LL')$.
We write $\HH = (\TT_1, \ldots, \TT_x)$.
Let $(v_1, \ldots, v_x)$ be the core path of $\HH$, where $s = v_1$ and $t = v_x$ are the two poles of $\HH$.
Let $\{u,s\}$ and $\{v,t\}$ be the two edges that connect $H$ two the rest of $G$.
Let $e=\{v_j,v_{j+1}\}$ be the edge in the core path of $\HH$ such that the output labels of vertices in $N^{r-1}(e)$ in $\HH$ were fixed by $\labeling$.\footnote{Because $\pump$ usually does not extend $\XX$ and $\ZZ$ by precisely the same amount, the edge $e$ is generally not \emph{exactly} in the middle.}
We write $\HH_u$ (resp., $\HH_v$) to denote the copy of $\HH$ in $\GG'$ that attaches to $u$ (resp., $v$).
Define a mapping $\phi$ from $V(G)$ to $V(G')$ as follows.
\begin{itemize}
\item For $z \in V(G) - V(H)$, $\phi(z)$ is the corresponding vertex in $G'$.
\item For $z \in \bigcup_{i=1}^{j} \TT_i$,  $\phi(z)$ is the corresponding vertex in $H_u$.
\item For $z \in \bigcup_{i=j+1}^{x} \TT_i$,  $\phi(z)$ is the corresponding vertex in $H_v$.
\end{itemize}
We set $\LL_\diamond(z) = \LL_\diamond'(\phi(z))$ for each $z \in V(G)$.
It is straightforward to verify that the distance-$r$ neighborhood view (with output labeling $\LL_\diamond$) of each vertex $z \in V(G)$ is the same as the distance-$r$ neighborhood view (with output labeling $\LL_\diamond'$) of its corresponding vertex $\phi(z)$ in $G'$. Thus, $\LL_\diamond$ is a legal labeling.
\end{proof}

Notice that in the proof of Lemma~\ref{lem:recover-2}, the only property of $\HH$ that we use is that $N^{r-1}(e)$ was assigned output labels in the application of $\labeling(\tilde{\HH})$.

\subsection{A Hierarchy of Partially Labeled Trees \label{sec.tree-construct}}

In this section we construct several sets
of partially labeled unipolar and bipolar trees---$\{\mathscr{T}_i\}$,
$\{\mathscr{H}_i\}$,
and
$\{\mathscr{H}_i^+\}$, $i\in\mathbb{Z}^+$---using the operations $\extend$ and $\labeling$.
If each member of ${\Tset}^\star = \bigcup_i \mathscr{T}_i$ admits a legal labeling, then
we can use these trees to design an $O(\log n)$-time $\DetLOCAL$ algorithm for $\mathcal{P}$.
Each $\TT \in {\Tset}^\star$ is partially labeled in the following restricted manner.
The tree $\TT=(T,\LL)$ has a set of {\em designated edges} such that $\LL(v)\neq\; \bottom$ is defined if and only if
$v \in N^{r-1}(e)$ for some designated edge $e$; these vertices were issued labels by some invocation of $\labeling$.

The sets of bipolar trees $\{\mathscr{H}_i\}_{i \in \mathds{Z}^{+}}$ and
$\{\mathscr{H}_{i}^+\}_{i \in \mathds{Z}^{+}}$
and unipolar trees $\{\mathscr{T}_i\}_{i \in \mathds{Z}^{+}}$
are defined inductively.
In the base case we have $\Tset_1 = \{\TT\}$, where $\TT$ is the unique unlabeled, single-vertex, unipolar tree.
\begin{description}
\item[$\;\;\;\;\Tset$ Sets:]
For each $i > 1$, $\Tset_i$ consists of all partially labeled rooted trees $\TT$ formed in the following manner.
The root $z$ of $\TT$ has degree $0 \leq \deg(z) \leq \Delta$.
Each child of $z$ is either
(i) the root of a partially labeled rooted tree $\TT'$ from $\Tset_{i-1}$ (having degree at most $\Delta-1$),
or
(ii) one of the two poles of a bipolar tree $\HH$ from ${\Hset}_{i-1}^{+}$.
\item[$\;\;\;\;\Hset$ Sets:]
For each $i \geq 1$, ${\Hset}_i$ contains all partially labeled bipolar trees $\HH = (\TT_j)_{j \in [x]}$
such that $x\in [\ell,2\ell]$,
and for each $j \in [x]$, $\TT_j \in {\Tset}_i$, where the root of $\TT_j$
has degree at most $\Delta-2$.
For example, since ${\Tset}_1$ contains only the single-vertex unlabeled tree,
$\Hset_1$ is the set of all bipolar, unlabeled paths with between $\ell$ and $2\ell$
vertices.

\item[$\;\;\;\;\Hset^+$ Sets:]
For each $i \geq 1$, ${\Hset}_i^+$ is constructed by the following procedure.
      If $i = 1$, initialize ${\Hset}_1^+ \leftarrow \emptyset$,
      otherwise initialize ${\Hset}_i^+ \leftarrow {\Hset}_{i-1}^+$.
Consider each $\HH \in {\Hset}_i$ in some canonical order.
If there does \emph{not} already
exist a partially labeled bipolar tree $\tilde{\HH}$ such that
$\type(\tilde{\HH}) = \type(\HH)$ and $\extend(\labeling(\tilde{\HH})) \in {\Hset}_i^+$,
then update  ${\Hset}_i^+ \leftarrow {\Hset}_i^+ \cup \{\extend(\labeling({\HH}))\}$.
\end{description}
%
Observe that whereas $\{\Tset_i\}$ and $\{\Hset_i\}$ grow without end, and contain arbitrarily large trees,
the cardinality of $\Hset_i^+$ is at most the total number of types, which is constant.\footnote{However,
it is not necessarily true that $\Hset_i^+$ contains at most one bipolar tree of each type.  The $\extend$
operation is type-preserving, but this is not true of $\labeling$:
$\type(\labeling(H))$ may not equal $\type(H)$, so it is  possible that $\Hset_i^+$
contains two members of the same type.}
This is due to the observation that whenever we add a new partially labeled bipolar tree
$\extend(\labeling({\HH}))$ to ${\Hset}_i^+$, it is guaranteed that there is no other
partially labeled bipolar tree
$\extend(\labeling(\tilde{\HH})) \in {\Hset}_i^+$ such that $\type(\tilde{\HH}) = \type(\HH)$.
The property that $|\Hset_i^+|$ is constant is crucial in the proof of Lemma~\ref{lem:func-decide}.
Lemmas~\ref{lem:subset}--\ref{lem:w-converge} reveal some useful properties of these sets.

\begin{lemma}\label{lem:subset}
We have
(i) $\mathscr{T}_1 \subseteq \mathscr{T}_2 \subseteq \cdots$,
(ii) $\mathscr{H}_1 \subseteq \mathscr{H}_2 \subseteq \cdots$, and
(iii) $\mathscr{H}_1^+ \subseteq \mathscr{H}_2^+ \subseteq \cdots$.
\end{lemma}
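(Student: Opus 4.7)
The plan is to establish all three inclusions simultaneously by induction on $i$, since the three families are defined in an interleaved manner. The key structural observation is that the definitions form a clean dependency: $\mathscr{T}_i$ is built from $\mathscr{T}_{i-1}$ and $\mathscr{H}_{i-1}^+$; $\mathscr{H}_i$ is built from $\mathscr{T}_i$; and $\mathscr{H}_i^+$ is built by first copying $\mathscr{H}_{i-1}^+$ and then selectively adding members derived from $\mathscr{H}_i$. Concretely, I would prove the conjunction
\[
\mathscr{T}_{i-1} \subseteq \mathscr{T}_i, \qquad \mathscr{H}_{i-1} \subseteq \mathscr{H}_i, \qquad \mathscr{H}_{i-1}^+ \subseteq \mathscr{H}_i^+
\]
by induction on $i \geq 2$.

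For the $\mathscr{H}^+$ inclusion I would just invoke the construction: when $i \geq 2$, the set $\mathscr{H}_i^+$ is explicitly \emph{initialized} to $\mathscr{H}_{i-1}^+$ and then only augmented by additional trees, so $\mathscr{H}_{i-1}^+ \subseteq \mathscr{H}_i^+$ holds with no further argument. For the base step $\mathscr{T}_1 \subseteq \mathscr{T}_2$, note that $\mathscr{T}_1$ contains only the single-vertex unlabeled tree; its root has degree $0$, so the (vacuous) requirement that each child be a root of a $\mathscr{T}_1$-tree or pole of an $\mathscr{H}_1^+$-bipolar tree is trivially met, placing this tree in $\mathscr{T}_2$. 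The base step $\mathscr{H}_1 \subseteq \mathscr{H}_2$ then follows by unpacking definitions, using $\mathscr{T}_1 \subseteq \mathscr{T}_2$.

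For the inductive step, assume the three inclusions hold with $i$ replaced by $i-1$. To show $\mathscr{T}_i \subseteq \mathscr{T}_{i+1}$, pick any $\TT \in \mathscr{T}_i$. Each child of its root is, by definition, either (i) the root of some $\TT' \in \mathscr{T}_{i-1}$ or (ii) a pole of some $\HH \in \mathscr{H}_{i-1}^+$. By the inductive hypothesis, $\TT' \in \mathscr{T}_i$ and $\HH \in \mathscr{H}_i^+$, which are exactly the options permitted for membership in $\mathscr{T}_{i+1}$, so $\TT \in \mathscr{T}_{i+1}$. To show $\mathscr{H}_i \subseteq \mathscr{H}_{i+1}$, let $\HH = (\TT_j)_{j \in [x]} \in \mathscr{H}_i$. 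Then $x \in [\ell, 2\ell]$, each $\TT_j \in \mathscr{T}_i$, and the root of each $\TT_j$ has degree at most $\Delta - 2$. By the just-established $\mathscr{T}_i \subseteq \mathscr{T}_{i+1}$, each $\TT_j \in \mathscr{T}_{i+1}$, and the degree bound and length bound are unchanged, so $\HH \in \mathscr{H}_{i+1}$. Finally, $\mathscr{H}_i^+ \subseteq \mathscr{H}_{i+1}^+$ holds by the initialization remark above.

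I do not anticipate any genuine obstacle here; the argument is purely structural, amounting to a careful unwinding of the three definitions in the right order. The only thing to be careful about is the interleaving: one must prove the $\mathscr{T}$-inclusion before the $\mathscr{H}$-inclusion within the same inductive step (since $\mathscr{H}_i$ is defined in terms of $\mathscr{T}_i$), and one must handle $\mathscr{H}^+$ separately by appeal to the initialization convention rather than through a membership argument.
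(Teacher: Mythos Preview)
Your proposal is correct and takes essentially the same approach as the paper: both note that (iii) holds immediately from the initialization $\mathscr{H}_i^+ \leftarrow \mathscr{H}_{i-1}^+$, observe that (ii) follows from (i) because $\mathscr{H}_i$ is built directly from $\mathscr{T}_i$, and prove (i) by induction using the fact that $\mathscr{T}_{i+1}$ is assembled from $\mathscr{T}_i$ and $\mathscr{H}_i^+$. The only cosmetic difference is that you package all three as a simultaneous induction, while the paper first isolates (iii) and the implication (i) $\Rightarrow$ (ii), then inducts only on (i); there is a small index slip in your inductive-step phrasing (``with $i$ replaced by $i-1$'' versus what you actually use), but the argument itself is sound.
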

\begin{proof}
By construction, we already have $\mathscr{H}_1^+ \subseteq \mathscr{H}_2^+ \subseteq \cdots$.
Due to the construction of $\Hset_i$ from the set $\Tset_i$, it is guaranteed that if
$\Tset_j \subseteq \Tset_{j+1}$ holds then $\Hset_j \subseteq \Hset_{j+1}$ holds as well.
Thus, it suffices to show that $\Tset_1 \subseteq \Tset_2 \subseteq \cdots$.
This is proved by induction.

For the base case, we have $\Tset_1 \subseteq \Tset_2$ because $\Tset_2$ also contains $\TT\in\Tset_1$,
the unlabeled, single-vertex, unipolar tree.

For the inductive step, suppose that we already have $\Tset_1 \subseteq \Tset_2 \subseteq \cdots \subseteq \Tset_i$,
 $i \geq 2$.
Then we show that $\Tset_i \subseteq \Tset_{i+1}$. Observe that the set $\Tset_{i+1}$ contains all partially labeled rooted trees constructed by attaching  partially labeled trees from the sets $\Hset_{i}^+$ and $\Tset_{i}$ to the root vertex. We already know that $\Hset_{i-2}^+ \subseteq \Hset_{i-1}^+$, and by the inductive hypothesis we have $\Tset_{i-2} \subseteq \Tset_{i-1}$. Thus, each $\TT \in \Tset_{i}$ must also appear in the set $\Tset_{i+1}$.
\end{proof}

If $\Tset$ and $\Hset$ are arbitrary sets of unipolar and bipolar trees, we define
$\class(\Tset) = \{\class(\TT) \;|\; \TT\in \Tset\}$ and $\type(\Hset) = \{\type(\HH) \;|\; \HH\in\Hset\}$
to be the set of classes and types appearing among them.

\begin{lemma}\label{lem:k-converge}
Define $k^\star = |\mathscr{C}|$, where $\mathscr{C}$ is the set of all classes.
Then $\class(\mathscr{T}^\star) = \class(\mathscr{T}_{k^\star})$.
\end{lemma}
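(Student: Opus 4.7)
The plan is to prove a \emph{stabilization principle} for the nondecreasing chain $C_i := \class(\Tset_i)$: if $C_i = C_{i+1}$ for some $i$, then $C_j = C_i$ for all $j \geq i$. Since $C_1 \subseteq C_2 \subseteq \cdots \subseteq \classSet$ by Lemma~\ref{lem:subset} and $|C_1| \geq 1$, the chain cannot strictly grow at all $k^\star$ of the transitions $C_1 \to C_2, \ldots, C_{k^\star} \to C_{k^\star+1}$ (else $|C_{k^\star+1}| \geq 1 + k^\star$), so some equality $C_{i^\star} = C_{i^\star+1}$ must occur with $i^\star \leq k^\star$. Stabilization will then give $C_j = C_{i^\star}$ for all $j \geq i^\star$, and monotonicity gives $C_j \subseteq C_{k^\star}$ for $j < k^\star$, yielding $\class(\Tset^\star) = \bigcup_j C_j = C_{k^\star}$.

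The heart of the proof is the stabilization step, which I would carry out by propagating the equality $C_i = C_{i+1}$ through one more round of the construction. By Lemma~\ref{lem:type2class} the type of a bipolar tree $(\TT_1,\ldots,\TT_x)$ depends only on $\class(\TT_1),\ldots,\class(\TT_x)$; since $\Hset_i$ and $\Hset_{i+1}$ are built using the same combinatorial constraints ($x \in [\ell,2\ell]$, root degree at most $\Delta-2$) over $\Tset_i$ and $\Tset_{i+1}$ respectively, the hypothesis $C_i = C_{i+1}$ forces $\type(\Hset_i) = \type(\Hset_{i+1})$. I next claim $\Hset_{i+1}^+ = \Hset_i^+$: the construction initializes $\Hset_{i+1}^+ \leftarrow \Hset_i^+$ and then iterates over $\HH \in \Hset_{i+1}$, but because $\type(\HH) \in \type(\Hset_i)$, some $\HH'$ with $\type(\HH') = \type(\HH)$ was already processed at an iteration $j \leq i$, and so at the end of iteration $i$ a witness $\tilde{\HH}$ with $\type(\tilde{\HH}) = \type(\HH)$ and $\extend(\labeling(\tilde{\HH})) \in \Hset_i^+$ is already present; thus each $\HH$ is skipped and nothing is added. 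Finally, Lemma~\ref{lem:replace-rootedtree} (for the $\Tset_{i+1}$-child subtrees) together with Lemma~\ref{lem:type2class} (for the $\Hset_{i+1}^+$-bipolar children attached at a pole) shows that the class of any $\TT \in \Tset_{i+2}$ is determined by the local structure at the root together with the classes of its $\Tset_{i+1}$-subtree children and the types of its attached $\Hset_{i+1}^+$-bipolar-tree children; since $C_{i+1}=C_i$ and $\type(\Hset_{i+1}^+)=\type(\Hset_i^+)$, every class realized in $\Tset_{i+2}$ is already realized in $\Tset_{i+1}$ by an analogous construction, giving $C_{i+2} \subseteq C_{i+1}$ and hence equality.

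The delicate step will be the middle claim $\Hset_{i+1}^+ = \Hset_i^+$. The subtlety is that $\labeling$ is \emph{not} type-preserving, so $\Hset_i^+$ may contain several trees of the same output type (as flagged in the footnote after the $\Hset^+$ construction). One must therefore carefully distinguish that the guard in the construction depends on the \emph{input} type $\type(\tilde{\HH})$, not on $\type(\extend(\labeling(\tilde{\HH})))$; equivalently, the invariant to maintain inductively is that for every $\tau \in \type(\Hset_i)$ there exists some $\tilde{\HH}$ with $\type(\tilde{\HH}) = \tau$ and $\extend(\labeling(\tilde{\HH})) \in \Hset_i^+$. Once this bookkeeping is in place, the fact that $\type(\Hset_{i+1})$ is already fully represented inside $\Hset_i^+$ suffices, and the remaining propagation from $\Hset_{i+1}^+$ back to $\Tset_{i+2}$ is a clean application of Lemmas~\ref{lem:replace-rootedtree} and~\ref{lem:type2class}.
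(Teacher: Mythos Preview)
Your proposal is correct and follows essentially the same route as the paper: both argue that $\class(\Tset_i)$ is determined by $\class(\Tset_{i-1})$ and $\type(\Hset_{i-1}^+)$ (via Lemmas~\ref{lem:replace-rootedtree} and~\ref{lem:type2class}), establish the stabilization principle, and then use monotonicity plus pigeonhole on $|\classSet|$ to bound the stabilization index by $k^\star$. Your treatment of the step $\Hset_{i+1}^+ = \Hset_i^+$ is in fact more carefully argued than the paper's one-line ``as a consequence,'' and your identification of the invariant (that every input type in $\type(\Hset_i)$ already has a witness $\tilde{\HH}$ with $\extend(\labeling(\tilde{\HH})) \in \Hset_i^+$) is exactly what is needed.
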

\begin{proof}
For each $i > 1$, $\class({\Tset}_i)$ depends only on  $\type({\Hset}_{i-1}^+)$ and $\class(\Tset_{i-1})$,
due to Lemmas~\ref{lem:replace-rootedtree} and~\ref{lem:type2class}.
Let $i^\ast$ be the smallest index such that  $\class({\Tset}_{i^\ast}) = \class({\Tset}_{i^\ast + 1})$.
Then we have $\type({\Hset}_{i^\ast}) =\type({\Hset}_{i^\ast + 1})$
and as a consequence,
${\Hset}_{i^\ast}^+ = {\Hset}_{i^\ast + 1}^+$.
This implies that $\class({\Tset}_{i^\ast + 1}) = \class({\Tset}_{i^\ast + 2})$. By repeating the same argument, we conclude that for each $j \geq i^\ast$, we have $\class({\Tset}_{j}) = \class({\Tset}_{i^\ast}) = \class({\Tset}^\star)$.
Since $\mathscr{T}_1 \subseteq \mathscr{T}_2 \subseteq \ldots$ (Lemma~\ref{lem:subset}),
we have $i^\ast \leq |\mathscr{C}|$.
\end{proof}

\begin{lemma}\label{lem:w-converge}
For each $i$,
$\class(\Tset_i)$ does not depend on the parameter $w$ used in the operation $\extend$.
\end{lemma}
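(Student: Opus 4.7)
The plan is to induct on $i$, tracking how information flows along the dependency chain
$\class(\Tset_{i-1}) \leadsto \type(\Hset_{i-1}) \leadsto \type(\Hset_{i-1}^+) \leadsto \class(\Tset_i)$,
and showing that at no link in this chain does the parameter $w$ from $\extend$ enter the picture.
The base case $i = 1$ is immediate, since $\Tset_1$ is the singleton containing the unlabeled one-vertex unipolar tree, which is manifestly independent of $w$.

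For the inductive step, assume $\class(\Tset_j)$ is independent of $w$ for all $j < i$. First I would argue that $\type(\Hset_{i-1})$ is determined by $\class(\Tset_{i-1})$: every $\HH = (\TT_1,\ldots,\TT_x) \in \Hset_{i-1}$ has $x \in [\ell,2\ell]$ and each $\TT_j \in \Tset_{i-1}$ with root degree at most $\Delta-2$, and by Lemma~\ref{lem:type2class}, $\type(\HH)$ is determined by $(\class(\TT_1),\ldots,\class(\TT_x))$. Since the range $[\ell,2\ell]$ and the degree-constraint (which is visible in the class) do not depend on $w$, and $\class(\Tset_{i-1})$ does not depend on $w$ by induction, $\type(\Hset_{i-1})$ does not depend on $w$ either.

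Next I would show that $\type(\Hset_{i-1}^+)$ is determined by $\type(\Hset_{i-1})$ together with $\type(\Hset_{i-2}^+)$. The construction starts with $\Hset_{i-2}^+$ and, walking through $\Hset_{i-1}$ in canonical order, adds $\extend(\labeling(\HH))$ whenever no $\tilde\HH$ with $\type(\tilde\HH) = \type(\HH)$ has had its image under $\extend \circ \labeling$ placed in the set already. Using the key observation stated in Section~\ref{sec.op}---that $\type(\extend(\labeling(\HH)))$ depends only on $\type(\HH)$ (and $f$) and not on $w$, because $\extend$ is type-preserving via Lemma~\ref{lem:replace-2} and the middle block $\YY$ (where $\labeling$ places its labels) is left untouched---the filtering ends up inserting, for each type $\tau$ appearing in $\type(\Hset_{i-1})$ but not in $\type(\Hset_{i-2}^+)$, some tree of a specific type that is a function of $\tau$ alone. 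Hence $\type(\Hset_{i-1}^+)$ is a function of $\type(\Hset_{i-1})$ and $\type(\Hset_{i-2}^+)$, both of which are $w$-independent by the previous paragraph and by induction applied one level deeper (a separate, parallel induction on the claim ``$\type(\Hset_j^+)$ is independent of $w$'' that can be folded into the main induction).

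Finally, $\class(\Tset_i)$ is determined by $\class(\Tset_{i-1})$ and $\type(\Hset_{i-1}^+)$. Any $\TT \in \Tset_i$ is formed by attaching, at its root, some children which are either roots of trees in $\Tset_{i-1}$ or poles of bipolar trees in $\Hset_{i-1}^+$. By Lemma~\ref{lem:replace-rootedtree} (applied at each attached subtree) and Lemma~\ref{lem:type2class} (which turns a bipolar tree attached at one pole into a unipolar tree whose class is determined by its type), $\class(\TT)$ is determined by the multiset of classes/types of its attached children, which is drawn from $\class(\Tset_{i-1}) \cup \type(\Hset_{i-1}^+)$. Both of these are $w$-independent, and so is $\class(\Tset_i)$, completing the induction. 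The only subtle step---and the one I would be most careful about---is the invariance of $\type(\extend(\labeling(\HH)))$ under $w$, but this is exactly the property the construction of $\extend$ was designed to enforce, so I expect it to follow cleanly from Lemma~\ref{lem:replace-2} and the fact that $\labeling$ is applied to the middle block $\YY$ that $\extend$ preserves verbatim.
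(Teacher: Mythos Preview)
Your proposal is correct and follows the same approach as the paper: an induction on $i$ showing that $\class(\Tset_i)$, $\type(\Hset_i)$, and $\type(\Hset_i^+)$ are all $w$-invariant, driven by the key fact that $\extend$ is type-preserving (so $\type(\extend(\labeling(\HH))) = \type(\labeling(\HH))$, which does not involve $w$). Your write-up is considerably more detailed than the paper's three-sentence proof; one small caution is that your phrase ``depends only on $\type(\HH)$ (and $f$)'' is stronger than what Section~\ref{sec.op} actually establishes or than what you need---the relevant point is only that for a \emph{fixed} $\HH$ the resulting type does not vary with $w$.
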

\begin{proof}
Let $\HH=(\TT_1, \ldots, \TT_x)$ be any partially labeled bipolar tree with $x \geq 2r+2\Lpump$.
The type of $\HH' = \extend(\HH)$ is invariant over all choices of the parameter $w$.
Thus, by induction, the sets $\class(\Tset_i)$, $\type(\Hset_i)$, and $\type(\Hset_i^+)$
are also invariant over the choice of $w$.
\end{proof}

\paragraph{Feasible Labeling Function.} In view of Lemma~\ref{lem:w-converge}, $\class(\Tset^\star)$ depends only on the choice of the labeling
function $f$ used by $\labeling$. We call a function $f$ {\em feasible} if implementing $\labeling$ with $f$ makes
each tree in $\class(\Tset^\star)$ good, i.e., its partial labeling can be extended to a complete and legal labeling.
In Section~\ref{sec.tree} we show that
given a feasible function, we can generate a $\DetLOCAL$ algorithm to solve $\mathcal{P}$ in $O(\log n)$-time.
In Section~\ref{sect:feasible-functions}, we show that (i) a feasible function can be derived from any given an $n^{o(1)}$-time $\RandLOCAL$ algorithm for $\mathcal{P}$, and (ii) the existence of a feasible function is decidable. These results together imply the $\omega(\log n)$---$n^{o(1)}$ gap. Moreover, given an LCL problem $\mathcal{P}$ on bounded degree trees, it is decidable whether the $\RandLOCAL$ complexity of $\mathcal{P}$ is $n^{\Omega(1)}$ or the $\DetLOCAL$ complexity of $\mathcal{P}$ is $O(\log n)$.

\subsection{A $O(\log n)$-time $\DetLOCAL$ Algorithm from a Feasible Labeling Function\label{sec.tree}}

In this section, we show that given a feasible function $f$ for the LCL problem $\mathcal{P}$,
it is possible to design a $O(\log n)$-time $\DetLOCAL$ algorithm for
$\mathcal{P}$ on bounded degree trees.

Regardless of $f$, the algorithm begins by computing the
graph decomposition $V(G) = V_1\cup\cdots\cup V_L$, with $L = O(\log n)$; see Section~\ref{sec.decomp}.
We let the three infinite sequences  $\{\mathscr{H}_i\}_{i \in \mathds{Z}^{+}}$,
$\{\mathscr{H}_{i}^+\}_{i \in \mathds{Z}^{+}}$, and $\{\mathscr{T}_i\}_{i \in \mathds{Z}^{+}}$ be constructed
with respect to the feasible $f$ and any parameter $w$.

\paragraph{A Sequence of Partially Labeled Graphs.}
We define below a sequence of partially labeled graphs $\RR_1, \RR_2, \ldots, \RR_L$,
where $\RR_1$ is the unlabeled tree $G$ (the underlying communications network),
and $\RR_{i+1}$ is constructed from $\RR_i$ using the graph operations $\extend,\labeling,$ and $\cut$.
An alternative, and helpful way to visualize $\RR_i$ is that it is obtained by stripping away some vertices of $G$,
and then grafting on some \emph{imaginary} subtrees to its remaining vertices.
Formally, the graph $\RR_i$ is formed by taking ${G}_{i}$ (the subforest induced by $\bigcup_{j=i}^L V_j$, defined in Section~\ref{sec.decomp}),
and identifying each vertex $u\in V(G_i)$ with the root of a
partially labeled imaginary tree $\TT_{u,i}\in \Tset_i$ (defined within the proof of Lemma~\ref{lem:GG}).
Since $G_L$ consists solely of isolated vertices,
$\RR_L$ is the disjoint union of trees drawn from $\Tset_L$.

Once each vertex $v \in V(G_i) = \bigcup_{j=i}^L V_j$ in the communication network $G$ knows $\TT_{v,i}$, we are able to simulate the imaginary graph $\RR_i$ in the communication network $G$. In particular, a \emph{legal labeling of $\RR_i$} is represented by storing the entire output labeling of the (imaginary)
tree $\TT_{v,i}$ at the (real) vertex $v \in V(G_i)$.

The official, inductive construction of $\RR_i$ is described
in the proof of Lemma~\ref{lem:GG}.

\begin{lemma}\label{lem:GG}
Suppose that a feasible function $f$ is given.
The partially labeled graphs $\RR_1, \ldots, \RR_L$
and partially labeled trees $\{\TT_{v,i} \;|\; v\in V(G_i),\, i\in[L]\}$
can be constructed in $O(\log n)$ time meeting the following conditions.
\begin{enumerate}
\item For each $i\in [1,L]$, each vertex $v \in V(G_i) = \bigcup_{j=i}^L V_j$ knows $\TT_{v,i} \in {\Tset}_i$.
%
\item For each $i \in [2,L]$, given a legal labeling of $\RR_i$,
a legal labeling of $\RR_{i-1}$ can be computed in $O(1)$ time.
\end{enumerate}
\end{lemma}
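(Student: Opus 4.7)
The plan is to construct $(\RR_i, \{\TT_{v,i}\}_{v\in V(\Gd{i})})$ simultaneously by induction on $i$, piggybacking on the $O(\log n)$ rounds of the rake-and-compress decomposition of Section~\ref{sec.decomp}. In the base case I set $\RR_1 = G$ (unlabeled) and let $\TT_{v,1}$ be the unique one-vertex unlabeled rooted tree, which is the sole element of $\Tset_1$ and which every $v$ trivially knows.

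For the inductive step I process the level-$i$ vertices of $V_i$ in the $O(1)$ additional rounds following the $i$th round of the decomposition. If $v\in V_i$ was tagged by \textsf{Rake} with unique surviving neighbor $u \in V(\Gd{i+1})$, I simply reroot: $v$ becomes a child of $u$ in $\TT_{u,i+1}$ and $\TT_{v,i}$ dangles below it, while the underlying graph of $\RR_i$ stays the same. If instead $v$ sits on a compress path $P=(w_1,\ldots,w_x)\in\Pset_i$ with external neighbors $a$ (adjacent to $w_1$) and $b$ (adjacent to $w_x$) in $V(\Gd{i+1})$, I first assemble the bipolar subtree $\HH=(\TT_{w_1,i},\ldots,\TT_{w_x,i})\in\Hset_i$; by the construction of $\Hset_i^+$ there exists $\tilde{\HH}$ with $\type(\tilde{\HH})=\type(\HH)$ such that $\HH^+ := \extend(\labeling(\tilde{\HH}))\in \Hset_i^+$. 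I then apply $\replace(\RR_i,\HH,\HH^+)$ followed by $\cut$ on the freshly installed $\HH^+$, which detaches two independent copies of $\HH^+$ that dangle from $a$ and from $b$. Defining $\TT_{w,i+1}$ (for each $w\in V(\Gd{i+1})$) to be the subtree of the resulting graph rooted at $w$, the membership $\TT_{w,i+1}\in \Tset_{i+1}$ follows directly from its recursive definition: every child of $w$ is either an old child inherited from $\TT_{w,i}$ (a $\Tset_{i-1}$-root or $\Hset_{i-1}^+$-pole, hence a $\Tset_i$-root or $\Hset_i^+$-pole by Lemma~\ref{lem:subset}), a rake-child (root of $\TT_{v,i}\in\Tset_i$), or a compress-child (pole of $\HH^+\in\Hset_i^+$).

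Each level transition runs in $O(1)$ $\DetLOCAL$ rounds because compress paths have length at most $2\ell=O(1)$ and the classes along $P$ already determine $\type(\HH)$ via Lemma~\ref{lem:type2class}, after which $a$ and $b$ deterministically fix the same canonical $\HH^+\in\Hset_i^+$. Summing across the $L=O(\log n)$ levels yields Condition~1. For Condition~2, the rake update leaves the graph of $\RR$ untouched, so its legal labeling descends verbatim. For each compress modification I invert the surgery in reverse: Lemma~\ref{lem:recover-2} converts the legal labeling of the post-$\cut$ fragment into one of the pre-$\cut$ graph (still containing $\HH^+$), and Lemma~\ref{lem:recover-1}, applied with the \emph{same} $\tilde{\HH}$ used during the forward construction, then converts that into a legal labeling of $\RR_i$ in which the original $\HH$ is restored. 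Each recovery is a local $O(1)$-round computation, so the entire descent from $\RR_{i+1}$ to $\RR_i$ fits in $O(1)$ rounds.

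The main hurdle will be purely bookkeeping rather than new mathematics: I need (i) a canonical rule that forces $a$ and $b$ at the two ends of the same compress path to select \emph{identical} $\tilde{\HH}$ and $\HH^+$, so the $\cut$ and its inversion are coherent, and (ii) a verification that the child-degree constraints built into the definitions of $\Tset_i$ (children of the root having degree $\leq \Delta-1$) and $\Hset_i$ (the constituent unipolar trees having roots of degree $\leq \Delta-2$) are respected. The former is automatic because $|\Hset_i^+|$ depends only on $\mathcal{P}$ (and on $f$); both endpoints can agree on $\HH^+$ after learning the $O(1)$-length class sequence along $P$ and applying any fixed tie-breaking rule. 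The latter follows from $\deg_{\Gd{i}}(w)\leq 2$ for $w\in V_i$ together with the inductive structure inherited from $\TT_{w,i}$.
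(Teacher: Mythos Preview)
Your proposal follows essentially the same approach as the paper: replace each compress-path subtree $\HH_P$ with a canonical $\HH_P^+=\extend(\labeling(\tilde{\HH}))\in\Hset_{i}^+$, apply $\cut$, attach the two copies to the higher-level endpoints, and for Condition~2 invert via Lemmas~\ref{lem:recover-2} and~\ref{lem:recover-1}. Your indexing runs $i\to i{+}1$ where the paper runs $i{-}1\to i$, and you organize by rake/compress cases where the paper uses three explicit Steps, but the content is the same, and your remarks on the canonical choice of $\tilde{\HH}$ and on the degree constraints are correct.

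There is one small but non-cosmetic omission. You treat only the rake case in which $v$ has a unique surviving neighbor in $V(\Gd{i+1})$, but the decomposition also produces level-$i$ vertices with $\deg_{\Gd{i}}(v)=0$ (for instance, the last surviving vertex of a component). In the forward direction their trees $\TT_{v,i}$ are isolated connected components of $\RR_i$ that simply disappear from $\RR_{i+1}$ (the paper's Step~1). In the backward direction, however, you must produce a legal labeling of each such $\TT_{v,i}$ \emph{from scratch}, since it is absent from $\RR_{i+1}$ and hence not covered by the given legal labeling. This is exactly, and only, where the hypothesis that $f$ is \emph{feasible} is used in the lemma: every $\TT_{v,i}\in\Tset_i\subseteq\Tset^\star$ has a good class and therefore admits a legal labeling, which $v$ computes locally with zero communication. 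As written, your Condition~2 argument (``the rake update leaves the graph of $\RR$ untouched, so its legal labeling descends verbatim'') never invokes feasibility, so it cannot be complete.
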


\begin{proof}
Part (1) of the lemma is proved by induction.

\paragraph{Base Case.}
Define $\RR_1 = G$.  This satisfies the lemma since
$\TT_{v,1} \in\Tset_1$ must be the unlabeled single-vertex tree, for each $v \in V(G)$.

\paragraph{Inductive Step.}
We can assume inductively that $\RR_{i-1}$ and $\{\TT_{v,i-1} \;|\; v\in V(G_{i-1})\}$ have been defined
and satisfy the lemma.
The set ${\Pset}_{i-1}$ was defined in Section~\ref{sec.decomp}.
Each $P \in \Pset_{i-1}$ is a path such that $\deg_{\Gd{i-1}}(v) = 2$ for each vertex $v \in V(P)$
and $|V(P)| \in [\ell, 2 \ell]$.
Fix a path $P=(v_1, \ldots, v_x) \in \Pset_{i-1}$.  The bipolar graphs $\HH_P$ and $\HH_P^+$ are
defined as follows.
\begin{itemize}
\item Define $\HH_P$ to be the partially labelled bipolar tree $(\TT_{v_1,i-1}, \ldots, \TT_{v_x,i-1})$
Notice that $\HH_P$ is a subgraph of $\RR_{i-1}$.
Since $\TT_{v_j,i-1} \in \Tset_{i-1}$, for each $j\in [x]$, it follows that $\HH_P \in {\Hset}_{i-1}$.

\item Construct $\HH_P^+$ as follows. Select the unique member $\tilde{\HH} \in \Hset_{i-1}$
such that $\type(\tilde{\HH})=\type(\HH_P)$ and let $\HH_P^+ = \extend(\labeling(\tilde{\HH})) \in {\Hset}_{i-1}^+$.
Due to the definition of ${\Hset}_{i-1}^+$, such an $\tilde{\HH}$ must exist, since $\HH_P \in \Hset_{i-1}$.
\end{itemize}
The partially labeled graph $\RR_i$ is constructed from $\RR_{i-1}$ with the following three-step procedure.
See Figure~\ref{fig:RR} for a schematic example of how these steps work.
\begin{description}
\item[Step 1.] Define $\RR_{i-1}'$ as the result of applying the following operations on $\RR_{i-1}$. For each $v \in V_{i-1}$ such that $\TT_{v,i-1}$ is a connected component of $\RR_{i-1}$, remove $\TT_{v,i-1}$. Notice that a tree $\TT_{v,i-1}$ is a connected component of $\RR_{i-1}$ if and only if $v$'s neighborhood in $G$ contains only vertices at lower levels: $V_1,\ldots,V_{i-2}$.
\item[Step 2.] Define $\RR_{i-1}^+$ by the following procedure. (i) Initialize $\tilde{\GG} \leftarrow \RR_{i-1}'$. (ii) For each $P \in {\Pset}_{i-1}$, do $\tilde{\GG} \leftarrow \replace(\tilde{\GG}, \HH_P, \HH_P^+)$. (iii) Set $\RR_{i-1}^+ \leftarrow \tilde{\GG}$.
\item[Step 3.] Define $\RR_{i}$ by the following procedure. (i) Initialize $\tilde{\GG} \leftarrow \RR_{i-1}^+$. (ii) For each $P \in {\Pset}_{i-1}$, do $\tilde{\GG} \leftarrow \cut(\tilde{\GG},\HH_P^+)$. (iii) Set $\RR_{i} \leftarrow \tilde{\GG}$.
\end{description}

After Steps 1--3, for $v\in V(G_i)$, $\TT_{v,i}$ is now defined to be the tree in $\RR_i - (V(G_i) - \{v\})$ rooted at $v$. Notice that the two copies of $\HH_P^+$ generated during Step 3(ii) becomes subtrees of $\TT_{u,i}$ and $\TT_{v,i}$, where $u$ and $v$ are the two vertices in $V(G_i)$ adjacent to the two endpoints of $P$ in the graph $G$.


\begin{figure}
\centering
\scalebox{.43}{\includegraphics{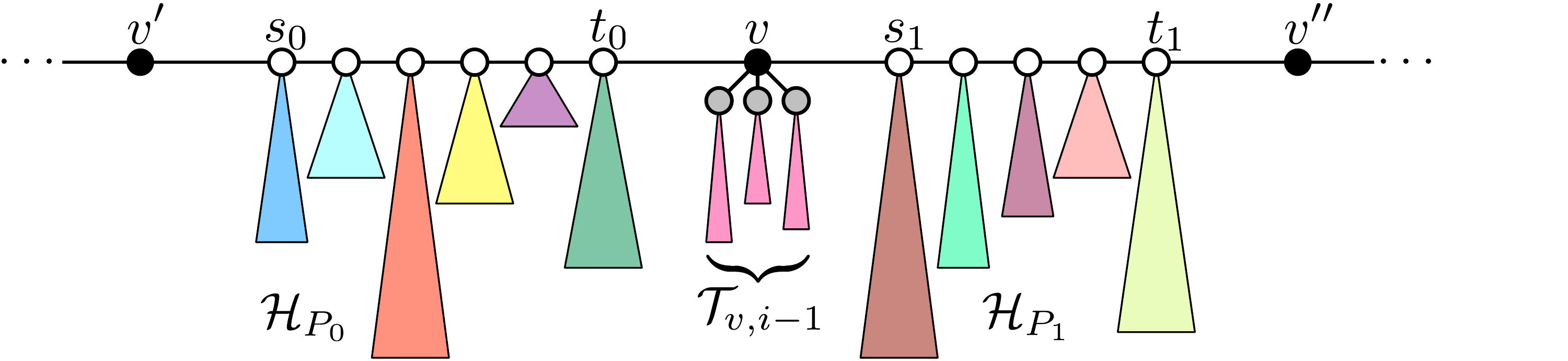}}\\

\vspace{1.5cm}

\scalebox{.43}{\includegraphics{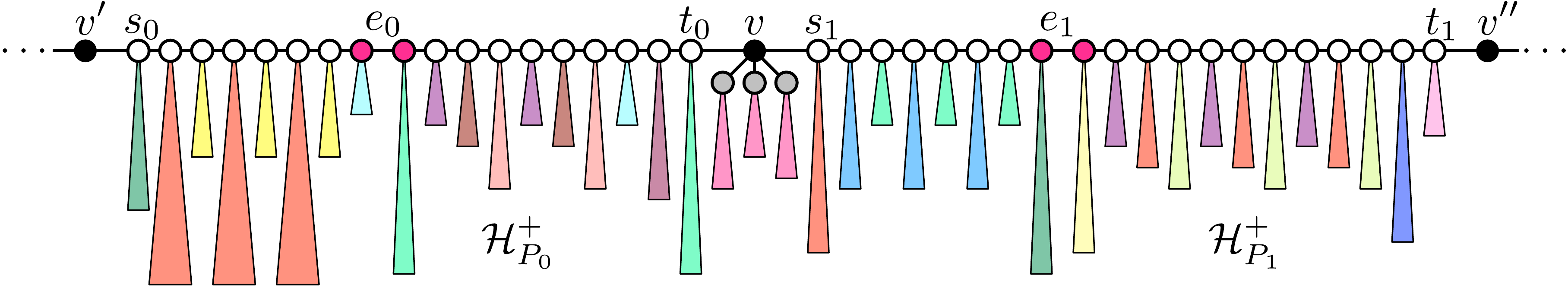}}\\

\vspace{1.5cm}
\scalebox{.43}{\includegraphics{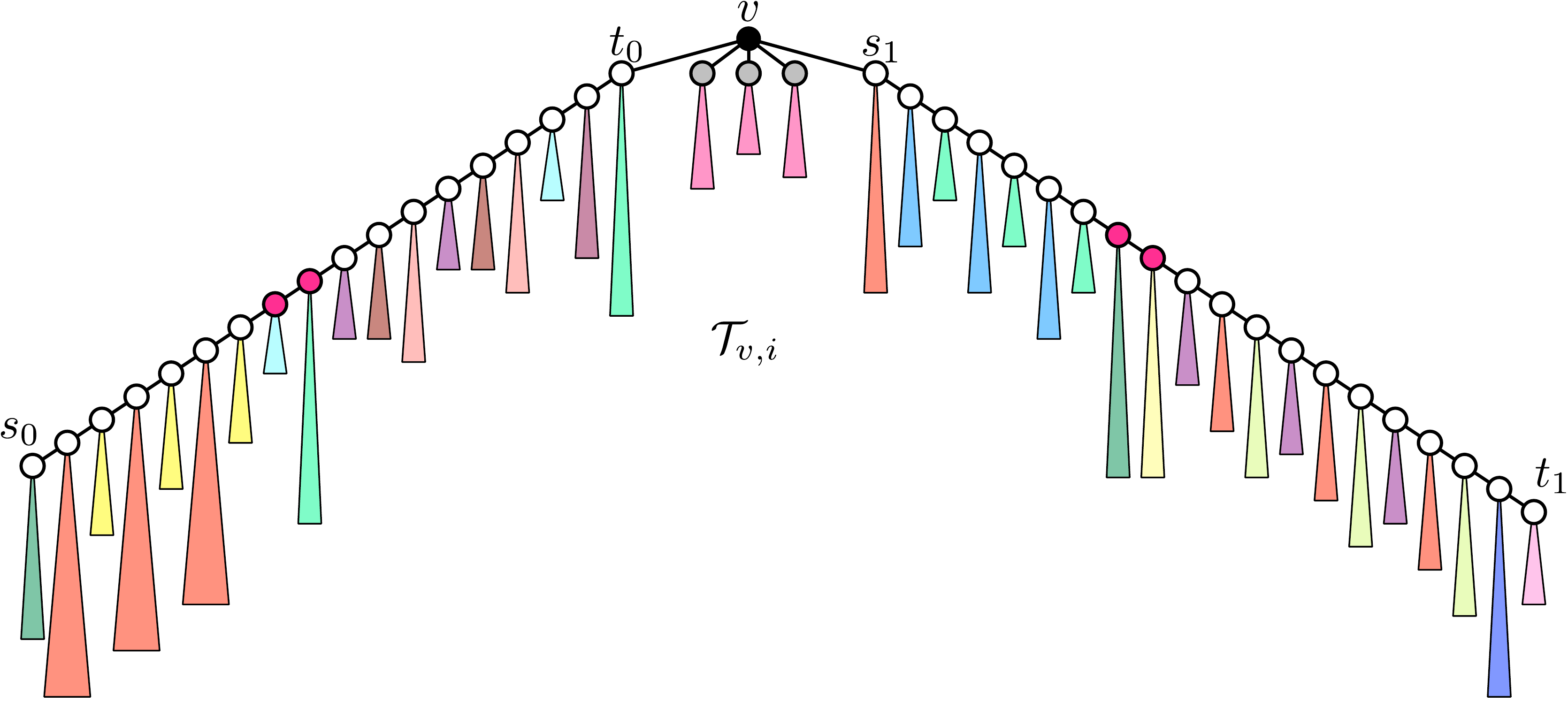}}
\caption{\label{fig:RR}%
{\bf Top:}
In this example $v$ was a vertex in a long degree-2 path tagged $(i-1)_C$ by the decomposition procedure,
and subsequently promoted to $V_i$.
Black vertices are in $V_i$ (or above); white vertices are in $V_{i-1}$; gray vertices are in $V_{i-2}$ or below.
The paths $P_0 = (s_0,\ldots,t_0)$ and $P_1=(s_1,\ldots,t_1)$ adjacent to $v$ have constant length, between $\ell$ and $2\ell$.
The colored subtrees grafted onto white and gray vertices are imaginary subtrees formed in the construction of $\RR_{i-1}$.
{\bf Middle:}  The graph is transformed by finding the graph $\tilde{\HH}_b \in \Hset_{i-1}^+$, $b\in\{0,1\}$
that has the same type as $\HH_{P_b}$, and replacing $\HH_{P_b}$ with
$\HH_{P_b}^+ = \extend(\labeling(\tilde{\HH}_b))$.  The vertices receiving
pre-committed labels are indicated in pink ($r=1$).
{\bf Bottom:} We duplicate $\HH_{P_b}^+$, $b\in \{0,1\}$, and attach one of the copies of each duplicate to $v$.
(The copies of $\HH_{P_b}^+$ attached to $v',v''$ are not shown.)  The tree $\TT_{v,i}$ is the resulting tree rooted at $v$.
Since each subtree of $v$ is in $\Tset_{i-1}$ or $\Hset_{i-1}^+$, it follows that $\TT_{v,i} \in \Tset_i$.
In this case $v$ had no neighbors at higher levels ($i+1$ and above), so
$\TT_{v,i}$ is a connected component of $\RR_i$.   Thus, $v$ can locally compute a legal labelling of $\TT_{v,i}$.
}
\end{figure}

We now need to verify that $\RR_i$ satisfies all the claims of the lemma.
Given the partially labeled graph $\RR_i$, the partially labeled trees $\TT_{v,i}$ for all $v \in V(G_i)$ are uniquely determined.
According to the construction of $\RR_i$, each connected component of
$\RR_i - V(G_i)$ must be an imaginary tree that is either
(i) some $\TT_{v,j}$, where $v \in V_j$ and $j \in \{1, \ldots, i-1\}$ or
(ii) a copy of $\HH_P^+$, where $P \in {\Pset}_{j}$ and $j \in \{1, \ldots, i-1\}$.
By induction (and Lemma~\ref{lem:subset}), for $v \in V_1  \cup\cdots \cup V_{j}$ and $j \in \{1, \ldots, i-1\}$, we have $\TT_{v,j} \in {\Tset}_j \subseteq {\Tset}_{i-1}$; for each $P \in {\Pset}_{j}$ where $j \in \{1, \ldots, i-1\}$, we have $\HH_P^+ \in \Hset_{j}^+ \subseteq \Hset_{i-1}^+$.
According to the inductive definition of ${\Tset}_i$,
for each $v \in V(G_i)$ we have $\TT_{v,i} \in {\Tset}_i$. This concludes the induction of Part (1).\\

We now turn to the proof of Part (2) of the lemma.
Suppose that we have a legal labeling of $\RR_i$, where the labeling of $\TT_{v,i}$ is stored in $v \in V(G_i)$.
We show how to compute a legal labeling of $\RR_{i-1}$ in $O(1)$ time as follows.
Starting with any legal labeling $\LL_1$ of $\RR_i$, we compute a legal labeling $\LL_2$ of $\RR_{i-1}^+$, a legal labeling $\LL_3$ of $\RR_{i-1}'$, and finally a legal labeling $\LL_4$ of $\RR_{i-1}$.
Throughout the process, the labels of all vertices in $\bigcup_{j=i}^L V_j$ are stable under $\LL_1, \LL_2, \LL_3$, and $\LL_4$.
Recall that $\RR_i$, $\RR_{i-1}^+$, $\RR_{i-1}'$, and $\RR_{i-1}$ are all \emph{imaginary}. ``Time'' refers to communications rounds in the \emph{actual} network $G$, not any imaginary graph.
\begin{description}
\item[From $\LL_1$ to $\LL_2$.]
Let $s,t$ be the poles of $\HH_P^+$ and $u,v$ be the vertices outside of $\HH_P^+$ in $\RR_{i-1}^+$ adjacent to $s,t$, respectively.
At this point $u$ and $v$ have legal labelings of $\TT_{u,i}$ and $\TT_{v,i}$, both trees of which contain a copy of $\HH_P^+$.
Using Lemma~\ref{lem:recover-2} we integrate the labelings of $\TT_{u,i}$ and $\TT_{v,i}$ to
fix a single legal labeling $\LL_2$ of $\HH_P^+$ in $\RR_{i-1}^+$.\footnote{It is not necessary to physically store the
entire $\LL_2$ on $\HH_P^+$.  To implement the following steps, it suffices that
$s,t$ both know what $\LL_2$ is on the subgraph induced by the $(2r-1)$-neighborhood of $\{s,t\}$ in $\HH_P^+$.}
\item[From $\LL_2$ to $\LL_3$.]
A legal labeling $\LL_3$ of $\RR_{i-1}'$ is obtained by applying Lemma~\ref{lem:recover-1}.
For each $P \in {\Pset}_{i-1}$, the labeling $\LL_3$ on $\HH_P$ in $\RR_{i-1}'$
can be determined from the labeling $\LL_2$  of $\HH_P^+$ in $\RR_{i-1}^+$.
In greater detail, suppose $s,t$ are the poles of $\HH_P/\HH_P^+$, which know
$\LL_2$ on the $(2r-1)$-neighborhood of $\{s,t\}$ in $\HH_P^+$.
By Lemma~\ref{lem:recover-1}, there exists a legal labeling $\LL_3$ on $\HH_P$,
which can be succinctly encoded by fixing $\LL_3$ on the $(2r-1)$-neighborhoods
of the \emph{roots} of each unipolar tree on the core path $(s=v_1,\ldots,v_x=t)$ of $\HH_P$.
Thus, once $s,t$ calculate $\LL_3$, they can transmit the relevant information with constant-length messages
to the roots $v_1,\ldots,v_x$.  At this point each $v_j \in V_{i-1}$ can locally compute an extension of its labeling
to all of $\TT_{v_j,i-1}$.
\item[From $\LL_3$ to $\LL_4$.]
Notice that $\RR_{i-1}$ is simply the disjoint union of $\RR_{i-1}'$---for which we already have a legal labeling $\LL_3$---and
each $\TT_{v,i-1}$ that is a connected component of $\RR_{i-1}$.
A legal labeling $\LL_4$ of $\TT_{v,i-1}$ is computed locally at $v$,
which is guaranteed to exist since $\TT_{v,i-1}\in \Tset_{i-1}$.
\end{description}
This concludes the proof of the lemma.
\end{proof}

\begin{lemma}\label{lem:lognalg}
Let $\mathcal{P}$ be any LCL problem on trees with $\Delta = O(1)$.
Given a feasible function $f$,
the LCL problem $\mathcal{P}$ can be solved in $O(\log n)$ time in $\DetLOCAL$.
\end{lemma}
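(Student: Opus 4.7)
The plan is to combine the graph decomposition of Section~\ref{sec.decomp} with Lemma~\ref{lem:GG} and then exploit feasibility of $f$ at the top of the hierarchy. First, I would invoke the Rake \& Compress decomposition to partition $V(G) = V_1 \cup \cdots \cup V_L$ with $L = O(\log n)$ in $O(\log n)$ rounds of $\DetLOCAL$. Then, using the given feasible $f$, I would construct the sequence $\RR_1, \ldots, \RR_L$ together with the imaginary trees $\{\TT_{v,i}\}$ by applying Lemma~\ref{lem:GG}; this takes $O(\log n)$ time and leaves each $v \in V(G_i)$ in possession of its tree $\TT_{v,i} \in \Tset_i$.

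Next, I would compute a legal labeling of $\RR_L$ directly. By the decomposition's final property, $G_L$ contains only isolated vertices, so $\RR_L$ is a disjoint union of partially labeled trees $\TT_{v,L}$, each stored entirely at a single vertex $v \in V_L$. Since $f$ is feasible, every class in $\class(\Tset^\star) \supseteq \class(\Tset_L)$ is good, so each $\TT_{v,L}$ admits a complete legal labeling that extends its partial labeling. Each $v$ computes this labeling using purely local (and therefore free) computation; no communication is required in this base step.

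Having obtained a legal labeling of $\RR_L$, I would descend the hierarchy. For $i = L, L-1, \ldots, 2$, apply part (2) of Lemma~\ref{lem:GG} to convert a legal labeling of $\RR_i$ into a legal labeling of $\RR_{i-1}$ in $O(1)$ rounds of the real network $G$. After $L-1 = O(\log n)$ such steps, we have a legal labeling of $\RR_1 = G$, which is the desired output labeling for $\mathcal{P}$. The total round complexity is $O(\log n) + O(L) = O(\log n)$.

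The only delicate point is the base case at level $L$: we need each $\TT_{v,L}$ to admit a legal completion, and this holds \emph{precisely} because $f$ is feasible, ensuring every class reachable in $\Tset^\star$ is good. All heavy lifting in the descent — surgically undoing the $\cut$, $\labeling$, and $\extend$ operations that defined the imaginary graphs — is already encapsulated in the proof of Lemma~\ref{lem:GG} (via Lemmas~\ref{lem:recover-1} and~\ref{lem:recover-2}), so no additional work is needed here. Thus the lemma follows directly.
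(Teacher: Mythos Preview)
Your proposal is correct and follows essentially the same approach as the paper's own proof: compute the decomposition, build the $\RR_i$ and $\TT_{v,i}$ via Lemma~\ref{lem:GG}, use feasibility of $f$ to legally label $\RR_L$ locally at each isolated vertex of $G_L$, and then descend to $\RR_1=G$ using Lemma~\ref{lem:GG}(2) in $O(1)$ rounds per level. If anything, your write-up is slightly more explicit than the paper's about why the base case at level $L$ succeeds.
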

\begin{proof}
First compute a graph decomposition in $O(\log n)$ time.
Given the graph decomposition, for each $i \in [L]$, each vertex $v \in V_i$ computes the partially labeled rooted trees
$\TT_{v,j}$ for all $j \in [i]$; this can be done in $O(\log n)$ rounds.
Since $f$ is feasible, each partially labeled tree in $\Tset^\star$ admits a legal labeling.
Therefore, $\RR_{L}$ admits a legal labeling, and such a legal labeling can be computed without communication by the vertices in $V_L$.
Starting with any legal labeling of $\RR_{L}$,
legal labelings of $\RR_{L-1},\ldots,\RR_1=G$ can be computed in $O(\log n)$ additional time,
using Lemma~\ref{lem:GG}(2).
\end{proof}

\subsection{Existence of Feasible Labeling Function\label{sect:feasible-functions}}

In Lemmas \ref{lem:func-exist} and~\ref{lem:func-decide}
we show \emph{two} distinct ways to arrive at a feasible labeling function.
In Lemma~\ref{lem:func-exist} we assume that we are given the code of
a $\RandLOCAL$ algorithm $\mathcal{A}$ that solves
$\mathcal{P}$ in $n^{o(1)}$ time with at most $1/n$ probability of failure.
Using $\mathcal{A}$ we can extract a feasible labeling function $f$.\footnote{The precise
running time of $\mathcal{A}$ influences the $w$ parameter used by $\extend$.
For example, if $\mathcal{A}$ runs in $O(\log^2 n)$ time then $w$ will be smaller
than if $\mathcal{A}$ runs in $n^{1/\log\log\log n}$ time.}
Lemma~\ref{lem:func-exist} suffices to prove our $n^{o(1)}\rightarrow O(\log n)$
speedup theorem but, because it needs the code of $\mathcal{A}$, it is
insufficient to answer a more basic question.  Given the description of an LCL $\mathcal{P}$,
is $\mathcal{P}$ solvable in $O(\log n)$ time on trees or not?
Lemma~\ref{lem:func-decide} proves that this question is, in fact, decidable,
which serves to highlight the delicate boundary between decidable and undecidable
problems in LCL complexity~\cite{Brandt+17,NaorS95}.

\begin{lemma}\label{lem:func-exist}
Suppose that there exists a $\RandLOCAL$ algorithm $\mathcal{A}$ that solves $\mathcal{P}$ in $n^{o(1)}$ time
on $n$-vertex bounded degree trees, with local probability of failure at most $1/n$.
Then there exists a feasible function $f$.
\end{lemma}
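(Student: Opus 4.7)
The plan is to let $\mathcal{A}$ itself define $f$, by running it on a fixed large ambient tree, freezing a set of random bits on which it succeeds, and reading off the labels around each middle edge.

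First I extract the combinatorial structure. Although $\Tset^\star$ contains arbitrarily large trees, the chain $\Hset_1^+\subseteq\Hset_2^+\subseteq\cdots$ stabilizes after at most $k^\star\le|\mathscr{C}|$ steps (by Lemmas~\ref{lem:subset} and~\ref{lem:k-converge}), and each $\Hset_i^+$ contains at most one representative per type. Thus only a finite collection $\mathcal{B}=\{\HH^{(1)},\ldots,\HH^{(m)}\}=\Hset_{k^\star-1}^+$ of constant-size bipolar trees is ever presented to $\labeling$. Moreover, $\class(\Tset^\star)=\class(\Tset_{k^\star})$ and $\Tset_{k^\star}$ is itself a finite set of constant-size trees, so each class in $\class(\Tset^\star)$ has a constant-size representative.

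Now define $f$ by running $\mathcal{A}$ on a single canonical ambient tree. Pick a large constant $w$, then choose $N$ with $T(N)\le w$; since $T(n)=n^{o(1)}$ is sub-polynomial, such $N$ exist once $w$ exceeds a threshold depending on $T$. Build an ambient tree $G$ on $N$ vertices containing, for each $j\in[m]$, a disjoint embedded copy of $\extend(\HH^{(j)})$ padded by at least $w$ vertices at each exit, and assign arbitrary distinct IDs. Run $\mathcal{A}$ on $G$ with uniformly random bits. Since the global failure probability is at most $1/N<1$, there exists an outcome $R^\star$ on which $\mathcal{A}_{R^\star}(G)$ is a legal labeling of all of $G$. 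Freeze $R^\star$ and declare $f(\HH^{(j)})$ to be the restriction of $\mathcal{A}_{R^\star}(G)$ to the $(r-1)$-neighborhood of the middle edge $e_j$ inside the $j$th embedded copy. Because $\mathcal{A}$'s locality radius $T(N)\le w$ is at most the padding distance, these labels depend only on data strictly inside $\extend(\HH^{(j)})$ (topology, IDs, and bits), which we regard as canonical data attached to $\HH^{(j)}$.

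To verify feasibility, fix any class $c\in\class(\Tset^\star)$ and a constant-size representative $\TT_c\in\Tset_{k^\star}$; since admitting a legal labeling is a class-invariant property (Theorem~\ref{thm:rel-1}), it suffices to legally label $\TT_c$. Embed $\TT_c$ inside an ambient tree $G_c$ of size $N$ so that around every middle edge of $\TT_c$, the $w$-radius neighborhood (topology, IDs, and random bits) exactly matches the canonical local data used to define $f$; this is possible because the middle edges of $\TT_c$ have pairwise-disjoint $w$-neighborhoods (by construction of $\extend$), so the canonical slices from $G$ may be transplanted without conflict. Leave the remaining bits of $G_c$ uniformly random. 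With probability $\ge 1-1/N>0$, $\mathcal{A}$ produces a legal labeling of all of $G_c$; by locality, the labels it places on each middle edge of $\TT_c\subset G_c$ coincide exactly with $f$. Restricting this legal labeling back to $\TT_c$ (reconciling the embedding via Theorem~\ref{thm:rel-1}) gives the desired legal extension of $\TT_c$'s $f$-partial-labeling. The essential difficulty is achieving $T(N)\le w$ for a constant $w$ depending only on $\mathcal{P}$: this is exactly where the $n^{o(1)}$ hypothesis is used, and explains why the automatic speedup theorem naturally cuts off at $n^{o(1)}$.
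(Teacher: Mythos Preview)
Your proposal has two genuine gaps.

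\textbf{Circularity in defining $\mathcal{B}$.} You identify $\mathcal{B}=\Hset_{k^\star-1}^+$ as the finite collection of bipolar trees ever presented to $\labeling$, and then build one ambient tree $G$ containing a copy of each, from which you read off $f$. But the sets $\Hset_i^+$ (and already $\Hset_i$, $\Tset_i$ for $i\ge 2$) are themselves \emph{constructed using $f$}: every element of $\Hset_i^+$ has the form $\extend(\labeling(\HH))$, and for $i\ge 2$ the input $\HH\in\Hset_i$ has subtrees drawn from $\Hset_{i-1}^+$, which already carry partial labels produced by earlier calls to $f$ and whose very topology (via $\pump$, which is type-sensitive) depends on those labels. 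So you cannot name $\mathcal{B}$ before you have $f$. The paper sidesteps this by defining $f$ \emph{uniformly} on every bipolar tree $\HH$: $f(\HH)$ is the most probable labeling of $N^{r-1}(e)$ when $\mathcal{A}$ is simulated on $\extend(\HH)$ with a suitably large advertised graph size. No enumeration of inputs is needed.

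\textbf{Conditioning blowup in the feasibility step.} You freeze a specific bit-string $R^\star$ in the $w$-neighborhood of every designated edge of $\TT_c$, leave the remaining bits uniform, and then assert that $\mathcal{A}$ succeeds on $G_c$ with probability $\ge 1-1/N$. That bound is valid only when \emph{all} bits are uniform. Fixing the bits in a region of $\Delta^{\Theta(w)}$ vertices conditions on an event of exponentially small probability, so for any vertex whose $T(N)$-ball straddles the boundary of a frozen region the local failure probability can blow up far beyond $1/N$; the union bound then says nothing. The paper's fix is the ``most probable labeling'' device: since there are at most $\beta=|\LabelOut|^{\Delta^r}$ labelings of $N^{r-1}(e)$, the most probable one has probability at least $1/\beta$, so conditioning the bits in $N^{r-1+t}(e)$ to land in its preimage $S_e$ inflates any point probability by at most $\beta$. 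Choosing $w$ large enough that $N^{r+t}(v)$ meets at most one such neighborhood, and advertising a graph size of $\beta N+1$ (with $N$ the maximum size of a tree in $\Tset_{k^\star}$), the per-vertex failure becomes at most $\beta/(\beta N+1)$, and the union bound over $N$ vertices gives total failure $<1$. This controlled, constant-factor conditioning is the heart of the argument and is absent from your proof.
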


\begin{proof}
Define $\beta = |\LabelOut|^{\Delta^{r}}$ to be an upper bound on the number of distinct output labelings of $N^{r-1}(e)$, where $e$ is any edge in any graph of maximum degree $\Delta$.
Define $N$ as the maximum number of vertices of a tree in ${\Tset}_{k^\star}$
over {\em all} choices of labeling function $f$.
As $\Delta, r, k^\star$ are all constants, we have $N = w^{O(1)}$.
Define $t$ to be the running time of $\mathcal{A}$ on a $(\beta N + 1)$-vertex tree.
Notice that $t$ depends on $N$, which depends on $w$.

\paragraph{Choices of $w$ and $f$.} We select $w$ to be sufficiently large such that
$w \geq 4(r+t)$. Such a $w$ exists since $\mathcal{A}$ runs in $n^{o(1)}$ time on an $n$-vertex graph,
and in our case $n$ is polynomial in $w$.
By our choice of $w$, the labeled parts of $\TT=(T, \LL) \in {\Tset}_{k^\star}$ are spread far apart.
In particular,
(i) the sets $N^{(r-1)+t}(e)$ for all designated edges $e$ in $\TT$ are disjoint,
(ii) for each vertex $v \in V(T)$, there is at most one designated edge $e$ such that the set
$N^{r+t}(v)$ intersects $N^{r-1+t}(e)$.

Let the function $f$ be defined as follows. Take any bipolar tree $\HH=(H, \LL')$ with middle edge $e$ on its core path.   The output labels of $N^{r-1}(e)$ are assigned by selecting the
{\em most probable labeling} that occurs when running $\mathcal{A}$ on the tree $\HH' = \extend(\HH)$,
while pretending that the underlying graph has $\beta N + 1$ vertices.
Notice that the most probable labeling occurs with probability at least $1/\beta$.

\paragraph{Proof Idea.}  In what follows, consider {\em any} partially labeled rooted tree $\TT=(T, \LL) \in {\Tset}_{k^\star}$, where the set ${\Tset}_{k^\star}$ is constructed with the parameter $w$ and function $f$.
All we need to prove is that $\TT$ admits a legal labeling $\LL_\diamond$.
Suppose that we execute $\mathcal{A}$ on $T$ while pretending that the total number of vertices is $\beta N + 1$.
Let $v$ be any vertex in $T$.
According to $\mathcal{A}$'s specs,
the probability that the output labeling of $N^r(v)$ is inconsistent with $\mathcal{P}$ is at most $1/ (\beta N + 1)$. However, it is not guaranteed that the output labeling resulting from $\mathcal{A}$ is also consistent with $\TT$, since $\TT$ is partially labeled. To handle the partial labeling of $\TT$, our strategy is to consider a modified distribution of random bits generated by vertices in $T$ that forces any execution of $\mathcal{A}$
to agree with $\LL$, wherever it is defined.
We will later see that with an appropriately chosen distribution of random bits, the outcome of $\mathcal{A}$ is a legal labeling of $\TT$ with positive probability.

\paragraph{Modified Distribution of Random Bits.} Suppose that an execution of $\mathcal{A}$ on a  $(\beta N + 1)$-vertex graph needs a $b$-bit  random string for each vertex.
For each designated edge $e$, let $U_e$ be the set of all assignments of $b$-bit strings to vertices in $N^{(r-1)+t}(e)$.
Define $S_e$ as the subset of $U_e$ such that $\rho \in S_e$ if and only if the following is true. Suppose that the $b$-bit string of each $u \in N^{(r-1)+t}(e)$ is $\rho(u)$. Using the $b$-bit string $\rho(u)$ for each $u \in N^{(r-1)+t}(e)$, the output labeling of the vertices in $N^{r-1}(e)$ resulting from executing $\mathcal{A}$  is the same as the output labeling specified by $\LL$. According to our choice of $f$, we must have $|S_e| / |U_e| \geq 1 / \beta$.

Define the  modified distribution $\mathcal{D}$ of $b$-bit random strings to the vertices in $T$ as follows. For each  designated edge $e$, the $b$-bit strings of the vertices in $N^{(r-1)+t}(e)$ are chosen uniformly at random from the set $S_e$. For the remaining vertices, their $b$-bit strings are chosen uniformly at random.

\paragraph{Legal Labeling $\LL_\diamond$ Exists.} Suppose that $\mathcal{A}$ is executed on $T$ with the modified distribution of random bits $\mathcal{D}$. Then it is guaranteed that $\mathcal{A}$ outputs a complete
labeling that is consistent with $\TT$.
Of course, the probability that $\mathcal{A}$ outputs an \emph{illegal} labeling under $\mathcal{D}$ may be different.
We need to show that $\mathcal{A}$ nonetheless succeeds with non-zero probability.

Consider any vertex $v \in V(T)$. The probability that $N^r(v)$ is inconsistent with $\mathcal{P}$
is at most $\beta / (\beta N + 1)$ under distribution $\mathcal{D}$, as explained below.
Due to our choice of $w$, the set $N^{r+t}(v)$ intersects at most one set $N^{r-1+t}(e)$ where $e$ is a designated edge. Let $U_v$ be the set of all assignments of $b$-bit strings to vertices in $N^{r+t}(v)$.
For  each $\rho \in U_v$, the probability that $\rho$ occurs in an execution of
$\mathcal{A}$  is $1/|U_v|$ if all random bits are chosen uniformly at random, and is at most $\beta/|U_v|$ under
$\mathcal{D}$.  Thus, the probability that $\mathcal{A}$ (using distribution $\mathcal{D}$)
labels $N^r(v)$ incorrectly is at most $\beta / (\beta N + 1)$.
The total number of vertices in $T$ is at most $N$.  Thus,
by the union bound, the probability that the output labeling of $\mathcal{A}$ (using $\mathcal{D}$)
is not a legal labeling is $\beta N / (\beta N + 1) < 1$.
\end{proof}

\begin{lemma}\label{lem:func-decide}
Given an LCL problem $\mathcal{P}$ on bounded degree graphs,
it is decidable whether there exists a feasible function $f$.
\end{lemma}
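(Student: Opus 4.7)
The plan is to reduce the decision problem to a finite enumeration followed by a decidable per-candidate check. First, observe that the function $f$ only ever gets evaluated, via $\labeling$, on the representatives $\tilde{\HH}$ chosen in the construction of $\{\mathscr{H}_i^+\}$. Since $\mathscr{H}_i^+$ contains at most one bipolar tree per type, and the number of types of bipolar trees is a constant bounded in Section~\ref{sect:equiv-classes}, the only ``freedom'' $f$ has that affects $\class(\mathscr{T}^\star)$ is the choice, for each type $\tau$, of a labeling of the $N^{r-1}(e)$ region of its representative. This region has at most $2\Delta^r$ vertices and $\LabelOut$ is finite, so the number of admissible output labelings is finite. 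By Lemma~\ref{lem:w-converge} we can fix $w = 2r+2\Lpump$, removing any dependence on $w$. Thus there are only finitely many ``essentially distinct'' labeling functions, and we can enumerate them.

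For each candidate $f$, I would inductively compute $\class(\mathscr{T}_i)$, $\type(\mathscr{H}_i)$, and $\type(\mathscr{H}_i^+)$ for $i = 1,2,\ldots,k^\star$; Lemma~\ref{lem:k-converge} guarantees convergence by $i = k^\star$. Each of these sets is of constant cardinality, and the inductive step is computable: Lemmas~\ref{lem:replace-rootedtree}, \ref{lem:type2class}, and~\ref{lem:type} express the class/type of a composed tree as a function of the classes/types of its constituent pieces. Concretely, I would represent each class (resp.\ type) by its ``signature'': the isomorphism type of the subgraph induced on $D_1 \cup D_2$, together with the subset of labelings of $D_1 \cup D_2$ that are extendible. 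This signature has bounded size, and the operations needed (gluing subtrees at a root, appending a unipolar tree to a bipolar tree, applying $\extend\circ\labeling$) induce computable functions on signatures, because extendibility of a labeling of $D_1 \cup D_2$ after gluing is determined by local compatibility plus extendibility information already recorded in the signatures of the pieces.

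The main obstacle is deciding whether a given class in $\class(\mathscr{T}_{k^\star})$ is \emph{good}. By Theorem~\ref{thm:rel-1}, goodness is a class invariant, so it suffices to check one representative per class. We can obtain explicit representatives by unrolling the inductive construction of $\mathscr{T}_i$ down to the base case $\mathscr{T}_1 = \{\text{single vertex}\}$, yielding finite partially labeled trees. For any finite partially labeled tree $\TT$, it is decidable whether $\TT$ admits a legal labeling, for instance by brute-force enumeration of complete labelings (a finite set), or more cleanly by a bottom-up dynamic program that, at each subtree, records the set of labelings of its topmost $r$ layers that extend to a labeling of the whole subtree consistent with $\mathcal{P}$; this is exactly the computation of the class signature, and in fact could be done in the same pass as computing $\class(\mathscr{T}_i)$.

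Combining these pieces yields the decision procedure: enumerate all (finitely many) candidate $f$; for each, compute $\class(\mathscr{T}_{k^\star})$ and its goodness information using the signature-based induction above; accept iff some candidate makes every class in $\class(\mathscr{T}_{k^\star})$ good. The most delicate point to pin down carefully is that the ``extendibility'' entry of a composite signature really is a computable function of the children's signatures, despite the tree being potentially unbounded, but this follows from the locality of $\mathcal{P}$ (radius $r$) together with the fact that $D_1 \cup D_2$ is defined by radius $2r-1$ from the poles.
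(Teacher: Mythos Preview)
Your proposal is correct and follows essentially the same approach as the paper: bound the number of relevant applications of $\labeling$ by the (constant) number of bipolar types, enumerate the finitely many candidate functions $f$, and for each candidate compute $\class(\mathscr{T}_{k^\star})$ (using Lemma~\ref{lem:k-converge}) and test whether every class is good. Your treatment is in fact more explicit than the paper's about the computability of the per-candidate check---working at the level of class/type ``signatures'' rather than loosely saying one ``constructs $\mathscr{T}_{k^\star}$''---and about why goodness of a class is decidable; these elaborations are sound and consistent with the paper's argument.
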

\begin{proof}
Throughout the construction of the three infinite sequences  $\{\mathscr{H}_i\}_{i \in \mathds{Z}^{+}}$,
$\{\mathscr{H}_{i}^+\}_{i \in \mathds{Z}^{+}}$, and $\{\mathscr{T}_i\}_{i \in \mathds{Z}^{+}}$, the number of
distinct applications of the operation $\labeling$ is constant,
as $|\Hset_i^+|$ is at most the total number of types.

Therefore, the number of distinct candidate functions $f$ that need to be examined is finite.
For each candidate labeling function $f$ (with any parameter $w \geq \ell$), in bounded amount of time we can construct the set $\mathscr{T}_{k^\star}$, as $k^\star = |\mathscr{C}|$ is a constant.
By examining the classes of the partially labeled rooted trees in $\mathscr{T}_{k^\star}$ we can infer whether the function $f$ is feasible (Lemma~\ref{lem:k-converge}).
Thus, deciding whether there exists a feasible function $f$ can be done in  bounded amount of time.
\end{proof}

Combining Lemmas~\ref{lem:lognalg}, \ref{lem:func-exist}, and \ref{lem:func-decide},
we obtain the following theorem.

\begin{theorem}\label{thm:tree-main}
Let $\mathcal{P}$ be any LCL problem on trees with $\Delta = O(1)$.
If there exists a $\RandLOCAL$ algorithm $\mathcal{A}$ that solves $\mathcal{P}$ in $n^{o(1)}$ rounds,
then there exists a $\DetLOCAL$ algorithm $\mathcal{A}'$ that solves $\mathcal{P}$ in $O(\log n)$ rounds.
Moreover, given a description of $\mathcal{P}$,
it is decidable whether the $\RandLOCAL$ complexity of $\mathcal{P}$ is
$n^{\Omega(1)}$ or the $\DetLOCAL$ complexity of $\mathcal{P}$ is $O(\log n)$.
\end{theorem}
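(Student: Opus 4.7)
The plan is to assemble Theorem~\ref{thm:tree-main} directly from the three preceding lemmas, which already do all of the heavy lifting. There is essentially no new content to prove; rather, I just need to thread the implications together and verify that the dichotomy in the ``moreover'' clause is exhaustive.

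First I would handle the speedup statement. Assume $\mathcal{A}$ is an $n^{o(1)}$-time $\RandLOCAL$ algorithm for $\mathcal{P}$ with failure probability at most $1/n$. By Lemma~\ref{lem:func-exist}, $\mathcal{A}$ can be used to extract a feasible labeling function $f$ (the running time bound on $\mathcal{A}$ controls how large we need to choose the $\extend$ parameter $w$, but all that matters for the statement is that such an $f$ exists). Once a feasible $f$ is in hand, Lemma~\ref{lem:lognalg} yields a $\DetLOCAL$ algorithm $\mathcal{A}'$ that solves $\mathcal{P}$ in $O(\log n)$ rounds, completing the first half.

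Next I would derive the decidability claim from Lemma~\ref{lem:func-decide}. Given only a description of $\mathcal{P}$, we can in bounded time decide whether a feasible labeling function $f$ exists. If one exists, then by Lemma~\ref{lem:lognalg} we can exhibit an $O(\log n)$-round $\DetLOCAL$ algorithm, so the $\DetLOCAL$ complexity of $\mathcal{P}$ is $O(\log n)$. If no feasible $f$ exists, then the contrapositive of Lemma~\ref{lem:func-exist} says that there is no $n^{o(1)}$-time $\RandLOCAL$ algorithm for $\mathcal{P}$, i.e.\ the $\RandLOCAL$ complexity is $n^{\Omega(1)}$. These two cases are mutually exhaustive (though not mutually exclusive, which is fine --- the statement is a disjunction), so the decision procedure from Lemma~\ref{lem:func-decide} is exactly what is needed.

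There is no real obstacle here, since the difficult work has already been discharged. The only minor point of care is ensuring that the quantifier on $w$ in Lemma~\ref{lem:func-exist} can be chosen consistently with the one used implicitly in Lemma~\ref{lem:func-decide} and Lemma~\ref{lem:lognalg}: by Lemma~\ref{lem:w-converge} the feasibility of $f$ and the classes appearing in $\Tset^\star$ are independent of $w$ (for $w \geq \ell$), so ``there exists a feasible $f$'' is a well-defined predicate of $\mathcal{P}$ alone, and the same $f$ produced in Lemma~\ref{lem:func-exist} (for the $w$ dictated by $\mathcal{A}$'s runtime) is also feasible for the smaller $w$ used in the $O(\log n)$ construction. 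With this checked, the three lemmas chain together cleanly to yield Theorem~\ref{thm:tree-main}.
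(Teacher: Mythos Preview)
Your proposal is correct and matches the paper's approach exactly: the paper simply states that Theorem~\ref{thm:tree-main} follows by combining Lemmas~\ref{lem:lognalg}, \ref{lem:func-exist}, and~\ref{lem:func-decide}, and you have spelled out precisely how those three lemmas chain together (including the role of Lemma~\ref{lem:w-converge} in making feasibility independent of $w$).
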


\begin{remark}
Observe that our graph decomposition algorithm also works on graphs of girth at least $c \log n$, where $c$ is a sufficiently large constant depending on $\mathcal{P}$.
This implies that Theorem~\ref{thm:tree-main} also applies to the class of $n$-vertex
graphs with girth $\omega(\log n)$.
\end{remark}

\ignore{
\begin{remark}
Since an $O(\log n)$-time $\DetLOCAL$ algorithm can be derived from a feasible function $f$ with  any choice of target length $w$, the $\extend$ operation is unnecessary in the proof of Lemma~\ref{lem:lognalg}.
The reader may wonder what the role of $\extend$ is in the derivation of Theorem~\ref{thm:tree-main}.
The proof of Lemma~\ref{lem:func-exist} basically says that there exists a function $f$ and a parameter $w$ such that all classes in $\class(\Tset^\star)$ are good, given that $\mathcal{P}$ can be solved in $n^{o(1)}$ time in $\RandLOCAL$.
Therefore, without the $\extend$ operation, the $\omega(\log n)$---$n^{o(1)}$ gap can still be derived (from Lemma~\ref{lem:func-exist} and Lemma~\ref{lem:lognalg}).

The role of the $\extend$ operation is essential in the derivation of the decidability result. The $\extend$ operation comes into the play in the proof of Lemma~\ref{lem:w-converge}, which  says that if we can make all classes in  $\class(\Tset^\star)$ good for some (large) $w$, then any choice of target length $w \geq \ell = 2(r+\Lpump)$ also works, and so we do not need to worry about the unbounded number of possible values of $w$.
\end{remark}
}

\newcommand{\vbl}{\operatorname{vbl}}
\newcommand{\Alg}{\mathcal{A}}

\section{A Gap in the $\RandLOCAL$ Complexity Hierarchy}\label{sect:LLL}

Consider a set $\mathcal{V}$ of independent random variables, and a set $\mathcal{X}$ of \emph{bad events},
where $A\in \mathcal{X}$ depends only on some subset $\vbl(A) \subset \mathcal{V}$ of variables.\footnote{Each variable
$V\in \mathcal{V}$ may have a different distribution and range, so long as the range is some finite set.}
The \emph{dependency graph} $G_{\mathcal{X}} = (\mathcal{X}, \{(A,B) \;|\; \vbl(A) \cap \vbl(B) \neq \emptyset\})$ joins
events by an edge if they depend on at least one common variable.  The \emph{\Lovasz{} local lemma} (LLL) and its variants
give criteria under which $\Pr(\bigcap_{A\in \mathcal{X}} \overline{A}) > 0$, i.e., it is possible that all bad events do not occur.  We will narrow
our discussion to \emph{symmetric} criteria, expressed in terms of $p$ and $d$,
where $p = \max_{A\in\mathcal{X}} \Pr(A)$
and $d$ is the maximum degree in $G_{\mathcal{X}}$.  A standard version of the LLL states that
if $ep(d+1) < 1$, then $\Pr(\bigcap \overline{A}) > 0$.  Given that all bad events \emph{can} be avoided, it is often
desirable to constructively find a point in the probability space (i.e., an assignment to variables in $\mathcal{V}$) that avoids them.
This problem has been thoroughly investigated in the sequential context~\cite{MoserT10,HarrisS14,Harris16,KolipakaS11,Kolmogorov16,HarveyV15,AchlioptasI14},
but somewhat less so from the point of view of parallel and distributed computation~\cite{ChungPS17,Ghaffari16,BrandtEtal16,ChangKP16,HaeuplerH17}.

The \emph{distributed} constructive LLL problem is the following.
The communications network is precisely $G_{\mathcal{X}}$.  Each vertex (event) $A$ knows the number of bad events in $G_{\mathcal{X}}$
and the distribution of those variables appearing in $\vbl(A) \subset \mathcal{V}$.  Vertices communicate for some number
of rounds, and collectively reach a consensus on an assignment to $\mathcal{V}$ in which no bad event occurs.
Moser and Tardos's~\cite{MoserT10} parallel resampling algorithm implies an $O(\log^2 n)$ time $\RandLOCAL$ algorithm
under the LLL criterion $ep(d+1) < 1$.  Chung, Pettie, and Su~\cite{ChungPS17} gave an $O(\log_{1/epd^2} n)$ time algorithm
under the LLL criterion $epd^2 < 1$ and an $O(\log n/\log\log n)$ time algorithm under criterion $p\cdot \poly(d)2^d = O(1)$.
They observed that under \emph{any} criterion of the form $p\cdot f(d) < 1$, $\Omega(\log^* n)$ time is necessary.
Ghaffari's~\cite{Ghaffari16} weak MIS algorithm, together with~\cite{ChungPS17}, implies
an $O(\log d \cdot \log_{1/ep(d+1)} n)$ algorithm under LLL criterion $ep(d+1)<1$.
Brandt et al.~\cite{BrandtEtal16} proved that $\Omega(\log_d \log n)$ time in $\RandLOCAL$ is necessary,
even under the permissive LLL criterion $p2^d \le 1$.   Chang et al.~\cite{ChangKP16}'s results imply that $\Omega(\log_d n)$
time is necessary in $\DetLOCAL$; however, there are no known deterministic distributed LLL algorithms.
It is conceivable that the distributed complexity of the LLL is very sensitive to the criterion used.
We {\em define} $T_{LLL}(n,d)$ to be the $\RandLOCAL$ time to compute a point in the probability space
avoiding all bad events (w.h.p.), under any ``polynomial'' LLL criterion of the form
\begin{equation}\label{eqn:LLL}
pd^c = O(1),
\end{equation}
where $c$ can be an arbitrarily large constant.
Prior results~\cite{ChungPS17,BrandtEtal16}
imply that $T_{LLL}(n,d)$ is $\Omega(\log_d\log n)$, $\Omega(\log^* n)$, and $O(\log_{1/epd^2} n)$.
In this section we prove an automatic speedup theorem for sublogarithmic $\RandLOCAL$ algorithms.
We do \emph{not} assume that $\Delta = O(1)$ in this section.

\begin{theorem}\label{thm:LLLspeedup}
Suppose that $\Alg$ is a $\RandLOCAL$ algorithm that solves some LCL problem $\mathcal{P}$ (w.h.p.),
in $T_\Delta(n)$ time.  For a sufficiently small constant $\epsilon>0$, suppose $T_\Delta(n)$
is upper bounded by $C(\Delta) + \epsilon \log_\Delta n$, for some function $C$.
It is possible to transform $\Alg$ into a new $\RandLOCAL$ algorithm $\Alg'$
that solves $\mathcal{P}$ (w.h.p.) in $O(C(\Delta)\cdot T_{LLL}(n,\Delta^{O(C(\Delta))}))$ time.
\end{theorem}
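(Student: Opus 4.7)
The plan is to use $\Alg$ itself as the atomic building block of an LLL instance, then solve the LLL distributively. The driving intuition: running $\Alg$ on our large graph while \emph{lying} that $n$ equals some much smaller $n'$ makes $\Alg$ very fast but increases the \emph{per-vertex} failure probability to $1/n'$, which is useless if we naively apply a union bound over all $n$ vertices. However, each local failure depends only on a small neighborhood of random bits, so the situation is exactly the setting where LLL lets us certify the existence of (and then distributively find) random bits under which all local failures are simultaneously avoided.

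First, I would pick $n' = \Delta^{\alpha C(\Delta)}$ for a constant $\alpha$ chosen large enough in terms of $\epsilon$ and the LLL exponent $c$. With this choice, feeding $\Alg$ the false size parameter $n'$ makes it run in time
\[
T_\Delta(n') \;\leq\; C(\Delta) + \epsilon \log_\Delta n' \;=\; (1 + \epsilon\alpha)\, C(\Delta) \;=\; O(C(\Delta)).
\]
A standard Kuhn--Moscibroda--Wattenhofer-style embedding argument shows that $\Alg$'s per-vertex failure probability (under the lie) remains at most $1/n'$: any $T_\Delta(n')$-radius neighborhood in the true graph has at most $\Delta^{T_\Delta(n')} \leq \Delta^{C(\Delta)}(n')^\epsilon < n'$ vertices (for $\alpha$ large), so it can be completed into an $n'$-vertex instance on which $\Alg$'s guarantee kicks in.

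Next, I would set up the LLL instance formally. The independent random variables are the local coin strings, one per vertex of $G$. The bad event $A_v$ is that the output of $\Alg$ (run with parameter $n'$) around $v$ violates some constraint of $\mathcal{P}$ at $v$. Since $\mathcal{P}$ has radius $r = O(1)$ and $\Alg$ looks out to distance $T_\Delta(n')$, we have $\vbl(A_v) \subseteq N^{T_\Delta(n') + r}(v)$, so
\[
p \;\leq\; 1/n', \qquad d \;\leq\; \Delta^{\,2(T_\Delta(n') + r)} \;=\; \Delta^{\,O(C(\Delta))}.
\]
Thus $p d^c \leq \Delta^{\,O(cC(\Delta)) - \alpha C(\Delta)}$, which is $O(1)$ once $\alpha$ is chosen sufficiently large relative to $c$ (and $\epsilon$ small enough that the linear-in-$\alpha$ term in $T_\Delta(n')$ does not overwhelm the savings from $1/n'$). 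This is the ``polynomial'' LLL criterion in the definition of $T_{LLL}$.

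Finally, I would invoke the $T_{LLL}(n, \Delta^{O(C(\Delta))})$-time distributed LLL solver on this instance to produce (w.h.p.) an assignment to the coin strings that avoids every $A_v$; each vertex then outputs what $\Alg$ computes locally from its sampled coins. Because the dependency graph is a power of $G$ with hop-distance $O(C(\Delta))$, a single round of communication in the dependency graph is simulated in $O(C(\Delta))$ rounds of $G$, yielding the claimed overall bound $O(C(\Delta) \cdot T_{LLL}(n, \Delta^{O(C(\Delta))}))$. The main technical obstacle I expect is the balancing act in choosing $\alpha$ and $\epsilon$: we need $n'$ large enough that the Kuhn-style embedding works and $pd^c = O(1)$, yet small enough that $T_\Delta(n') = O(C(\Delta))$; this is exactly where the hypothesis that $\epsilon$ is a sufficiently small constant gets used. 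A secondary care-point is that every variable in $\vbl(A_v)$ must be genuinely local (independent coin flips, not shared randomness or IDs), which is why the speedup is stated in the $\RandLOCAL \to \RandLOCAL$ direction.
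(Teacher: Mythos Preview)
Your proposal is correct and follows essentially the same approach as the paper: lie to $\Alg$ about $n$ so it runs in $O(C(\Delta))$ time with per-vertex failure $1/n'$, cast the local failures as LLL bad events with $p=1/n'$ and $d=\Delta^{O(C(\Delta))}$, verify the polynomial criterion $pd^c=O(1)$, and simulate the LLL solver on the dependency graph with $O(C(\Delta))$ overhead. The only cosmetic difference is that the paper defines the fake size $n^\star$ implicitly as the least $n$ for which $T_\Delta(n) < (1/2c)\log_\Delta n - r$ (fixing $\epsilon = O(1/c)$), whereas you parameterize $n' = \Delta^{\alpha C(\Delta)}$ and then tune $\alpha$ against $\epsilon$ and $c$; both lead to the same constraints and the same bound.
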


\begin{proof}
Suppose that $\Alg$ has a local probability of failure $1/n$,
that is, for any $v\in V(G)$, the probability that $N^r(v)$ is inconsistent
with $\mathcal{P}$ is $1/n$, where $r$ is the radius of $\mathcal{P}$.
Once we settle on the LLL criterion exponent $c$ in (\ref{eqn:LLL}), we fix $\epsilon = O((2c)^{-1})$.
Define $n^\star$ as the minimum value for which
\[
t^\star = T_\Delta(n^\star) < (1/2c)\cdot \log_\Delta n^\star - r.
\]
It follows that $t^\star = O(C(\Delta))$ and $n^\star = \Delta^{O(C(\Delta))}$.

The algorithm $\Alg'$ applied to an $n$-vertex graph $G$ works as follows.
Imagine an experiment where we run $\Alg$, but lie to the vertices, telling them that ``$n$'' = $n^\star$.
Any $v\in V(G)$ will see a $t^\star$-neighborhood $N^{t^\star}(v)$ that is consistent with some $n^\star$-vertex graph.
However, the \emph{bad event} that $N^r(v)$ is incorrectly labeled is $1/n^\star$, not $1/\poly(n)$, as desired.
We now show that this system of bad events satisfies the LLL criterion~(\ref{eqn:LLL}).
Define the following events, graph, and quantities:
\begin{align*}
\mathcal{E}_v &\::\: \mbox{the event that $N^r(v)$ is incorrectly labeled according to $\mathcal{P}$}\\
\mathcal{X} &= \{\mathcal{E}_v \;|\; v\in V(G)\}   & \mbox{the set of bad events}\\
G_{\mathcal{X}} &= (\mathcal{X}, \, \{(\mathcal{E}_u,\mathcal{E}_v) \;|\; N^{r+t^\star}(u) \cap N^{r+t^\star}(v) \neq \emptyset\}) & \mbox{the dependency graph}\\
d &\le \Delta^{2(r+t^\star)}\\
p &= 1/n^\star
\end{align*}
The event $\mathcal{E}_v$ is determined by the labeling of $N^r(v)$ and the label of each $v' \in N^r(v)$
is determined by $N^{t^\star}(v')$, hence $\mathcal{E}_v$ is determined by (the data stored in, and random bits generated by) vertices in $N^{r+t^\star}(v)$.
Clearly $\mathcal{E}_v$ is independent of any $\mathcal{E}_u$ for which $N^{r + t^\star}(u)\cap N^{r + t^\star}(v) = \emptyset$,
which justifies the definition of the edge set of $G_{\mathcal{X}}$.
Since the maximum degree in $G$ is $\Delta$, the maximum degree
$d$ in $G_{\mathcal{X}}$ is less than $\Delta^{2(r+t^\star)}$.
By definition of $\Alg$, $\Pr(\mathcal{E}_v) \le 1/n^\star = p$.
This system satisfies LLL criterion (\ref{eqn:LLL}) since, by definition of $t^\star$,
\[
pd^c = p\Delta^{2c(r+t^\star)} < (1/n^\star)\cdot n^\star = 1.
\]
The algorithm $\Alg'$ now simulates a constructive LLL algorithm on $G_{\mathcal{X}}$ in order to find a
labeling such that no bad event occurs.
Since a virtual edge $(\mathcal{E}_u,\mathcal{E}_v)$ exists iff $u$ and $v$ are at distance at most
$2(r+t^\star) = O(C(\Delta))$, any $\RandLOCAL$
algorithm in $G_{\mathcal{X}}$ can be simulated in $G$ with $O(C(\Delta))$ slowdown.
Thus, $\Alg'$ runs in $O(C(\Delta)\cdot T_{LLL}(n,\Delta^{O(C(\Delta))}))$ time.
\end{proof}

Theorem~\ref{thm:LLLspeedup} shows that when $\Delta=O(1)$, $o(\log n)$-time
$\RandLOCAL$ algorithms can be \emph{sped up} to run in $O(T_{LLL}(n, O(1)))$ time.
Another consequence of this same technique is that sublogarithmic $\RandLOCAL$ algorithms with \emph{large messages} can
be converted to (possibly slightly slower) algorithms with small messages.
The statement of Theorem~\ref{thm:LLLmessagesize}
reflects the use of a particular distributed LLL algorithm, namely~\cite[Corollary~1 and Algorithm~2]{ChungPS17}.
It may be improvable using future distributed LLL technology.

The LLL algorithm of~\cite{ChungPS17} works under the assumption that $epd^2 < 1$, and that 
each bad event $A \in \mathcal{X}$ is associated with a unique ID. 
The algorithm starts with a random assignment to the variables $\mathcal{V}$. In each iteration, let $\mathcal{F}$ be the set of bad events that occur under the current variable assignment; let $\mathcal{I}$ be the subset of $\mathcal{F}$ such that $A \in \mathcal{I}$ if and only if $\ID(A) < \ID(B)$ for each $B \in \mathcal{F}$ such that $\vbl(A) \cap \vbl(B) \neq \emptyset$. 
The next variable assignment is obtained by \emph{resampling} all variables in 
$\bigcup_{A \in \mathcal{I}} \vbl(A)$.  After $O(\log_{1/epd^2} n)$ iterations, no bad event occurs with probability $1-1/\poly(n)$.

\begin{theorem}\label{thm:LLLmessagesize}
Let $\Alg$ be a $(C(\Delta)+\epsilon \log_\Delta n)$-time $\RandLOCAL$ algorithm that solves some
LCL problem $\mathcal{P}$ with high probability, where $\epsilon>0$ is a sufficiently small constant.
Each vertex locally generates $r_\Delta(n)$ random bits and sends $m_\Delta(n)$-bit messages.
It is possible to transform $\Alg$ into a new $\RandLOCAL$ algorithm $\Alg'$
that solves $\mathcal{P}$ (w.h.p.) in
$O(\log_\Delta n)$ time, where each
vertex generates
$O(\log n + r_\Delta(\zeta)\cdot\log_{\zeta} n)$ random bits, and
sends
$O(\min\{\log (|\LabelOut|)\cdot \Delta^{O(1)} + m_\Delta(\zeta) + \zeta,\;\, r_\Delta(\zeta) \cdot \zeta\})$-bit messages,
where $\zeta = \Delta^{O(C(\Delta))}$ depends on $\Delta$.
\end{theorem}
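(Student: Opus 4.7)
The plan is to replay the LLL reduction from Theorem~\ref{thm:LLLspeedup}, then simulate the specific constructive LLL algorithm of~\cite{ChungPS17} in $G$ while carefully accounting for its randomness and bandwidth consumption. Exactly as in Theorem~\ref{thm:LLLspeedup}, I would select $\zeta = n^\star = \Delta^{O(C(\Delta))}$ and $t^\star = T_\Delta(n^\star) = O(C(\Delta))$ so that, running $\Alg$ while lying about ``$n$'' $= n^\star$, the bad event $\mathcal{E}_v$ that $N^r(v)$ is inconsistent with $\mathcal{P}$ has $p \le 1/n^\star$ and the dependency graph $G_\mathcal{X}$ has degree $d \le \Delta^{2(r+t^\star)} = \zeta^{O(1)}$. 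For $\epsilon$ sufficiently small the slack is polynomial in $\zeta$, in particular $epd^2 \le \zeta^{-\Omega(1)}$, so CPS17 halts after $K = O(\log_{1/epd^2} n) = O(\log_\zeta n)$ iterations.

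Each virtual iteration is simulated in $O(r+t^\star) = O(C(\Delta))$ rounds of $G$, yielding a total round complexity of $O(C(\Delta)\cdot\log_\zeta n) = O(\log_\Delta n)$ since $\log\zeta = \Theta(C(\Delta)\log\Delta)$. For the randomness bound, each virtual event needs an ID for CPS17's local-minimum test; $O(\log n)$ random bits make all IDs globally unique w.h.p. The internal random bits of $\Alg$, of size $r_\Delta(\zeta)$ per vertex per execution, may be resampled up to $K$ times, which adds up to the stated $O(\log n + r_\Delta(\zeta)\log_\zeta n)$-bit budget per vertex.

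For the bandwidth bound I would pursue two simulation strategies and take the minimum per round. The \emph{wasteful} strategy has each vertex flood, over $O(C(\Delta))$ rounds, the current random strings of every vertex within distance $r+t^\star$ in $G$; there are $\zeta^{O(1)}$ such vertices each contributing $r_\Delta(\zeta)$ bits, giving per-round messages of size $O(r_\Delta(\zeta)\cdot\zeta)$, after which $\mathcal{E}_v$, the local-min comparison, and the resampling decision can all be evaluated locally. The \emph{refined} strategy exploits $\Alg$ directly: within each iteration we (i) run $\Alg$ with its native $m_\Delta(\zeta)$-bit messages, (ii) flood output labels over radius $r$ in $G$ with $O(\log|\LabelOut|\cdot\Delta^{O(1)})$-bit messages so every $v$ can test $\mathcal{E}_v$, and (iii) pipeline (event-ID, violation-bit) pairs across $2(r+t^\star)$ hops in $G$ so each event can find its locally-minimum violated virtual neighbor; since there are $\zeta^{O(1)}$ such pairs per virtual neighborhood, dissemination over $O(C(\Delta))$ rounds amortizes to $O(\zeta)$ bits per physical round (absorbing logarithmic factors into the $\zeta = \Delta^{O(C(\Delta))}$ exponent). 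Summing these three contributions gives the alternative bound $O(\log|\LabelOut|\cdot\Delta^{O(1)}+m_\Delta(\zeta)+\zeta)$, so each round uses the smaller of the two.

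The main obstacle I anticipate is step (iii) of the refined strategy: one must carefully pipeline the (ID, violation-bit) broadcast so that the local-min decision completes within the $O(C(\Delta))$ real rounds budgeted per virtual iteration \emph{without} violating the claimed per-round message size; this is where the $\zeta$ term in the refined bound comes from, and where naive flooding would instead produce an $r_\Delta(\zeta)\cdot\zeta$-style expression. A secondary technicality is folding a small number of low-probability failure events---collisions among random event-IDs, $\Alg$'s original failure on the imagined $n^\star$-vertex graph, and CPS17's own failure probability---into a single global $1/\poly(n)$ failure bound via a union argument.
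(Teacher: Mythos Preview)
Your proposal follows the same architecture as the paper's proof: run the LLL reduction of Theorem~\ref{thm:LLLspeedup} with $c=3$, simulate the Chung--Pettie--Su resampling algorithm on $G_{\mathcal{X}}$, and bound bandwidth via two alternative simulation strategies. The time and randomness accounting you give matches the paper's.

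There is one genuine gap, precisely at the point you flag as the main obstacle. In your refined strategy you propose to pipeline $(\text{event-ID},\text{violation-bit})$ pairs each iteration and then ``absorb logarithmic factors into $\zeta=\Delta^{O(C(\Delta))}$.'' That absorption is fine for polynomial-in-$\zeta$ counts, but the IDs themselves are $\Theta(\log n)$-bit objects and $\zeta$ is independent of $n$, so this would inject an unavoidable $\log n$ factor into the per-round message size. The paper's resolution is different: since IDs do not change across iterations, a single \emph{preprocessing phase} gathers all IDs in the $2(r+t^\star)$-neighborhood once, within the overall $O(\log_\Delta n)$ time budget using $\zeta$-bit messages; thereafter each LLL iteration only needs to propagate the one \emph{violation bit} per event, which is what gives the clean $+\,O(\zeta)$ term. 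The same preprocessing idea sharpens your wasteful strategy: the paper pre-generates and floods \emph{all} $K\cdot r_\Delta(\zeta)$ random bits per vertex up front, so that during the LLL each vertex need only broadcast a single ``I was resampled this round'' bit to its neighborhood, rather than re-flooding its current $r_\Delta(\zeta)$-bit string.

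One smaller correction: in the wasteful strategy the flooding radius must be $3(r+t^\star)$, not $r+t^\star$. Deciding whether $\mathcal{E}_v\in\mathcal{I}$ requires testing $\mathcal{E}_u$ for every $u\in N^{2(r+t^\star)}(v)$, and each such test depends on random bits in $N^{r+t^\star}(u)\subseteq N^{3(r+t^\star)}(v)$.
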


\begin{proof}
We continue to use the notation and definitions from Theorem~\ref{thm:LLLspeedup}, and fix $c=3$ in the LLL criterion~(\ref{eqn:LLL}).
Since $d = \Omega(\Delta^{O(C(\Delta))}) = \Omega(\zeta)$ and we selected
$t^\star$ w.r.t. $c=3$ (i.e., LLL criterion $pd^3 < 1$), we have $1/epd^2 = \Omega(\zeta)$.
If $\Alg'$ uses the LLL algorithm of~\cite{ChungPS17}, each vertex $v \in V(G)$ will first generate an $O(\log n)$-bit unique identifier $\ID(\mathcal{E}_v)$ (which costs $O(\log n)$ random bits)
and generate $r_\Delta(n^\star)\cdot O(\log_{1/epd^2} n) = O(r_\Delta(\zeta)\cdot\log_{\zeta} n)$
random bits throughout the computation. Thus, the total number of random bits per vertex is $O(\log n + r_\Delta(\zeta)\cdot\log_{\zeta} n)$.

In each resampling step of $\Alg'$, in order for $v$ to tell whether 
$\mathcal{E}_v \in \mathcal{I}$, it needs the following information:
(i) $\ID(\mathcal{E}_u)$ for all $u \in N^{2(r+t^\star)}(v)$, and 
(ii) whether $\mathcal{E}_u$ occurs under the current variable assignment, for all $u \in N^{2(r+t^\star)}(v)$.
We now present two methods to execute one resampling step of $\Alg'$; they both take 
$O(C(\Delta))$ time using a message size that depends on $\Delta$ but is independent of $n$.
There are $O(\log_{1/epd^2} n) = O(\log_\zeta n) = O(\frac{\log_\Delta n}{C(\Delta)})$ resampling steps,
so the total time is $O(\log_\Delta n)$, independent of the function $C$.

\paragraph{Method 1.} Before the LLL algorithm proper begins, we do the following preprocessing step.
Each vertex $v$
gathers up all IDs and random bits in its $3(t^\star+r)$-neighborhood.
This takes $O((\log n + r_\Delta(\zeta)\cdot \log_{\zeta} n) \cdot \zeta / b)$
time with $b$-bit messages (recall that $\Delta^{O(t^\star+r)} = \Delta^{O(C(\Delta))} = \zeta$).
In particular, the runtime can be made $O(\log_\Delta n)$ if we set $b = O(r_\Delta(\zeta)\cdot \zeta)$.

During the LLL algorithm, each vertex $u$ owns one random variable:
an $r_\Delta(n^\star)$-bit string $V_u$. 
In order for $v$ to tell whether $\mathcal{E}_u$
occurs for each  $u \in N^{2(r+t^\star)}(v)$ under the current variable assignment, it only needs to know how many times
each $V_u$, $u\in N^{3(r+t^\star)}(v)$, has been resampled. 
Whether the output labeling of $u \in N^{2(r+t^\star)}(v)$ is locally consistent depends on the output labeling of vertices in $N^r(u)$, which depends on the random bits and the graph topology within $N^{r+t^\star}(u) \subseteq N^{3(r+t^\star)}(v)$. Given the graph topology, IDs, and the random bits within $N^{3(r+t^\star)}(v)$, the vertex $v$ can locally simulate $\Alg$ and decides whether  $\mathcal{E}_v \in \mathcal{I}$.

Thus, in each iteration of the LLL algorithm, each vertex $v$ simply needs to alert its
$3(r+t^\star)$-neighborhood whether $V_v$ is resampled or not.  This can be accomplished
in $O(r+t^\star) = O(C(\Delta))$ time with $\zeta$-bit messages.

\paragraph{Method 2.} In the second method, vertices keep their random bits private. Similar to the first method, we do a preprocessing step to let each vertex gathers up all IDs in its $2(t^\star+r)$-neighborhood.
This can be done in $O(\log_\Delta n)$ time using $\zeta$-bit messages.

During the LLL algorithm, in order to tell which subset
of bad events $\{\mathcal{E}_v\}_{v \in V(G)}$ occur under the current variable assignment,
all vertices simulate $\Alg$ for $t^\star$ rounds,
sending $m_\Delta(n^\star)$-bit messages. After the simulation,
for a vertex $v$ to tell whether $\mathcal{E}_v$ occurs, it needs to gather the output labeling of the vertices in $N^{r}(v)$.
This can be done in $r = O(1)$ rounds, sending $\log (|\LabelOut|) \cdot \Delta^{O(1)}$-bit messages.\footnote{An output label can be encoded as a $\log (|\LabelOut|)$-bit string.  We do not assume that $\Delta$ is constant so
$|\LabelOut|$, which may depend on $\Delta$ but not directly on $n$, is also not constant.  E.g., consider
the $O(\Delta)$ vertex coloring problem.}
Next, for a vertex $v$ to tell whether $\mathcal{E}_v \in \mathcal{I}$, it needs to know  whether $\mathcal{E}_u$ occurs for all $u \in N^{2(r+t^\star)}(v)$. This information can be gathered in $O(C(\Delta))$ time using messages of size $O(\zeta)$.
To summarize, the required message size is $O(\log (|\LabelOut|)\cdot \Delta^{O(1)} + m_\Delta(\zeta) + \zeta)$.
\end{proof}

An interesting corollary of Theorem~\ref{thm:LLLmessagesize}
is that when $\Delta=O(1)$,
randomized algorithms with unbounded length messages can be simulated with 1-bit messages.
\begin{corollary}
Let $\mathcal{P}$ be any LCL problem.
When $\Delta=O(1)$, any $o(\log n)$ algorithm solving $\mathcal{P}$ w.h.p.
using \emph{unbounded length messages}
can be made to run in $O(\log n)$ time with 1-bit messages.
\end{corollary}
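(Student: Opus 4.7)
The plan is to invoke Theorem~\ref{thm:LLLmessagesize} and observe that under $\Delta = O(1)$ every parameter appearing in its message-size guarantee collapses to a constant, after which a standard serialization converts $O(1)$-bit messages into 1-bit messages.

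First I would put the hypothesis into the form Theorem~\ref{thm:LLLmessagesize} requires. Since $T_\Delta(n) = o(\log n)$, for any constant $\epsilon > 0$ there exists a constant $C(\Delta)$ (namely an upper bound on $T_\Delta(n)$ over the finitely many $n$ for which $T_\Delta(n) > \epsilon \log_\Delta n$) with $T_\Delta(n) \le C(\Delta) + \epsilon \log_\Delta n$ for all $n$. Choosing $\epsilon$ to be the sufficiently small constant demanded by Theorem~\ref{thm:LLLmessagesize}, the algorithm $\Alg$ fits its hypothesis.

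Next, I would instantiate the theorem with $\Delta = O(1)$. Then $\zeta = \Delta^{O(C(\Delta))} = O(1)$; the output alphabet $|\LabelOut|$ is constant because $\mathcal{P}$ is an LCL; and $r_\Delta(\zeta)$ and $m_\Delta(\zeta)$ are finite functions evaluated at the two constants $\Delta$ and $\zeta$, hence themselves constants. Substituting into the message-size bound $O(\min\{\log(|\LabelOut|)\cdot \Delta^{O(1)} + m_\Delta(\zeta) + \zeta,\; r_\Delta(\zeta)\cdot \zeta\})$ yields $O(1)$-bit messages, and the running time is $O(\log_\Delta n) = O(\log n)$.

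Finally, I would convert $O(1)$-bit messages into 1-bit messages by splitting each message across $O(1)$ consecutive rounds on the same edge. This deterministic, edge-local simulation preserves the w.h.p.\ correctness guarantee of the algorithm produced by Theorem~\ref{thm:LLLmessagesize} and incurs only a constant multiplicative overhead, so the total running time remains $O(\log n)$. I do not anticipate any serious obstacle; the corollary is essentially a bookkeeping exercise combining Theorem~\ref{thm:LLLmessagesize} with an elementary serialization. The only mildly delicate point is confirming that $m_\Delta(\cdot)$ is well-defined at the small constant $\zeta$ even though the original algorithm is advertised as using ``unbounded'' messages, but this is immediate: ``unbounded'' refers to the absence of an a priori bound across all $n$, whereas for each individual $n$ (and each fixed $\Delta$) the maximum message size is finite.
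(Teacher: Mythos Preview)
Your proposal is correct and follows exactly the route the paper intends: the corollary is stated immediately after Theorem~\ref{thm:LLLmessagesize} as a direct consequence, and your argument simply spells out the bookkeeping (constant $\zeta$, constant $|\LabelOut|$, constant $m_\Delta(\zeta)$ and $r_\Delta(\zeta)$, then bit-serialization) that the paper leaves implicit.
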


\section{Conclusion}\label{sect:conclusion}

We now have a very good understanding of the $\LOCAL$ complexity landscape for cycles, tori,
bounded degree trees, and to a lesser extent, general bounded degree graphs.  
See Figure~\ref{fig:complexity}.  However, there are some very critical gaps in our understanding.

Our randomized speedup theorem of Section~\ref{sect:LLL} depends on the complexity
of a relatively weak version of the \Lovasz{} local lemma.  Since the LLL is essentially a ``complete''
problem for sublogarithmic $\RandLOCAL$ algorithms, understanding the distributed 
complexity of the LLL is a significant open problem.

\begin{conjecture}\label{conj:LLL}
There exists a sufficiently large constant $c$ such that the distributed LLL problem
can be solved in $O(\log\log n)$ time on bounded degree graphs, under the symmetric LLL criterion
$pd^c < 1$. 
\end{conjecture}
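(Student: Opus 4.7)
The plan is a two-phase shattering strategy. Phase~1: in a single round, sample every variable in $\mathcal{V}$ independently from its distribution, and let $\mathcal{B} \subseteq \mathcal{X}$ denote the events that occur under this assignment. Phase~2: run a fast distributed LLL algorithm (e.g.\ Chung--Pettie--Su~\cite{ChungPS17}) on each component of $\mathcal{B}$, suitably expanded to a ``safe closure'', in parallel across components. The target $O(\log \log n)$ runtime follows from two claims: (a) Phase~1 shatters $\mathcal{B}$ into components of size $n' = O(\log n)$ with probability at least $1 - 1/n$, and (b) CPS solves each such size-$n'$ sub-instance in $O(\log n') = O(\log\log n)$ rounds. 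Phase~1 costs only $O(1)$ rounds, so the bound is dominated by Phase~2.

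For the shattering, I would use the classical 2-tree argument of Beck. A bad component of size $\geq k$ in $G_{\mathcal{X}}$ must contain a ``2-tree'' $T$ of size $\Omega(k/d)$---a set of events that is independent in $G_{\mathcal{X}}$ but connected in $G_{\mathcal{X}}^2$. Because the variable sets $\{\vbl(A) : A \in T\}$ are pairwise disjoint, the events in $T$ are mutually independent, so the probability that they are all bad is at most $p^{|T|}$. A standard Galton--Watson encoding bounds the number of rooted 2-trees of size $s$ in $G_{\mathcal{X}}$ by $(ed^2)^s$, so a union bound over the $n$ possible roots yields a shattering-failure probability at most $n \cdot (ed^2 p)^{\Omega(k/d)}$. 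Under the hypothesis $pd^c < 1$ with $c \geq 3$ we have $ed^2 p < 1/2$ (for any $d \ge 2$), so $k = \Theta(d \log n)$ suffices to drive this to $\leq 1/n$; since $d = O(1)$, this is $k = O(\log n)$ as required.

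For Phase~2, define the \emph{safe closure} $C^+$ of a component $C$ to be its 2-neighborhood in $G_{\mathcal{X}}$; this absorbs every event whose variable set intersects $\vbl(C)$. Distinct components of $\mathcal{B}$ are at distance $\geq 2$ in $G_{\mathcal{X}}$, so their safe closures are pairwise variable-disjoint and the Phase~2 subroutines can be simulated in parallel in the underlying network with $O(1)$ overhead. On an instance of size $n' = O(\log n)$, the CPS algorithm runs in $O(\log_{1/epd^2} n') = O(\log \log n)$ rounds under the criterion $epd^2 < 1$, which follows from $pd^c < 1$ with $c \geq 3$ and $d = O(1)$. The \emph{main obstacle} is ensuring that Phase~2 really yields a globally consistent assignment: freezing the variables outside $\vbl(C^+)$ at their Phase-1 values perturbs the marginal probabilities of the events in $C^+$, and the worst-case conditional probability $p_{\mathrm{cond}}$ may exceed $p$. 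Handling this requires showing, via a lopsided/conditional LLL analysis, that after conditioning the surviving criterion $e \, p_{\mathrm{cond}} \, d^2 < 1$ still holds; the blow-up factor $p_{\mathrm{cond}}/p$ is bounded by a constant depending only on $d$ and on the minimum atom probability of any variable's distribution, which can be absorbed by choosing $c$ sufficiently large. A secondary (technical) obstacle is that CPS fails with probability $1/\poly(n')$ rather than $1/\poly(n)$; running CPS for a constant factor more rounds amplifies the success probability to $1 - 1/\poly(n)$ without affecting the $O(\log \log n)$ asymptotic.
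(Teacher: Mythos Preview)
This statement is presented in the paper as an \emph{open conjecture} (in the Conclusion), not as a theorem; the paper offers no proof. So there is nothing to compare your attempt to. Evaluated on its own, your shattering outline is the natural first thing to try, but it has genuine gaps---which is precisely why the statement was posed as a conjecture rather than a theorem.

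\textbf{The Phase~2 conditional-LLL step is the real problem, and you have not resolved it.} You correctly flag it as the main obstacle, but your proposed fix does not work. You assert that $p_{\mathrm{cond}}/p$ is bounded by a constant depending on $d$ and on ``the minimum atom probability of any variable's distribution,'' and that this can be absorbed into $c$. But the distributed LLL problem, as stated, places no lower bound on atom probabilities of the variables; an instance may have variables with arbitrarily small atoms, so your blow-up constant is not bounded by any function of $d$ alone and cannot be absorbed into a fixed $c$. More fundamentally, the sub-instance you hand to CPS is not a bona fide LLL instance with the same $p$: once you freeze variables outside $\vbl(C^+)$, the events on the boundary of $C^+$ become conditioned events whose probabilities can be arbitrarily close to $1$. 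The known resolutions of this conjecture (e.g., Fischer--Ghaffari) do not run a randomized LLL on the residual pieces; they use a \emph{deterministic} post-shattering routine (via network decomposition of the $O(\log n)$-size components), which sidesteps both the conditional-probability blow-up and the failure-probability issue below.

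\textbf{Your ``constant factor more rounds'' amplification is also wrong.} On a component of size $n' = O(\log n)$, CPS after $T$ rounds leaves a surviving bad event with probability at most roughly $n'\cdot(epd^2)^{T}$. To drive the \emph{global} failure probability (union-bounded over up to $n$ components) below $1/n$, you need $T = \Theta\!\bigl(\log n / \log(1/epd^2)\bigr)$. With $d=O(1)$ the base $1/epd^2$ is just a constant, so $T = \Theta(\log n)$, not $O(\log\log n)$. A constant-factor increase in rounds does not suffice. Finally, a smaller point: components of $\mathcal{B}$ are at distance $\geq 2$ in $G_{\mathcal{X}}$, but their $2$-neighborhoods can (and generally do) overlap, so your ``safe closures are pairwise variable-disjoint'' claim is false as stated.
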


The new polynomial complexities introduced in Section~\ref{sec.poly} are of the form $\Theta(n^{1/k})$, 
$k\in\mathbb{Z}^+$.  Is this set of polynomial complexities exhaustive?  Is it possible to engineer problems with
complexity $\Theta(n^q)$ for any given rational $q$?  We think the answer is no, and resolving
Conjecture~\ref{conj:polygap} would be the first step.

\begin{conjecture}\label{conj:polygap}
Any $o(n)$-time $\DetLOCAL$ algorithm can be automatically sped up to run in $O(\sqrt{n})$ time.
In general, there is an $\omega(n^{1/(k+1)})$---$o(n^{1/k})$ gap in the $\DetLOCAL$ complexity hierarchy.
\end{conjecture}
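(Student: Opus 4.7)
The plan is to extend the framework of Section~\ref{sec.tree-gap} from the logarithmic regime to the polynomial regime, using the hierarchical decomposition suggested by the Hierarchical $2\f{1}{2}$-Coloring lower bound. The starting intuition is that the extremal instance $H_k$ witnessing the $\Omega(n^{1/k})$ lower bound in Theorem~\ref{thm:LCL-lb} is built from $k$ nested levels $V_1, \ldots, V_k$ of degree-$\le 2$ backbones of length $\Theta(n^{1/k})$, and any gap of the conjectured form should follow because an $o(n^{1/k})$-time algorithm on such an instance cannot see across one full $V_i$-backbone, let alone across all $k$ of them. Turning this into an automatic-speedup statement requires an imaginary-graph hierarchy of depth $k+1$ rather than $O(\log n)$.

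First I would design an analogue of the Rake-and-Compress decomposition of Section~\ref{sec.decomp} that produces, in $O(n^{1/(k+1)})$ rounds, a $(k{+}1)$-level decomposition $V(G) = V_1 \cup \cdots \cup V_{k+1}$ in which every connected component induced by each $V_i$ has diameter at most $c \cdot n^{1/(k+1)}$ for a constant $c$ depending on the LCL. For bounded degree trees this amounts to interleaving rake steps with compress steps that artificially fragment long degree-$2$ paths every $\Theta(n^{1/(k+1)})$ vertices, much as the postprocessing step in Section~\ref{sec.decomp} already does for a different length scale. Next I would extend the relation $\simm$, the $\class/\type$ machinery, and the pumping lemma (Lemma~\ref{thm:pump}) so they apply to \emph{cyclic} sequences of unipolar subtrees as well as bipolar ones, since outside of trees one encounters long cycles of degree-$2$ vertices whose contraction behavior must also be controlled by a finite automaton with a constant period $\Lpump$.

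Given a $T(n)$-time algorithm $\mathcal{A}$ with $T(n) = o(n^{1/k})$, I would then mimic the construction of $\RR_1, \ldots, \RR_{L}$ in Lemma~\ref{lem:GG}, taking $L = k+1$ and choosing the $\extend$ target length at level $i$ to be $w_i = \Theta(n^{i/(k+1)})$. The pumped imaginary graph at level $i$ would have total size $n^{\Theta(i/(k+1))}$, so that by the time we reach level $k$ the imaginary graph has superlinear size and $\mathcal{A}$'s $T(n)$-radius neighborhood lies strictly inside a single pumped segment. A feasible labeling function at each level can then be extracted by the same ``most probable output'' argument as Lemma~\ref{lem:func-exist}, and a deterministic $O(n^{1/(k+1)})$-round procedure analogous to Lemma~\ref{lem:lognalg} propagates the precommitted labels down the hierarchy, using $\labeling$ and $\cut$ to decouple the computation across bipolar segments.

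The hardest step will almost certainly be the pumping machinery in the presence of cycles and of higher-connectivity subgraphs. A pumping lemma for \emph{circular} words over the class alphabet is needed so that long degree-$2$ cycles can be expanded in a type-preserving way; this is technically more delicate than the linear pumping lemma because any ``cut'' in the cycle must eventually be reconciled. Even worse, the decomposition on general bounded-degree graphs may leave a residual high-girth or expander-like core whose diameter is intrinsically $\Theta(\log n) \cdot \polylog(n)$ rather than $O(n^{1/(k+1)})$, and for which no pumping operation preserves type. I would therefore expect the first version of the theorem to be proved for bounded-degree \emph{trees}, and possibly for bounded treewidth or bounded arboricity graphs, with the fully general statement requiring a separate argument to dispose of the expander cores --- for instance, by showing that any LCL that is hard on such cores must already require $\Omega(n^{1/k})$ time, and is therefore outside the $o(n^{1/k})$ hypothesis of the speedup.
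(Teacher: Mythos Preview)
The statement is Conjecture~\ref{conj:polygap}, listed in Section~\ref{sect:conclusion} as an \emph{open problem}; the paper contains no proof of it, so there is nothing to compare your proposal against.

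What you have written is a research plan, not a proof, and you concede as much in your final paragraph. The idea of replacing the $O(\log n)$-level decomposition of Section~\ref{sec.decomp} by a $(k{+}1)$-level one with $O(n^{1/(k+1)})$-diameter pieces is a natural first thought for the tree case, but the feasibility argument of Lemma~\ref{lem:func-exist} does not transfer as cleanly as you suggest. That argument hinges on the inequality $t^\star \le w/4$ where $t^\star$ is $\mathcal{A}$'s running time on an imaginary tree of size $N = w^{O(1)}$ in $\Tset_{k^\star}$; with $\mathcal{A}$ running in $N^{o(1)}$ time this reads $w^{o(1)} \le w/4$ and is satisfiable for large $w$. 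In your setting the imaginary tree built from $k{+}1$ nested levels of pumping has size $N$ that is a nontrivial polynomial in your level-dependent $w_i$'s, and you need an $o(N^{1/k})$-radius algorithm to fit inside a single pumped segment at every level simultaneously. Your one-line size estimate ``$n^{\Theta(i/(k+1))}$'' does not establish this, and getting the exponents to close is exactly the delicate part you have skipped.

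For general graphs you correctly identify the real obstruction---a residual expander-like core on which neither rake nor compress makes progress and for which the $\simm$/$\type$/pumping machinery of Sections~\ref{sect:equiv}--\ref{sect:aux} has no analogue---and then set it aside by retreating to trees or by hoping a separate argument will dispose of it. That is an honest description of where the difficulty lies, and it matches why the paper states this as a conjecture rather than a theorem, but it is not a proof.
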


One advantage of working with graph classes with bounded degree $\Delta$ is that arbitrary LCLs
can be encoded in $O(1)$ space by enumerating all acceptable configurations.  To study LCLs on
unbounded-degree graphs it is probably necessary to work with logical representations of LCLs,
and here the expressive power of the logic may introduce a new measure of problem complexity.
For example, the sentence ``$I$ is an MIS'' can be expressed using the predicate $[v\in I]$ and 
quantification over all vertices $v\in V(G)$, and all vertices $u\in N(v)$.

\bibliographystyle{plain}
\bibliography{../../references}

\newpage
\appendix

\section{Speedup Implications of Naor \& Stockmeyer}\label{sect:NaorStockmeyer}

Let $\mathcal{A}$ be any $T(n)$-round $\DetLOCAL$ algorithm.
Let $\eta$ and $\eta'$ be any two {\em order-indistinguishable} assignments of distinct IDs to
$N^{T(n)}(v)$,
i.e., for $u,w \in N^{T(n)}(v)$, $\eta(u) > \eta(w)$ if and only if $\eta'(u) > \eta'(w)$.
If, for every possible input graph fragment induced by $N^{T(n)}(v)$,
the output label of $v$ is identical under every pair of order-indistinguishable $\eta,\eta'$,
then $\mathcal{A}$ is \emph{order-invariant}.

Suppose that there exists a number $n'=O(1)$ such that $\Delta^{T(n') + r} \leq n'$.
If $\mathcal{A}$ is order-invariant then it can be turned into an $O(1)$-round
$\DetLOCAL$ algorithm $\mathcal{A}'$, since we can pretend that the total number
of vertices is $n'$ instead of $n$.

Naor and Stockmeyer~\cite{NaorS95} proved that any $\DetLOCAL$ algorithm that takes $\tau=O(1)$ rounds on a bounded degree graph can be turned into an order-invariant $\tau$-round $\DetLOCAL$ algorithm. A more careful analysis shows that the proof still works when $\tau$ is a slowly growing function of $n$.

\paragraph{Requirement for Automatic Speedup.}  The muticolor hypergraph Ramsey number $R(p,m,c)$ is the minimum number such that the following holds. Let $H$ be a complete $p$-uniform hypergraph of at least $R(p,m,c)$ vertices. Then any $c$-edge-coloring of $H$ contains a monochromatic clique of size $m$.

Given the number $\tau\ge 2$, the three parameters $p$, $m$, and $c$ are selected as follows. (See the proof of~\cite[Lemma 3.2]{NaorS95} for more details.)
\begin{itemize}
\item The number $p$ is the maximum number of vertices in $N^\tau(v)$, over all vertices $v \in V(G)$ and all graphs $G$ under consideration. For rings, $p = 2\tau+1$. For tori, $p \leq 2(\tau+1)^2$. For trees or general graphs,
$p \leq \Delta^{\tau}$.
\item The number $m$ is the maximum number of vertices in $N^{\tau+r}(v)$, over all vertices $v \in V(G)$ and all graphs $G$ under consideration. E.g., for rings, $p = 2\tau+2r+1$ and for general graphs, $p \leq \Delta^{\tau+r}$.
\item The number $z$ counts the distinguishable radius-$\tau$ centered subgraphs, disregarding IDs.  For example, for LCLs on the ring without input labels or port numbering, $z=1$, whereas with input labels and port numbering
it is $(2|\LabelIn|)^{2\tau+1}$ since each vertex has one of $|\LabelIn|$ input labels and 2 port numberings.
In general $z$ is less than $2^{{\Delta^\tau}\choose 2} \cdot (\Delta! |\LabelIn|)^{p}$.
\item The number $c$ is defined as $|\LabelOut|^{p! z}$.
Intuitively, we can use a number in $[c]$ to encode a function that maps a radius-$\tau$ centered subgraph (that is equipped with unique vertex IDs from a set $S$ with cardinality $p$) to an output label in $\LabelOut$.
\end{itemize}

Recall that vertices in $\DetLOCAL$ have $O(\log n)$-bit IDs, i.e.,
they can be viewed as elements of $[n^k]$ for some $k=O(1)$.
Naor and Stockmeyer's proof implies that, as long as $n^k \geq R(p,m,c)$, any $\DetLOCAL$ $\tau$-round algorithm on a bounded degree graph can be turned into an order-invariant $\tau$-round $\DetLOCAL$ algorithm,
which then implies an $O(1)$-round $\DetLOCAL$ algorithm.

\paragraph{The Ramsey number $R(p,m,c)$.} According to the proof of~\cite[\S 1, Theorem 2]{GrahamRS90}, we have:
\begin{align*}
\mbox{For } p = 1, && R(p,m,c) &= c(m-1) + 1 &&\\
\mbox{For } p > 1, && R(p,m,c) &\leq 2c^x &&\text{ where } x = \sum_{i=p-1}^{R(p-1,m,c)-1} {i+1 \choose p-1} < R(p-1,m,c)^{p}
\end{align*}
Therefore, $\log^\ast(R(p,m,c)) \leq p + \log^\ast m + \log^\ast c + O(1)$.

\paragraph{Automatic Speedup Theorems.}
Observe that in all scenarios, if the running time $\tau=\tau(n) = \omega(1)$,
we have $\log^\ast m + \log^\ast c = o(p)$.
Therefore, having $p \leq \epsilon \log^\ast n$ for some small enough constant $\epsilon$ suffices to meet the condition  $n^k \geq R(p,m,c)$.
We conclude that the complexity of any LCL problem (with or without input labels and port numbering)
in the $\LOCAL$ model never falls in the following gaps:
\begin{align*}
&\text{$\omega(1)$---$o(\log^\ast n)$} && \mbox{ for } \text{$n$-rings}.\\
&\text{$\omega(1)$---$o(\sqrt{\log^\ast n})$} && \mbox{ for } \text{$(\sqrt{n}\times \sqrt{n})$-tori}.\\
&\text{$\omega(1)$---$o(\log(\log^\ast n))$}  && \mbox{ for bounded degree trees or general graphs}.
\end{align*}
Due to the ``Stepping-Up Lemma'' (see~\cite[\S 4, Lemma 17]{GrahamRS90}), we have $\log^\ast(R(p,m,2)) = \Omega(p)$. Thus, Naor and Stockmeyer's approach \emph{alone} cannot give
an $\omega(1)$---$o(\log^\ast n)$ gap for general graphs.

However, for a certain class of LCL problems on
$(\sqrt{n}\times \sqrt{n})$-tori, the gap can be widened to $\omega(1)$---$o(\log^\ast n)$~\cite[p. 2]{Brandt+17}.  The following proof is due to Jukka Suomela (personal communication).

\begin{theorem}[J. Suomela]\label{thm:torigap}
Let $\mathcal{P}$ be any LCL problem on $(\sqrt{n}\times \sqrt{n})$-tori that does not refer to input labels or
port-numbering. The $\DetLOCAL$
and $\RandLOCAL$ complexity of $\mathcal{P}$ is either  $O(1)$ or $\Omega(\log^\ast n)$.
\end{theorem}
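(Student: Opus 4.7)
The plan is to extend the Naor--Stockmeyer Ramsey argument reviewed earlier in this appendix, applying Ramsey in a constant number of stages (rather than once), each with parameter $p=O(\tau)$ instead of the $p=\Theta(\tau^2)$ needed by a single-shot argument on tori. Quantitatively, with $p=O(\tau)$ one has $\log^\ast R(p,m,c) = O(\tau)+O(1)$, which is consistent with the hypothesis $\tau=o(\log^\ast n)$; the one-shot argument on the $\Theta(\tau^2)$-vertex ball only achieves the weaker $\tau=o(\sqrt{\log^\ast n})$ bound derived in the body of the appendix.

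First I would start with a $\tau$-round algorithm $\mathcal{A}$ for $\mathcal{P}$ having $\tau=o(\log^\ast n)$; the randomized case reduces to the deterministic one by standard ID-based derandomization on bounded degree graphs. The ball $B(v,\tau)$ in the torus has $\Theta(\tau^2)$ vertices and carries a dihedral $D_4$ action at $v$, and because $\mathcal{P}$ has neither input labels nor port numbering, $\mathcal{A}$'s output function at $v$ must be $D_4$-invariant in its IDs. I would then apply Ramsey in stages along the product structure $C_{\sqrt n}\times C_{\sqrt n}$: first extract a subset $A\subseteq[n^k]$ of IDs on which $\mathcal{A}$ becomes order-invariant along horizontal rows, using Ramsey with $p=O(\tau)$; then restrict the ID universe to $A$ and extract $B\subseteq A$ that in addition yields order-invariance along columns. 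After this bounded number of stages $\mathcal{A}$ should be fully order-invariant on IDs drawn from $B$, and since $B$ is still large enough to accommodate $n'=O(1)$ distinct IDs, we can pretend the torus has constant size and extract an $O(1)$-round algorithm from $\mathcal{A}$ in the usual way.

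The main obstacle is justifying that the staged argument actually composes to full order-invariance on the ball. It is not automatic that row-wise plus column-wise order-invariance implies ball-wise order-invariance, since off-axis IDs contribute information that neither single-axis Ramsey application explicitly records. To close this I would either (i) refine each stage's Ramsey coloring so that it records $\mathcal{A}$'s output on every placement of the $O(\tau)$ axis IDs \emph{jointly} with every possible placement of the off-axis IDs drawn from the current monochromatic set, so that monochromaticity of the stage genuinely captures the joint behavior; or (ii) identify an explicit $O(\tau)$-sized canonical signature of the ball (up to $D_4$) that provably determines $\mathcal{A}$'s output. This is exactly the step where the assumption of no input labels and no port numbering is essential: once the algorithm can distinguish the four directions or react to per-vertex inputs, the two axes cease to be interchangeable and no such low-dimensional summary of the $\Theta(\tau^2)$-vertex ball can exist, which explains why the analogous statement for general LCLs on tori requires different techniques.
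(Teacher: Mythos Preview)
Your staged-Ramsey plan is not the route the paper takes, and the obstacle you flag is real and, as stated, unresolved. In your option~(i), to make the stage-1 coloring ``record $\mathcal{A}$'s output on every placement of the $O(\tau)$ axis IDs jointly with every possible placement of the off-axis IDs,'' you must specify where the off-axis IDs come from. At stage~1 there is no monochromatic set yet, so the off-axis IDs range over the full universe $[n^k]$; encoding the output for all $\Theta(\tau^2)$-tuples from $[n^k]$ makes $c$ astronomically large and kills the bound on $R(p,m,c)$. Option~(ii) asks for an $O(\tau)$-sized summary of the $\Theta(\tau^2)$-vertex ball that determines the output, but for arbitrary ID assignments no such summary exists: the algorithm may legitimately depend on all $\Theta(\tau^2)$ IDs, and $D_4$-invariance only saves a factor of~8.

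The paper sidesteps this entirely with a different idea: rather than analyzing $\mathcal{A}$ on arbitrary IDs, it \emph{restricts} to a special family of ID assignments with product structure. Give the vertex at coordinates $(\alpha,\beta)$ the ID $\phi_x(\alpha)\cdot n^{k/2} + \phi_y(\beta)$ for increasing injections $\phi_x,\phi_y$. Since $\mathcal{A}$ must be correct on \emph{every} ID assignment, it is correct on these; and now the entire ID content of $N^\tau(v)$ is determined by only $4\tau+2$ numbers (the $2\tau+1$ row values and $2\tau+1$ column values). A single Ramsey application with $p=4\tau+2$ and $c=|\LabelOut|$ then yields a monochromatic set large enough that every vertex in $N^r(v)$ receives the \emph{same} output label~$\sigma$. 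The conclusion is stronger than order-invariance: the all-$\sigma$ labeling is legal, so $\mathcal{P}$ is trivial. This is also where the ``no input labels, no port numbering'' hypothesis actually bites---with either, the output at $v$ would depend on more than the $4\tau+2$ coordinate values, and $c$ would blow up.
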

\begin{proof}
Given a $(\sqrt{n}\times \sqrt{n})$-torus $G$, we associate each vertex $v \in V(G)$  with a coordinate $(\alpha,\beta)$, where $\alpha, \beta \in \{0, \ldots, \sqrt{n}-1\}$. We consider the following special way to generate unique $k\log n$-bit IDs. Let $\phi_x$ and $\phi_y$ be two functions mapping integers in $\{0, \ldots, \sqrt{n}-1\}$ to integers in $\{0, \ldots, n^{k/2} - 1\}$.
We additionally require that $\phi_x(0) < \ldots < \phi_x(\sqrt{n}-1) < \phi_y(0) < \ldots < \phi_y(\sqrt{n}-1)$.
If $v$ is at position $(\alpha,\beta)$, it has ID $\phi_x(\alpha) \cdot n^{k/2}  + \phi_y(\beta)$.
Notice that the IDs of all vertices in $N^\tau(v)$ can be deduced from just $4\tau+2$ numbers:
$\phi_x(i)$, $i \in [\alpha-\tau,\alpha+\tau]$ and $\phi_y(j)$, $j \in [\beta-\tau,\beta+\tau]$.

Suppose that the complexity of $\mathcal{P}$ is $o(\log^\ast n)$.
Let $\mathcal{A}$ be any $\tau$-round $\DetLOCAL$ algorithm for solving $\mathcal{P}$,
where $\tau = o(\log^\ast n)$.
Notice that the algorithm $\mathcal{A}$ works correctly even when we restrict ourselves to the above special ID assignment.
Our goal is to show that $\mathcal{P}$ is actually {\em trivial} in the sense that there exists an element $\sigma \in \LabelOut$ such that labeling all vertices by $\sigma$ gives a legal labeling, assuming w.l.o.g.~that $\sqrt{n} > 2r+1$. Thus, $\mathcal{P}$ can be solved in $O(1)$ rounds.

In subsequent discussion, we let  $v$ be any vertex whose position is $(\alpha,\beta)$, where $\tau+r \leq \alpha \leq (\sqrt{n}-1) - (\tau+r)$ and $\tau+r \leq \beta \leq (\sqrt{n}-1) - (\tau + r)$. That is, $v$ is sufficiently far way from the places where the coordinates wrap around.

Given  $\mathcal{A}$, we construct a function $f$ as follows.
Let $S=(s_1, \ldots, s_{4\tau+2})$ be a vector of $4\tau+2$ numbers in $\{0, \ldots, n^{k/2}-1\}$ such that $s_k < s_{k+1}$ for each $k \in [4\tau +2]$.
Then $f(S) \in \LabelOut$ is defined as the output labeling of $v$  resulting from executing $\mathcal{A}$ with the following ID assignment of vertices in $N^{\tau}(v)$.
We set $\phi_x(\alpha-\tau-1 + i) = s_i$ for each $i \in [2\tau + 1]$ and set $\phi_y(\beta-\tau-1 + j) = s_{j + 2\tau + 1}$ for each  $j \in [2\tau + 1]$ 
Recall that $\mathcal{P}$ does not use port-numbering and input labeling,
so the output labeling of $v$ depends only on IDs of vertices in $N^{\tau}(v)$.

We set $p = 4\tau+2$, $m = 4\tau+4r+2$, and $c = |\LabelOut|$. 
Notice that the calculation of the parameter $c$ here is different from the original proof of Naor and Stockmeyer.
Since we already force that $\phi_x(0) < \ldots < \phi_x(\sqrt{n}-1) < \phi_y(0) < \ldots < \phi_y(\sqrt{n}-1)$, we do not need to consider all $p!$ permutations of the set $S$.

We have $R(p,m,c) \ll n^{k/2}$ (since $p = o(\log^\ast n)$).
Thus, there exists a set $S'$ of $m$ distinct numbers in $\{0, \ldots, n^{k/2}\}$ such that the following is true.  We label these $m$ numbers $\phi_x(i)$, $i \in [\alpha-\tau-r,\alpha+\tau+r]$, and $\phi_y(j)$, $j \in [\beta-\tau-r,\beta+\tau+r]$ by the set $S'$ such that $\phi_x(\alpha-\tau-r) < \ldots < \phi_x(\alpha+\tau+r) < \phi_y(\beta-\tau-r) < \ldots < \phi_y(\beta+\tau+r)$. Then the output labels of all vertices in $N^r(v)$ assigned by $\mathcal{A}$ are identical.

Therefore, there exists an element $\sigma \in \LabelOut$ such that labeling all vertices by $\sigma$ yields a legal labeling of $G$.
Thus, $\mathcal{P}$ can be solved in $O(1)$ rounds.
\end{proof}


\paragraph{Discussion.}
It still remains an outstanding open problem whether the gap for other cases can also be widened to $\omega(1)$---$o(\log^\ast n)$.

The proof of Theorem~\ref{thm:torigap} extends easily to $d$-dimensional tori, but
does not extend to bounded degree trees, since
there is a \underline{non-trivial} problem that can be solved in $O(1)$ rounds on a subset of bounded degree trees.
Naor and Stockmeyer~\cite{NaorS95} showed that on any graph class in which all vertex degrees are odd,
\emph{weak $2^{O(\Delta\log\Delta)}$-coloring} can be solved in $2$ rounds and \emph{weak $2$-coloring}
can be solved in $O(\log^\ast \Delta)$ rounds in $\DetLOCAL$.\footnote{A weak coloring is one in which every vertex is colored differently than at least one neighbor.}
This problem is \emph{non-trivial} in the sense that coloring all vertices by the same color is not a legal solution.
Since the $d$-dimensional torus is $\Delta$-regular, $\Delta=2d$,
we conclude that the complexity of weak $O(1)$-coloring on $\Delta$-regular graphs is
$\Theta(\log^\ast n)$ for every fixed even number $\Delta\ge 2$.

Theorem~\ref{thm:torigap} also does not extend to LCL problems that use input labels or port-numbering. If either input labels or port-numbering are allowed, then one can construct a non-trivial LCL problem that can be solved in $O(1)$ rounds even on cycle graphs. An {\em orientation} of a vertex $v \in V(G)$ is defined as a port-number in $[\deg(v)]$, indicating a vertex in $N(v)$ that $v$ is pointed towards.
An {\em $\ell$-orientation}
of a cycle $G$ is an orientation of all vertices in $G$ meeting the following conditions.
If $|V(G)| \leq \ell$, then all vertices in $G$ are oriented to the same direction, i.e., no two vertices point toward each other.
If $|V(G)| > \ell$, then each vertex $v \in V(G)$ belongs to a path $P$ such that
(i) all vertices in $P$ are oriented to the same direction (no two point to each other), and
(ii) the number of vertices in $P$ is at least $\ell$.
Notice that $\ell$-orientation, $\ell=O(1)$, is an LCL that refers to port-numbering.
We show that in $O(1)$ rounds we can compute an $\ell$-orientation of $G$ for any constant $\ell$.\footnote{Even though orienting all vertices in the cycle to the same direction gives a legal labeling, $\ell$-orientation is still a non-trivial LCL problem. Consider a subpath $(v_1, v_2, v_3, v_4)$ in the cycle. Suppose that the port-number of $(v_2,v_3)$ stored at $v_2$ is 1, but the port-number of $(v_3,v_4)$ stored at $v_3$ is 2.
Then we need to label $v_2$ and $v_3$ differently (1 and 2, respectively) in order to orient them in the same direction `$\rightarrow$'.}

\begin{theorem}\label{lem:ori}
Let $G$ be a cycle graph and $\ell$ be a constant.
There is a $\DetLOCAL$ algorithm that computes an $\ell$-orientation of $G$ in $O(1)$ rounds.
\end{theorem}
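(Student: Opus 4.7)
The plan is to design an $O(\ell)$-round deterministic algorithm, which is $O(1)$ since $\ell$ is a constant. In the first $\ell+1$ rounds every vertex $v$ gathers the IDs and port labels of all vertices in $N^{\ell+1}(v)$; this is enough information for $v$ to simulate the decisions each of its neighbors will make. The algorithm then splits into two cases depending on whether the gathered view has wrapped around the cycle.

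In the \emph{small-cycle case} ($|V(G)|\le 2\ell+2$), each vertex sees the whole cycle and can compute a canonical consistent circulation, e.g., the one that starts at the globally minimum-$\ID$ vertex $w$ and proceeds toward $w$'s smaller-$\ID$ neighbor first. All vertices agree on this circulation; it immediately satisfies the condition for $|V(G)|\le \ell$, and for $\ell < |V(G)|\le 2\ell+2$ it places every vertex in a consistent subpath of length at least $\ell$.

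In the \emph{large-cycle case} ($|V(G)|>2\ell+2$), call a vertex $v$ a \emph{landmark} if $\ID(v)=\min\{\ID(u) \;:\; u\in N^\ell(v)\}$. By ID-uniqueness any two landmarks are at distance at least $\ell+1$. Each non-landmark $v$ outputs the port toward the (unique) minimum-$\ID$ vertex of $N^\ell(v)$. Each landmark $v$ simulates the outputs of its two neighbors (which it can since it sees $N^{\ell+1}(v)$) and picks its own port so as to avoid mutual pointing whenever at least one neighbor would not point back; if both neighbors would point inward, it picks one arbitrarily (say, toward the smaller-$\ID$ neighbor) and accepts a single mutual-pointing edge on that side.

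Correctness hinges on two claims. \emph{Claim A:} between adjacent non-landmarks $u,v$ no mutual pointing occurs---if both $u\to v$ and $v\to u$ held, the unique minima $m_u\in N^\ell(u)$ and $m_v\in N^\ell(v)$ would both lie in the overlap $N^\ell(u)\cap N^\ell(v)$ and by ID-uniqueness must coincide, yet mutual pointing forces $m_u$ on $v$'s side of $\{u,v\}$ and $m_v$ on $u$'s side, a contradiction. \emph{Claim B:} at every landmark the rule produces at most one mutual-pointing incident edge, by construction. Combining, every mutual-pointing edge lies at a landmark and consecutive landmarks are at distance $\ge \ell+1$, so the maximal mutual-pointing-free subpath containing any given vertex has at least $\ell$ consecutive vertices, witnessing the $\ell$-orientation condition. \emph{The main obstacle} will be the ``both neighbors point inward'' subcase at a landmark: one must verify that the forced mutual-pointing edge on one side does not spoil the length of the run on the other side, which follows because the next landmark on that other side is at distance $\ge \ell+1$ and Claim~A precludes any other mutual pointings in between.
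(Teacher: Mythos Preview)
Your proposal has a real gap: you conflate ``no mutual-pointing edge along $P$'' with ``all vertices of $P$ are oriented in the same direction.'' On a full cycle these coincide, but on a subpath they do not --- two adjacent vertices can point \emph{away} from each other (a tail-to-tail, diverging pair), and such a pair still terminates a same-direction run. Your Claim~A correctly rules out head-to-head pairs among non-landmarks, but it says nothing about tail-to-tail pairs, and your rule manufactures them: between consecutive landmarks $L_1,L_2$, a non-landmark whose $\ell$-window sees only $L_1$ points toward $L_1$, while one whose window already reaches $L_2$ (with $\ID(L_2)<\ID(L_1)$) points toward $L_2$, so the direction flips at some interior edge.

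Concretely, take $\ell=3$, put $L_1$ at position $0$ and $L_2$ at position $5$ with $\ID(L_2)<\ID(L_1)$, give positions $1,\ldots,4$ large IDs, and arrange that both neighbors of $L_1$ point inward with the left neighbor having the smaller ID (so your tiebreak sends $L_1$ left). Then position~$1$ has window $\{-2,\ldots,4\}$, misses $L_2$, and points left toward $L_1$; position~$2$ has window $\{-1,\ldots,5\}$, sees $L_2$, and points right. The maximal same-direction run through position~$1$ is $\{0,1\}$, of length $2<\ell$, so the output is not an $\ell$-orientation. The paper sidesteps this with a different mechanism: it first produces a $2$-orientation by locally flipping one member of each head-to-head pair, and then recursively doubles run lengths by treating maximal same-direction runs as units and merging adjacent short runs, reaching a $k$-orientation in $O(k)$ rounds.
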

\begin{proof}
This is a known result. See~\cite[Fact 5.2]{HanckowiakKP01}
or~\cite[Lemma 3.6, Case B]{FischerG17} for a sketch of the proof.
For the sake of completeness, we present a full proof.
We first show how to compute a $2$-orientation of a cycle $G$ in $O(1)$ rounds,
and then we extend it to any constant $\ell$.

\paragraph{Computing a $2$-orientation.}
We assume $|V(G)| \geq 3$.
A $\DetLOCAL$ $O(1)$-round algorithm to compute a $2$-orientation is described as follows.
First, each vertex $v \in V(G)$ computes an arbitrary orientation.
With respect to this orientation of $G$, define sets $V_1,V_2,V_3$ as follows.
\begin{itemize}
\item $v \in V_1$ if and only if there exists $u\in N(v)$ such that $u$ and $v$ are oriented to the same direction.
\item $v \in V_2$ if and only if there exists $u\in N(v) \setminus V_1$ such that $u$ and $v$ are oriented toward each other.
\item $V_3 = V(G) \setminus (V_1 \cup V_2)$. Observe that for each $v \in V_3$, there exists $u \in N(v) \cap V_1$.
\end{itemize}
A 2-orientation is obtained by re-orienting the vertices in $V_2$ and $V_3$. The vertices in $V_2$ are partitioned into unordered pairs such that $u, v \in V_2$ are paired-up if and only if (i) $\{u,v\} \in E(G)$ and (ii) $u$ and $v$ are oriented toward each other. For each pair $\{u,v\}$, reverse the orientation of any one of $\{u,v\}$.  For each vertex $v \in V_3$, let $u$ be any neighbor of $v$ such that $u \in V_1$, and re-orient $v$ to the orientation of $u$.

\paragraph{Computing an $\ell$-orientation.}
For each positive integer $k$, we recursively define a $\DetLOCAL$ $O(1)$-round algorithm $\mathcal{A}_k$ which computes a $k$-orientation.
In what follows, we assume $k \geq 3$ and $|V(G)| \geq k$.

The algorithm  $\mathcal{A}_k$ is described as follows.
First, execute $\mathcal{A}_{\lceil k/2 \rceil}$ to obtain a $\lceil k/2 \rceil$-orientation of $G$.
With respect to this orientation of $G$, define the following terminologies.
Let $\Pset$ be the set of all maximal-size connected subgraphs  in $G$ such that all constituent vertices are oriented to the same direction.
Notice that if $\Pset$ contains  a cycle, then $\Pset = \{G\}$.
Otherwise $\Pset$ contains only paths.
Define $\Pset_1$ as the subset of $\Pset$ such that $P \in \Pset_1$ if and only if the number of vertices in $P$ is at least $k$.
Define $\Pset_2$ as the subset of $\Pset \setminus \Pset_1$ such that $P \in \Pset_2$ if and only if there exists another path $P' \in \Pset \setminus \Pset_1$ meeting the following condition. There exist an endpoint $u$ of $P$ and an endpoint $v$ of $P'$ such that $\{u,v\} \in E(G)$, and $u$ and $v$ are oriented toward each other.
Define $\Pset_3 = \Pset \setminus (\Pset_1 \cup \Pset_2)$.
Observe that each $P \in \Pset_3$ is adjacent to a path in $\Pset_1$.

The paths in $\Pset_2$ are partitioned into unordered pairs such that $P, P' \in \Pset_2$ are paired-up if and only if there exist an endpoint $u$ of $P$ and an endpoint $v$ of $P'$ such that $\{u,v\} \in E(G)$, and $u$ and $v$ are oriented toward each other. For each pair $\{P,P'\}$, reverse the orientation of all the vertices in any one of $\{P,P'\}$.   For each path $P \in \Pset_3$, let $P' \in  \Pset_1$ be any path adjacent to $P$, and re-orient $P$ to the orientation of $P'$.

The round complexity of $\mathcal{A}_k$ satisfies the recurrence $T(k) = T(\ceil{k/2}) + O(k)$, which is $O(k)$.
\end{proof}

Notice that if the $(\sqrt{n}\times \sqrt{n})$-torus is {\em oriented} (i.e., the input port-numberings all agree with a fixed N/S/E/W orientation~\cite{Brandt+17}), then there is no non-trivial LCL problem solvable in $O(1)$ time.






\end{document}